\keywords{Probabilistic graphs, probabilistic databases, probabilistic query evaluation, infinite unions of conjunctive queries, data complexity dichotomy, \#P-hardness}
\begin{document}

\title[Evaluating Homomorphism-Closed Queries on Probabilistic Graphs]{The Dichotomy of Evaluating Homomorphism-Closed Queries on Probabilistic Graphs}
\titlecomment{{\lsuper*}This paper has appeared as a conference paper at ICDT 2020 with the title: ``A Dichotomy for Homomorphism-Closed Queries on Probabilistic Graphs''~\cite{AC-ICDT20}.}

\author[A.~Amarilli]{Antoine Amarilli\rsuper{a}}
\address{LTCI, Télécom Paris, Institut Polytechnique de Paris, France}
\email{antoine.amarilli@telecom-paris.fr}

\author[{\.I}.{\.I}.~Ceylan]{{\.I}sma{\.i}l {\.I}lkan Ceylan\rsuper{b}}
\address{Department of Computer Science, University of Oxford, United Kingdom}
\email{ismail.ceylan@cs.ox.ac.uk}


\begin{abstract}
  \noindent
  We study the problem of \emph{query evaluation on probabilistic graphs}, namely, tuple-independent probabilistic databases over signatures of arity two.
  We focus on the class of queries closed under homomorphisms, or, equivalently, the \emph{infinite} unions of conjunctive queries.
  Our main result states that the probabilistic query evaluation problem is \mbox{\#P-hard} for all \emph{unbounded} queries from this class.
  As \emph{bounded} queries from this class are equivalent to a union of conjunctive queries, they are already classified by the dichotomy of Dalvi and Suciu~(2012).
  Hence, our result and theirs imply a complete data complexity dichotomy, between polynomial time and \#P-hardness, on evaluating homomorphism-closed queries over probabilistic graphs.
  This dichotomy covers in particular all fragments of infinite unions of conjunctive queries over arity-two signatures, such as \emph{negation-free (disjunctive) Datalog}, \emph{regular path queries}, and a large class of \emph{ontology-mediated queries}.
  The  dichotomy also applies to a restricted case of probabilistic query evaluation called \emph{generalized model counting}, where fact probabilities must be 0, 0.5, or 1.
  We show the main result by reducing from the problem of counting the valuations of positive partitioned 2-DNF formulae, or from the source-to-target reliability problem in an undirected graph, depending on properties of minimal models for the query.
\end{abstract}

\maketitle

\section{Introduction}%
\label{sec:intro}

The management of \emph{uncertain and probabilistic data} is an important problem in many applications, e.g., automated knowledge base construction~\cite{GoogleVault,Yago2,NELL}, data integration from diverse sources, predictive and stochastic modeling, applications based on (error-prone) sensor readings, etc. To represent probabilistic data, the most basic model is that of tuple-independent \emph{probabilistic databases (\TIDs)}~\cite{Suciu-PDBs}.
In \TIDs, every fact of the database is viewed as an independent random variable, and is either kept or discarded according to some probability. Hence, a \TID induces a probability distribution over all \emph{possible worlds}, that is, all possible subsets of the database. The central inference task for \TIDs is then \emph{probabilistic query evaluation} (\PQE): given a query $Q$, compute the probability of $Q$ relative to a \TID \pdb, i.e., the total probability of the possible worlds where~$Q$ is satisfied. We write $\PQE(Q)$ to denote the problem of \PQE\ relative to a \emph{fixed} query $Q$.

Dalvi and Suciu~\cite{dalvi2012dichotomy} obtained a dichotomy for evaluating \emph{unions of conjunctive queries (UCQs)} on tuple-independent probabilistic databases.
Their dichotomy is measured in \emph{data complexity}, i.e., as a function of the input \TID and with the query being fixed.
More specifically, they have shown that, for a given UCQ $Q$, $\PQE(Q)$ is either in polynomial time, or it is \sharpP-hard.
In the terminology of Dalvi and Suciu, a UCQ $Q$ is called \emph{safe} if $\PQE(Q)$ can be computed in polynomial time, and it is called \emph{unsafe} otherwise.
This dichotomy result laid the foundation for many other studies on the complexity of probabilistic query evaluation~\cite{Amarilli-PODS16,CDV-AIJ,FiOl16,JL12,OlHu08,OlHu09,ReSu09}.

Despite this extensive research on \TIDs, there is little known about probabilistic query evaluation for monotone query languages beyond UCQs. In particular, only few results are known for languages featuring \emph{recursion}, which is an essential ingredient in many applications. For instance, it is unknown whether \PQE\ admits a dichotomy for Datalog queries, for regular path queries, or for ontology-mediated queries~\cite{Ceylan-17}. The main motivation of this paper is thus to obtain a fine-grained classification for the complexity of probabilistic query evaluation relative to these query languages.
Our focus is on a large class of queries beyond first-order: we study the queries that are \emph{closed under homomorphisms}. We denote the class of such queries by \ucqinf as they are equivalent to \emph{infinite} unions of conjunctive queries.
We distinguish between \emph{bounded} \ucqinf queries, which are logically equivalent to a UCQ, and \emph{unbounded} \ucqinf queries, which cannot be expressed as a UCQ\@.
Notably, \ucqinf captures (negation-free) disjunctive Datalog, regular path queries, and a large class of ontology-mediated queries.

Our focus in this work is on \emph{probabilistic graphs}, i.e., probabilistic databases where all relations have at most \emph{arity two}. Data models based on binary relations are quite common in knowledge representation. Knowledge graphs such as NELL~\cite{NELL}, Yago~\cite{Yago2} and Knowledge Vault~\cite{GoogleVault} are solely based on binary relations, and are widely used for tasks such as information and relation extraction~\cite{Mintz09}, rule mining~\cite{Verbeke15}, and knowledge graph completion~\cite{Bordes}.
To encode more sophisticated domain knowledge \emph{ontologies} are employed. Ontologies are prominently formulated in description logics~\cite{BCM07}, which is a family of languages, defined over unary relations~(i.e., concepts) and binary relations~(i.e., roles).
In these and similar contexts, we want to evaluate \ucqinf queries on (graph-structured) data, while taking into account the uncertainty of the data. Therefore, we study the complexity of probabilistic query evaluation on probabilistic graphs, and ask whether evaluating \ucqinf queries admits a data complexity dichotomy.

The main result of this paper is that $\PQE(Q)$ is \#P-hard for \emph{any unbounded \ucqinf query} on probabilistic graphs. Our result thus implies a dichotomy on \PQE\  for \ucqinf over such graphs: as \emph{bounded} \ucqinf queries are equivalent to UCQs, they are already classified by Dalvi and Suciu, and we show that all other \ucqinf queries are unsafe, i.e., the \PQE\ problem is \sharpP-hard for them.
Of course, it is not surprising that \emph{some} unbounded queries in \ucqinf are unsafe for similar reasons as unsafe UCQs, but the challenge is to show hardness for \emph{every} unbounded \ucqinf query:
we do this by leveraging model-theoretic properties of this query class.

The proof consists of two main parts. First, we study \ucqinf queries with a model featuring a so-called \emph{non-iterable edge}. For all such queries, we show \#P-hardness by reducing from the problem of counting the valuations of positive partitioned 2-DNF formulae~(\pptwodnf). Second, we focus on all other unbounded queries in \ucqinf, i.e., \ucqinf queries with \emph{no} model featuring such a \emph{non-iterable edge}. For these queries, we give a reduction from the source-to-target connectivity problem in an undirected graph~(\stcon). This second reduction is considerably harder and relies on a careful study of minimal models.

This paper is organized as follows. We start by discussing closely related work for probabilistic query evaluation with a particular focus on existing classification results in Section~\ref{sec:rw}. We introduce preliminaries in Section~\ref{sec:prelim}, and formally state our result in Section~\ref{sec:result}.
We prove the result in Sections~\ref{sec:pp2dnf}--\ref{sec:ustcon}. We first deal in Section~\ref{sec:pp2dnf} with the case of queries having a model with a non-iterable edge (reducing from \pptwodnf), then argue in Section~\ref{sec:findhard} that unbounded queries must have a model with a minimal tight edge, before explaining in Section~\ref{sec:ustcon} how to use this (when the edge is iterable) to reduce from \stcon. We then present two generalizations of our main result in Section~\ref{sec:generalizations}. We conclude in Section~\ref{sec:conc}.

\section{Related Work}%
\label{sec:rw}
Research on probabilistic databases is a well-established field; see e.g.~\cite{Suciu-PDBs}. The first dichotomy for queries on such databases was shown by Dalvi and Suciu~\cite{DaSu07}: a self-join-free conjunctive query is safe if it is \emph{hierarchical}, and \sharpP-hard otherwise. They then extended this result to a dichotomy for all UCQs~\cite{dalvi2012dichotomy}.
Beyond UCQs, partial dichotomy results are known for some queries with negation~\cite{FiOl16}, with disequality~($\neq$) joins  in the queries~\cite{OlHu08}, or with inequality~($<$) joins~\cite{OlHu09}. Some results are known for extended models, e.g., the dichotomy of Dalvi and Suciu has been lifted from \TIDs to open-world probabilistic databases~\cite{CDV-AIJ}. However, we are not aware of dichotomies in the probabilistic database literature that apply to Boolean queries beyond first-order logic, or to queries with fixpoints.
Query evaluation on probabilistic databases has also been studied in restricted contexts, e.g., when probabilistic tuples are only allowed to have probability $0.5$. This is for instance the focus of the recent paper of Kenig and Suciu~\cite{KenigSuciu20}, which we discuss in Section~\ref{sec:generalizations}.

Query evaluation on probabilistic graphs has also been studied in the context of \emph{ontology-mediated queries} (OMQs)~\cite{JL12,BCL-AAAI17,BCL19}. An OMQ is a composite query that typically consists of a UCQ and an \emph{ontology}. The only classification result on \PQE\ for OMQs beyond first-order-rewritable languages is given for the description logic \ELI~\cite{JL12}.
This result applies to a class of queries that go beyond first-order logic.
Our work generalizes this result (Theorem~6 of~\cite{JL12}) by showing hardness for any unbounded \ucqinf, not just the ones expressible as OMQs based on \ELI\@. Part of our techniques (Section~\ref{sec:pp2dnf}) are related to theirs, but the bulk of our proof (Sections~\ref{sec:findhard} and~\ref{sec:ustcon}) uses new techniques, the need for which had in fact been overlooked in~\cite{jung2014reasoning,JL12}.
Specifically, we identified a gap in the proofs of Theorem~6 of~\cite{JL12} and Theorem~5.31 of~\cite{jung2014reasoning} concerning a subtle issue of ``back-and-forth'' matches related to the use of inverse roles of $\ELI$. We have communicated this with the authors of~\cite{jung2014reasoning,JL12}, which they kindly acknowledged~\cite{JL20}.
Our proof thus completes the proof of Theorem~6 in~\cite{JL12}, and generalizes it to all unbounded \ucqinf.

\section{Preliminaries}%
\label{sec:prelim}
In this section, we introduce all technical preliminaries relevant to our study. In particular, we introduce the query languages studied in this paper, and the tuple-independent probabilistic database model. We also discuss briefly the complexity classes relevant to our study, as well as two canonical $\sharpP$-hard problems which are used later in the reductions.

\subsection*{Vocabulary.} We consider a \emph{relational signature} $\sigma$ which is a set of \emph{predicates}. In this work, the signature is required to be \emph{arity-two}, i.e., it consists \emph{only} of predicates of arity two. Our results can easily be extended to signatures with relations having predicates of arity one and two, as we show in Section~\ref{sec:generalizations}.

A \emph{$\sigma$-fact} is an expression of the form $F = R(a, b)$ where $R$ is a predicate and $a, b$ are constants.
By a slight abuse of terminology, we call $F$ a \emph{unary} fact if~$a = b$, and a \emph{non-unary fact}
otherwise. A \emph{$\sigma$-atom} is defined in the same way with variables instead of constants. For brevity, we will often talk about a  \emph{fact} or an \emph{atom} when~$\sigma$ is clear from context. We also speak of \emph{$R$-facts} or \emph{$R$-atoms} to specifically refer to facts or atoms that use the predicate $R$.

It will be convenient to write $\sigma^\leftrightarrow$ the arity-two signature consisting of the relations of~$\sigma$ and of the relations $R^-$ for~$R \in \sigma$, with a semantics that we define below.

\subsection*{Database instances.}
A \emph{database instance over~$\sigma$}, or a \emph{$\sigma$-instance}, is a set of facts over~$\sigma$.
All instances considered in this paper are finite. The \emph{domain} of a fact~$F$, denoted $\dom(F)$, is the set of constants that appear in~$F$, and the \emph{domain} of an instance $I$, denoted $\dom(I)$, is
the set of constants that appear in~$I$, i.e., the union of the domains of its facts.

Every $\sigma$-instance $I$  can be seen as a $\sigma^\leftrightarrow$-instance consisting of all the
$\sigma$-facts in~$I$, and all the facts $R^-(b, a)$ for each fact $R(a, b)$ of~$I$. Thus, for a $\sigma$-instance~$I$, and for an element $a \in \dom(I)$, we define the set of all $\sigma^\leftrightarrow$-facts of the form $F = R(a, b)$ in~$I$ as:
\begin{align*}
  \{ & S(a,a)\phantom{^-} \mid S\in\sigma, S(a,a) \in I\} \\
\cup~\{ & S(a,b)\phantom{^-} \mid S\in\sigma, b \in \dom(I), S(a,b) \in I\} \\
\cup~\{ & S^-(a,b) \mid S\in\sigma, b \in \dom(I), S(b,a) \in I\}.
\end{align*}
If we say that we create a fact $R(a, b)$ for $R \in \sigma^\leftrightarrow$, we
mean that we create $S(a, b)$ if $R = S$ for some $S \in \sigma$, and $S(b, a)$
if $R = S^-$ for some $S \in \sigma$.

The \emph{Gaifman graph} of an instance $I$ is the undirected graph having $\dom(I)$ as vertex set, and having an edge $\{u, v\}$ between any two $u \neq v$ in~$\dom(I)$ that co-occur in some fact of~$I$. An instance is \emph{connected} if its Gaifman graph is connected. We call $\{u, v\}$ an (undirected) \emph{edge} of~$I$, and the facts of~$I$ that it \emph{covers} are the $\sigma$-facts of~$I$ whose domain is a subset of~$\{u, v\}$.
Note that a fact of the form $R(u, u)$ is covered by all edges involving~$u$. Slightly abusing notation, we say that an \emph{ordered} pair $e = (u, v)$ is a (directed) \emph{edge} of~$I$ if $\{u, v\}$ is an edge of the Gaifman graph, and say that it \emph{covers}
the following $\sigma^\leftrightarrow$-facts of~$I$:
\begin{align*}
&\{ S(u,u)  \mid S \in \sigma, S(u,u) \in I\}\\
  \cup~&\{ S(v,v)  \mid S \in \sigma, S(v,v) \in I\} \\
  \cup~&\{ S(u,v)  \mid  S \in \sigma, S(u,v) \in I\} \\
  \cup~&\{ S^-(u,v) \mid  S \in \sigma, S(v,u) \in I\}.
\end{align*}
Note that the directed edge $(v, u)$ covers the same facts as~$(u, v)$, except that in non-unary facts the relations $S\in\sigma$ and the reverse relations $S^-$ are swapped.

In the course of our proofs, we will often modify instances in a specific way, which we call \emph{copying} an edge.
Let $I$ be an instance, let $(u, v)$ be a directed edge of~$I$, and let $u', v'$ be any elements of~$\dom(I)$. If we say that we \emph{copy} the edge~$e$ on~$(u', v')$, it means that we modify~$I$ to add a copy of each fact covered by the edge~$e$, but using $u'$ and $v'$ instead of~$u$ and~$v$.
Specifically, we create $S(u', v')$ for all $\sigma$-facts of the form $S(u, v)$ in~$I$, we create $S(v', u')$ for all $\sigma$-facts of the form $S(v, u)$ in~$I$, and we create $S(u',u')$ and $S(v', v')$ for all $\sigma$-facts respectively of the form $S(u, u)$ and $S(v,v)$ in~$I$. Of course, if some of these facts already exist, they are not created again. Note that $(u', v')$ is an edge of~$I$ after this process.

An instance $I$ is a \emph{subinstance} of another instance $I'$ if $I \subseteq I'$, and $I$ is a \emph{proper subinstance} of~$I'$ if $I \subsetneq I'$. Given a set $S \subseteq \dom(I)$ of domain elements, the subinstance of~$I$ \emph{induced} by~$S$ is the instance formed of all the facts $F \in I$ such that $\dom(F) \subseteq S$.

A \emph{homomorphism} from an instance $I$ to an instance~$I'$ is a function $h$ from $\dom(I)$ to~$\dom(I')$ such that, for every fact $R(a, b)$ of~$I$, the fact $R(h(a), h(b))$ is a fact of~$I'$. In particular, whenever $I \subseteq I'$ then $I$ has a homomorphism to~$I'$. An \emph{isomorphism} is a bijective homomorphism whose inverse is also a homomorphism.

\subsection*{Query languages.}
Throughout this work, we focus on Boolean queries. A (Boolean) \emph{query} over a signature~$\sigma$ is
a function from $\sigma$-instances to Booleans. An instance $I$ \emph{satisfies} a query~$Q$ (or $Q$ \emph{holds} on~$I$, or $I$ is a \emph{model} of~$Q$), written $I \models Q$, if $Q$ returns true when applied to~$I$; otherwise, $I$ \emph{violates} $Q$. We say that two queries $Q_1$ and $Q_2$ are \emph{equivalent} if for any instance~$I$, we have $I \models Q_1$ iff $I \models Q_2$.
In this work, we study the class \ucqinf of queries that are \emph{closed under homomorphisms} (also called \emph{homomorphism-closed}), i.e., if $I$ satisfies the query and $I$ has a homomorphism to~$I'$ then~$I'$ also satisfies the query. Note that queries closed under homomorphisms are in particular \emph{monotone}, i.e., if $I$ satisfies the query and $I \subseteq I'$, then $I'$ also satisfies the query.

One well-known subclass of \ucqinf is \emph{bounded} \ucqinf: every bounded query in \ucqinf is logically equivalent to a \emph{union of conjunctive queries} (UCQ), without negation or inequalities. Recall that a \emph{conjunctive query} (CQ) is an existentially quantified conjunction of atoms, and a UCQ is a disjunction of CQs. For brevity, we omit existential quantification when writing UCQs, and abbreviate conjunction with a comma. The other \ucqinf queries are called \emph{unbounded}, and they can be seen as an infinite disjunction of CQs, with each disjunct corresponding to a model of the query.

A natural query language captured by \ucqinf is \emph{Datalog}, again without negation or inequalities. A Datalog program defines a signature of \emph{intensional predicates}, including a 0-ary predicate $\mathrm{Goal}()$, and consists of a set of \emph{rules} which explain how intensional facts can be \emph{derived} from other intensional facts and from the facts of the instance (called \emph{extensional}). The interpretation of the intensional predicates is defined by taking the (unique) least fixpoint of applying the rules, and the query holds if and only if the $\mathrm{Goal}()$ predicate can be derived. For formal definitions of this semantics, we refer the reader to the standard literature~\cite{abiteboul1995foundations}.
Datalog can in particular be used to express \emph{regular path queries} (RPQs)
and \emph{conjunctions of regular path queries with inverses}
(C2RPQs)~\cite{barcelo2013querying}.

As Datalog queries are homomorphism-closed, we can see each Datalog program as a \ucqinf, with the disjuncts intuitively corresponding to \emph{derivation trees} for the program.

\begin{exa}%
  \label{exa:datalog}
Consider the following Datalog program with one monadic intensional predicate $U$ over extensional signature $R, S, T$:
\begin{align*}
 R(x,y) &\rightarrow U(x), \\
 U(x), S(x, y) &\rightarrow U(y), \\
 U(x), T(x,y) &\rightarrow \mathrm{Goal()}.
\end{align*}
This program tests if the instance contains a path of
  facts $R(a_0, a_1), S(a_1, a_2), \ldots, S(a_{n-1}, a_n), \allowbreak T(a_n, a_{n+1})$ for
  some $n>0$, intuitively corresponding to the regular path query $R S^* T$.
  This is an unbounded \ucqinf.
\end{exa}

However, note that the class \ucqinf is a larger class than Datalog, because there are homomorphism-closed queries that are not expressible in Datalog~\cite{dawar2008datalog}.

\emph{Ontology-mediated queries}, or OMQs~\cite{BCLW14}, are another subclass of \ucqinf. An OMQ is a pair $(Q,\Tmc)$, where $Q$ is (typically) a UCQ, and $\Tmc$ is an ontology. A database instance $I$ \emph{satisfies} an OMQ $(Q,\Tmc)$ if the instance $I$ and the logical theory $\Tmc$ entail the query~$Q$ in the standard sense~-- see, e.g.,~\cite{BCLW14}, for details. There are ontological languages for OMQs based on \emph{description logics}~\cite{BCM07} and on \emph{existential rules}, also known as \emph{tuple-generating dependencies~(TGDs)}~\cite{CaGK-JAIR13,CaGL-JWS12}.
It is well known that every OMQ $(Q,\Tmc) \in (\text{UCQ},\text{TGD})$ is closed under homomorphisms. Thus, the dichotomy result of the paper applies to every OMQ from $(\text{UCQ},\text{TGD})$ over unary and binary predicates, which, in turn, covers several OMQ languages based on description logics.
There are also many OMQs that can be equivalently expressed as a query in Datalog or in disjunctive Datalog over an arity-two signature~\cite{BCLW14,EOS+-AAAI12,GoSc-KR12}, thus falling in the class \ucqinf. In particular, this is the case of any OMQ involving negation-free $\ALCHI$ (Theorem~6 of~\cite{BCLW14}), and of fragments of~$\ALCHI$, e.g., $\ELHI$, and $\ELI$ as in~\cite{JL12}.

\subsection*{Probabilistic query evaluation.}
We study the problem of probabilistic query evaluation over tuple-independent probabilistic databases. A \emph{tuple-independent probabilistic database~(TID)} over a signature~$\sigma$ is a pair $\pdb=(I, \pi)$ of a $\sigma$-instance $I$, and of a function~$\pi$ that maps every fact $F$ to a probability $\pi(F)$, given as a rational number in $[0, 1]$. Formally, a TID $\pdb=(I, \pi)$ defines the following probability distribution
over all \emph{possible worlds} $I' \subseteq I$:
\begin{align*}
\centering
\pi(I') \colonequals \left(\prod_{F \in I'} \pi(F)\right)\times\left(\prod_{F \in I' \setminus I} (1 - \pi(F))\right).
\end{align*}
Then, given a TID $\pdb=(I, \pi)$, the probability of a query~$Q$ relative to $\pdb$, denoted $\Pr_\pdb(Q)$, is given by the sum of the probabilities of the possible worlds that satisfy the query:
\[
\Pr_\pdb(Q) \coloneqq \sum_{{I' \subseteq I,I' \models Q}} \pi(I').
\]
The \emph{probabilistic query evaluation problem} (\PQE) for a query~$Q$, written $\PQE(Q)$, is then the task of computing $\Pr_\pdb(Q)$ given a TID $\pdb$ as input.

\subsection*{Complexity background.}
\FPTime is the class of functions $f: {\{0,1\}}^* \mapsto {\{0,1\}}^*$ computable by
a polynomial-time deterministic Turing machine. The class \sharpP, introduced by
Valiant in~\cite{Valiant79}, contains the
computation problems that can be expressed as the number of accepting paths of a nondeterministic polynomial-time Turing machine. Equivalently, a function ${f: {\{0,1\}}^* \mapsto \Nbb}$ is in \sharpP if there
exists a polynomial $p: \Nbb \mapsto \Nbb$  and a polynomial-time deterministic Turing machine $M$ such that for every $x \in {\{0,1\}}^*$, it holds that:
\[
  f(x)=\card{\left\{ y \in {\{0,1\}}^{p(\card{x})} \mid M~\emph{answers 1 on the input $(x, y)$}~\right\}}.
\]

For a query~$Q$, we study the \emph{data complexity} of $\PQE(Q)$, which is measured as a function of the input instance~$I$, i.e., the signature and~$Q$ are fixed.
For a large class of queries, in particular for any UCQ~$Q$, the problem $\PQE(Q)$ is in the complexity class $\mathrm{FP}^{\#P}$: we can use a nondeterministic Turing machine to guess a possible world according to the probability distribution of the TID (i.e., each possible world is obtained in a number of runs proportional to its probability), and then check in polynomial time data complexity if~$Q$ holds, with polynomial-time postprocessing to renormalize the number of runs to a probability.
Our goal in this work is to show that the problem is also \#P-hard.

To show \#P-hardness, we use \emph{polynomial-time Turing reductions}~\cite{Cook71}. A function $f$ is \sharpP-complete under polynomial time Turing reductions if it is in $\sharpP$ and every $g \in \sharpP$ is in $\FPTime^f$. Polynomial-time Turing reductions are the most common reductions for the class \sharpP and they are the reductions used to show \sharpP-hardness in the dichotomy of Dalvi and Suciu~\cite{dalvi2012dichotomy}, so we use them throughout this work.

\subsection*{Problems.}
We will show hardness by reducing from two well-known \sharpP-hard problems. For some queries, we reduce from \pptwodnf~\cite{provan1983complexity}, which is a standard tool to show hardness of unsafe UCQs. The original problem uses Boolean formulae; here, we give an equivalent rephrasing in terms of bipartite graphs.
\begin{defi}%
	\label{def:pp2dnf}
	Given a bipartite graph $H = (A, B, C)$ with edges $C \subseteq A \times B$, a \emph{possible world} of~$H$ is a pair $\omega = (A', B')$ with $A' \subseteq A$ and $B' \subseteq B$. We call the possible world \emph{good} if it is not an independent set, i.e., if one vertex of~$A'$ and one vertex of~$B'$ are adjacent in~$C$; and call it \emph{bad} otherwise.
	The \emph{positive partitioned 2DNF problem ({\normalfont\pptwodnf})} is the following: given a bipartite graph, compute how many of its possible worlds are good.
\end{defi}
It will be technically convenient to assume that~$H$ is connected. This is clearly without loss of generality, as otherwise the number of good possible worlds is simply obtained as the product of the number of good possible worlds of each connected component of~$H$.

For other queries, we reduce from a different problem, known as the \emph{undirected st-connectivity problem} (\stcon)~\cite{provan1983complexity}:
\begin{defi}%
  \label{def:stcon}
  An \emph{st-graph} is an undirected graph $G = (W,
  C)$ with two distinguished vertices $s \in W$ and $t \in W$.
  A \emph{possible world} of~$G$ is a subgraph $\omega = (W, C')$ with~$C' \subseteq C$.
  We call the possible world \emph{good} if $C'$ contains a path connecting~$s$
  and~$t$, and \emph{bad} otherwise.
  The \emph{source-to-target undirected reachability problem ({\normalfont\stcon})} is the
  following: given an st-graph, compute how many of its possible worlds are
  good.
\end{defi}

\section{Result Statement}%
\label{sec:result}

The goal of this paper is to extend the dichotomy of Dalvi and Suciu~\cite{dalvi2012dichotomy} on \PQE\ for UCQs. Their result states:
\begin{thmC}[\cite{dalvi2012dichotomy}]%
  \label{thm:dalvisuciu}
  Let $Q$ be a UCQ\@. Then, $\PQE(Q)$ is either in {\normalfont\FPTime} or it is {\normalfont\sharpP}-hard.
\end{thmC}

Following Dalvi and Suciu's terminology, we call a UCQ \emph{safe} if $\PQE(Q)$ is in \FPTime, and \emph{unsafe} otherwise. This dichotomy characterizes the complexity of PQE for UCQs, but does not apply to other homomorphism-closed queries beyond UCQs. Our contribution, when restricting to the arity-two setting, is to generalize this dichotomy to \ucqinf, i.e., to \emph{any} query closed under homomorphisms. Specifically, we show that all such queries are intractable unless they are equivalent to a safe UCQ\@.
\begin{thm}[Dichotomy]%
  \label{thm:main}
  Let $Q$ be a \ucqinf over an arity-two signature. Then, either~$Q$ is equivalent to a safe UCQ and $\PQE(Q)$ is in {\normalfont\FPTime}, or it is not and $\PQE(Q)$ is {\normalfont\sharpP}-hard.
\end{thm}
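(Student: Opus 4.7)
The plan is to reduce the statement to showing $\sharpP$-hardness of $\PQE(Q)$ for every \emph{unbounded} \ucqinf query $Q$. Indeed, any \emph{bounded} query in \ucqinf is, by definition, equivalent to some UCQ, so Theorem~\ref{thm:dalvisuciu} of Dalvi and Suciu directly classifies it as either equivalent to a safe UCQ (and then in \FPTime) or to an unsafe UCQ (and then $\sharpP$-hard). Consequently, once hardness is established for all unbounded $Q \in \ucqinf$, the dichotomy follows: a homomorphism-closed $Q$ is tractable precisely when it is equivalent to a safe UCQ, and otherwise it is either an unsafe UCQ or unbounded, hence $\sharpP$-hard.

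To prove hardness for an unbounded $Q$, I would split on whether some model of $Q$ contains a \emph{non-iterable} edge. In the first case, I would reduce from \pptwodnf: given a bipartite graph $H = (A, B, C)$, I would build a probabilistic graph whose deterministic part realises a witness model of $Q$ around the non-iterable edge, and whose probabilistic part places, for every $(a,b) \in C$, a gadget that completes a match of $Q$ only when both the $a$-side and the $b$-side facts survive. Non-iterability is exactly what prevents $Q$ from being witnessed by repeating the gadget alone, so that a possible world of the TID satisfies $Q$ iff the associated possible world of $H$ is good. In the second case, no model of $Q$ has a non-iterable edge, but unboundedness still forces the existence, in some model of $Q$, of a \emph{minimal tight} edge, which I would then use, when iterable, to reduce from \stcon: replace each edge of the input st-graph by a copy of the tight-edge gadget and attach the rest of the model at the endpoints $s$ and $t$, so that a possible world of the resulting TID satisfies $Q$ iff that world of the st-graph contains a path connecting $s$ to $t$.

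The main obstacle, and what I expect to be the bulk of the work, is the second reduction. First, showing that every unbounded \ucqinf whose models have no non-iterable edges still admits an iterable minimal tight edge somewhere is a nontrivial model-theoretic claim, and is precisely what Section~\ref{sec:findhard} is designed to supply. Second, even once such an edge is in hand, the reduction from \stcon is delicate: one has to argue that the tight-edge gadgets can be chained along arbitrary undirected paths in the st-graph without introducing spurious homomorphisms from models of $Q$ that short-circuit $s$--$t$ paths, and without destroying the intended matches either. This is exactly the "back-and-forth" subtlety that the authors flag as having been overlooked in earlier $\ELI$-based hardness proofs; handling it carefully, using only the global structural hypothesis that every edge of every model of $Q$ is iterable, will be the crux of Section~\ref{sec:ustcon}.
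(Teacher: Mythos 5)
Your proposal is correct and matches the paper's own strategy: reduce to $\sharpP$-hardness for unbounded $\ucqinf$ (Theorem~\ref{thm:main2}), split on whether $Q$ has a model with a non-iterable non-leaf edge, and handle the first case via \pptwodnf (Section~\ref{sec:pp2dnf}) and the second via minimal tight patterns (Section~\ref{sec:findhard}) feeding an \stcon reduction (Section~\ref{sec:ustcon}). You also correctly identify the "back-and-forth" issue as the crux of the \stcon direction, which is exactly where the paper's notions of fine dissociation and minimality (Lemma~\ref{lem:magic}) do the work.
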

Our result relies on the dichotomy of Dalvi and Suciu for \ucqinf queries that are equivalent to UCQs. The key point is then to show intractability for \emph{unbounded} \ucqinf queries. Hence, our technical contribution is to show:
\begin{thm}%
  \label{thm:main2}
  Let $Q$ be an unbounded \ucqinf query over an arity-two signature. Then, $\PQE(Q)$ is {\normalfont\sharpP}-hard.
\end{thm}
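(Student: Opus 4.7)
The plan is to perform a case split on the models of $Q$ and, in each case, encode an instance of a canonical \sharpP-hard problem as an input to $\PQE(Q)$. The central structural notion I would introduce is \emph{iterability} of a directed edge $e = (u,v)$ in a model $I$ of $Q$: $e$ is iterable if, for every $k$, replacing $e$ by a chain of $k$ fresh copies of $e$ (through $k-1$ new intermediate vertices, copying all facts covered by $e$ along each link) still yields an instance satisfying $Q$; otherwise $e$ is non-iterable. Every unbounded $Q \in \ucqinf$ then falls into exactly one of two scenarios: either some model of $Q$ has a non-iterable edge, or no model does, in which case I need to exhibit a suitable ``minimal tight'' iterable edge in some minimal model.

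In the first scenario I would reduce from \pptwodnf. Starting from a model $I$ with a non-iterable directed edge $e = (u,v)$, choose $k$ large enough that the $k$-fold iteration $I^\star$ of $e$ violates $Q$. Given a connected bipartite graph $H = (A,B,C)$, I construct a TID whose deterministic part is $I$ with the edge $e$ removed and replaced by a structure in which the ``left endpoint'' $u$ and ``right endpoint'' $v$ are linked only through gadgets indexed by $H$: for each $a \in A$ and $b \in B$ I introduce probability-one-half facts mimicking an edge of the chain, wired so that each $(a,b) \in C$ provides one possible two-step link between $u$ and $v$. The iteration length is tuned so that if a possible world $(A',B')$ is bad (no edge of $C$ has both endpoints in $A' \cup B'$) then the surviving substructure is a homomorphic image of $I^\star$ and violates $Q$, whereas if it is good then the missing link $e$ is realized and homomorphism-closure plus the original model $I$ yield satisfaction. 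A linear system over the choice probabilities then recovers the count of good worlds of $H$.

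In the second scenario I would first show that any unbounded $Q$ admits a minimal model $I$ containing a \emph{minimal tight edge} $e$, in the sense that $e$ is necessary for $Q$ to hold on $I$ and that $I$ is not ``shorter'' than any other minimal model exposing the same edge. The point is that if no such edge existed, the set of minimal models of $Q$ could be finitely covered, contradicting unboundedness; this is the content of Section~\ref{sec:findhard}. Since we are assuming all edges in all minimal models are iterable, this edge $e$ is iterable as well. I would then reduce from \stcon: given an st-graph $G = (W,C)$ with source $s$ and target $t$, build a TID containing a deterministic copy of $I$ with $e$ removed and with $u,v$ identified with $s,t$, and, for each $\{x,y\} \in C$, a probability-one-half copy of the $e$-gadget on $(x,y)$. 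Iterability of $e$ implies that any surviving $s$-to-$t$ path can be concatenated to rebuild a homomorphic image of $I$ (and hence a model of $Q$), while minimality and tightness of $e$, together with connectedness properties of $I$ around $e$, are used to rule out any satisfying homomorphism that does not correspond to a surviving path.

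The main obstacle is this converse direction in the \stcon reduction: one must exclude homomorphisms that ``cheat'' by re-using parts of the deterministic skeleton or by crossing the probabilistic gadgets in back-and-forth fashion, which is exactly the subtlety acknowledged as a gap in the \ELI-based proofs of~\cite{JL12,jung2014reasoning}. Overcoming this requires a careful model-theoretic study of minimal tight iterable edges, ensuring in particular that the local neighbourhood of $e$ in $I$ is rigid enough that every potential match of $Q$ in the constructed TID must follow a genuine $s$-to-$t$ trajectory. Once this rigidity is established, the reductions of both cases are polynomial-time Turing reductions, and together with Theorem~\ref{thm:dalvisuciu} they yield Theorem~\ref{thm:main2} and thus the dichotomy of Theorem~\ref{thm:main}.
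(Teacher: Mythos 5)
Your high-level decomposition mirrors the paper's: a case split on whether some model of $Q$ has a non-iterable edge, with a \pptwodnf\ reduction in the first case and an \stcon\ reduction via a minimal tight edge in the second. But several load-bearing details are either wrong or hand-waved, and they are exactly where the proof is hard.

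First, your definition of iteration is not the right one. You replace $e = (u,v)$ by a directed chain of $k$ fresh copies of $e$ through new intermediate vertices. Such a chain has no homomorphism back to the original instance (collapsing it to a single edge would require identifying $u$ with $v$), so the crucial monotonicity property — that if some iterate violates $Q$ then all longer ones do — fails. The paper instead builds a \emph{back-and-forth} path $(u_1,v_1),(u_2,v_1),(u_2,v_2),\dots$ alternating copies of $e$ in both directions (Definition~\ref{def:iteration}), which is precisely what makes Observation~\ref{obs:iterhomom} hold. You also omit the \emph{incident pair} $\Pi = (F_{\ll},F_{\rr})$: the iteration keeps $F_{\ll}$ only at $u_1$ and $F_{\rr}$ only at $v_n$ while replicating other incident facts at every copy. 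This asymmetry is what allows an iterate to violate the query in the first place (as in the $R\,S^*\,T$ example); if you replicate all incident facts at each intermediate vertex you would, for most queries, trivially preserve a match and never obtain a non-iterable edge. Without the incident pair the dichotomy ``non-iterable vs.\ iterable'' is not even well-defined.

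Second, your \stcon\ coding places one copy of the $e$-gadget directly on each undirected edge $\{x,y\}$ of $G$. Since $e$ is a directed edge of facts, a single copy can only be traversed in one direction; but the source graph is undirected. The paper introduces an intermediate vertex $u_c$ per graph edge $c$ and lays two copies of $e$, on $(u_c,v_a)$ and $(u_c,v_b)$, precisely so that a path in $G$ can be followed in either direction as a back-and-forth path of copies of $e$. With your one-hop gadget the soundness direction (good worlds of $G$ yield a model of $Q$) breaks on any path that must traverse an edge ``backwards''.

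Third, and most importantly, you identify the crux — ruling out ``cheating'' homomorphisms in the bad-world case — but you do not actually prove it. You write that this ``requires a careful model-theoretic study of minimal tight iterable edges, ensuring \dots rigidity'', but rigidity is not the mechanism the paper uses and it is not clear what it would mean for an arbitrary homomorphism-closed query. The paper's route is concrete: it defines a \emph{fine dissociation} (Definition~\ref{def:finedissociation}), shows in Proposition~\ref{prp:concoding} that every bad possible world maps homomorphically into the fine dissociation, and then proves Lemma~\ref{lem:magic}, which is the heart of Section~\ref{sec:ustcon}: using the precise minimality criteria on weight and side weight (Definition~\ref{def:minimal}), it rewrites the fine dissociation by a chain of ordinary dissociations and homomorphic collapses into the coarse dissociation, deriving a contradiction with tightness. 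Your notion of ``minimal tight edge'' (``$e$ is necessary for $Q$ to hold on $I$'' and ``$I$ is not shorter than other minimal models exposing the same edge'') does not capture this two-tier weight/side-weight minimization and would not support the Lemma~\ref{lem:magic} argument. Until that lemma (or an equivalent) is proved, the converse direction of the \stcon\ reduction — and hence the second case of Theorem~\ref{thm:main2} — is unjustified.
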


This result applies to the very general class of unbounded \ucqinf. It implies
in particular that the PQE problem is \#P-hard for all Datalog queries that are
not equivalent to a UCQ, as in Example~\ref{exa:datalog}: this is the case of
all Datalog queries except the ones that are
nonrecursive or where recursion is \emph{bounded}~\cite{Hillebrand95}.

\subsection*{Effectiveness and uniformity.}
We do not study whether our dichotomy result in Theorem~\ref{thm:main} is effective, i.e., we do not study the problem of determining, given a query, whether it is safe or unsafe. The dichotomy of Theorem~\ref{thm:dalvisuciu} for UCQs is effective via the algorithm of~\cite{dalvi2012dichotomy}: this algorithm has a super-exponential bound (in the query), with the precise complexity being open.
Our dichotomy concerns the very general query language \ucqinf, and its effectiveness depends on how the input is represented: to discuss this question, we need to restrict queries to some syntactically defined fragment. If we restrict to Datalog queries, it is not clear whether our dichotomy is effective, because it is undecidable, given an arbitrary Datalog program as input, to determine whether it is bounded~\cite{Gaifman93}. This means that there is little hope for our dichotomy to be decidable over arbitrary Datalog queries, but on its own it does not imply undecidability, so the question remains open. However, our dichotomy is effective for query languages for which boundedness is decidable, e.g., monadic Datalog, its generalization GN-Datalog~\cite{benedikt2015complexity}, C2RPQs~\cite{barcelo2019boundedness}, or ontology-mediated query answering with guarded existential rules~\cite{BBLP-18}.

For unsafe queries, we also do not study the complexity of reduction \emph{as a function of the query}, or whether this problem is even decidable. All that matters is that, once the query is fixed, some reduction procedure exists, which can be performed in polynomial time \emph{in the input instance}. Such uniformity problems seem unavoidable, given that our language \ucqinf is very general and includes some queries for which non-probabilistic evaluation is not even decidable, e.g., ``there is a path from $R$ to $T$ whose length is the index of a Turing machine that halts''. We leave for future work the study of the query complexity of our reduction when restricting to better-behaved query languages such as Datalog or RPQs.

\subsection*{Proof outline.}
Theorem~\ref{thm:main2} is proven in Sections~\ref{sec:pp2dnf}--\ref{sec:ustcon}.
There are two cases, depending on the query. We study the first case in Section~\ref{sec:pp2dnf}, which covers queries for which we can find a model with a so-called \emph{non-iterable edge}. Intuitively, this is a model where we can make the query false by replacing the edge by a back-and-forth path of some length between two neighboring facts that it connects. For such queries, we can show hardness by a reduction from \pptwodnf, essentially like the hardness proof for the query $Q_0 : R(w, x), S(x, y), T(y, z)$ which is the arity-two variant of the unsafe query of~\cite[Theorem~5.1]{DaSu07}. This hardness proof covers some bounded queries (including $Q_0$) and some unbounded ones.

In Section~\ref{sec:findhard}, we present a new ingredient, to be used in the second case, i.e., when there is no model with a non-iterable edge. We show that any unbounded query must always have a model with an edge that is \emph{tight}, i.e., the query no longer holds if we replace that edge with two copies, one copy connected only to the first element and another copy connected only to the second element.
What is more, we can find a model with a tight edge which is \emph{minimal} in some sense, which we call a \emph{minimal tight pattern}.

In Section~\ref{sec:ustcon}, we use minimal tight patterns for the second case, covering unbounded queries that have a minimal tight pattern where the tight edge of the pattern is iterable. This applies to all queries to which Section~\ref{sec:pp2dnf} did not apply (and also to some queries to which it did).
Here, we reduce from the \stcon problem: intuitively, we use the iterable edge for a kind of reachability test, and we use the minimality and tightness of the pattern to show the soundness and completeness of the reduction.

\subsection*{Generalizations.}
In Section~\ref{sec:generalizations}, we give two generalizations of our result. First, we observe that our reductions only use tuple probabilities from $\{0,0.5,1\}$. This means that all $\sharpP$-hardness results hold even when restricting the probabilistic query evaluation problem to the so-called \emph{generalized model counting problem} studied for instance in~\cite{KenigSuciu20}, so we can also state our dichotomy in this context. Second, we show that all our results also apply when we consider signatures featuring predicates with arity 1 and 2.

\section{Hardness with Non-Iterable Edges}%
\label{sec:pp2dnf}

\newcommand{\drawnewkey}{
    \tcbox[colframe=black!80,colback=white, boxsep=-2mm, boxrule=0.2mm, ]{
	\begin{tikzpicture}[yscale=0.9, xscale=1]
	\draw  (0, 2.75) edge[-, cyan, very thick]  (.6, 2.75);
	\node[align=left,text width=3.3cm] (cyan) at (2.3, 2.75) {$e$};
	\draw  (2, 3) edge[-, thick, blue]  (2.6, 3);
	\node[align=left,text width=3.3cm] (blue) at (4.3, 3) {$e_{\ll}$ with $F_{\ll}$};
	\draw  (2, 2.5) edge[-, thick, blue,dashed]  (2.6, 2.5);
	\node[align=left,text width=3.3cm] (blued) at (4.3, 2.5) {$e_{\ll}$ without $F_{\ll}$};
	\draw  (6, 3) edge[-, thick, black]  (6.6, 3);
	\node[align=left,text width=3.3cm] (black) at (8.3, 3) {$e_{\rr}$ with $F_{\rr}$};
	\draw  (6, 2.5) edge[-, thick, black,dashed]  (6.6, 2.5);
	\node[align=left,text width=3.3cm] (blackd) at (8.3, 2.5) {$e_{\rr}$ without $F_{\rr}$};
	\draw  (10, 2.75) edge[-, orange]  (10.6, 2.75);
	\node[align=left,text width=2.7cm] (orange) at (12.3, 2.75) {other incident edges};
	\end{tikzpicture}
    }
}
\newcommand{\drawnewkeyb}{
    \tcbox[colframe=black!80,colback=white, boxsep=-0.2mm, boxrule=0.2mm, ]{
	\begin{tikzpicture}[yscale=0.9, xscale=1]
	\draw  (0, 3) edge[-, cyan, very thick]  (.6, 3);
	\node[align=left,text width=3.3cm] (cyan) at (2.3, 3) {$e$ with $F_{\mm}$};
	\draw  (0, 2.5) edge[-, cyan, very thick, dashed]  (.6, 2.5);
	\node[align=left,text width=3.3cm] (cyan) at (2.3, 2.5) {$e$ without $F_{\mm}$};
	\draw  (3.5, 3) edge[-, thick, blue]  (4.1, 3);
	\node[align=left,text width=3.3cm] (blue) at (5.8, 3) {$e_{\ll}$ with $F_{\ll}$};
	\draw  (3.5, 2.5) edge[-, thick, blue,dashed]  (4.1, 2.5);
	\node[align=left,text width=3.3cm] (blued) at (5.8, 2.5) {$e_{\ll}$ without $F_{\ll}$};
	\draw  (7, 3) edge[-, thick, black]  (7.6, 3);
	\node[align=left,text width=3.3cm] (black) at (9.3, 3) {$e_{\rr}$ with $F_{\rr}$};
	\draw  (7, 2.5) edge[-, thick, black,dashed]  (7.6, 2.5);
	\node[align=left,text width=3.3cm] (blackd) at (9.3, 2.5) {$e_{\rr}$ without $F_{\rr}$};
	\draw  (10.5, 2.75) edge[-, orange]  (11.1, 2.75);
	\node[align=left,text width=2.5cm] (orange) at (12.5, 2.75) {other incident edges};
	\end{tikzpicture}
    }
}

In this section, we present the hardness proof for the first case where we can find a model of the query with a \emph{non-iterable edge}. This notion will be defined relative to an \emph{incident pair} of a \emph{non-leaf edge}:

\begin{defi}
Let $I$ be an instance. We say that an element $u \in \dom(I)$ of~$I$ is a \emph{leaf} if it occurs in
only one undirected edge. We say that an edge (directed or undirected) is a \emph{leaf edge} if one of its elements (possibly both) is a leaf; otherwise, it is a \emph{non-leaf edge}.

Let $I$ be an instance and let $e = (u, v)$ be a non-leaf edge of $I$. A $\sigma^\leftrightarrow$-fact of~$I$ is \emph{left-incident} to~$e$ if it is of the form $R_{\ll}(l, u)$ with $l \notin \{u, v\}$. It is \emph{right-incident} to~$e$ if it is of the form $R_{\rr}(v, r)$ with $r \notin \{u, v\}$. An \emph{incident pair} of~$e$ is a pair of $\sigma^\leftrightarrow$-facts $\Pi= (F_{\ll}, F_{\rr})$, where $F_{\ll}$ is left-incident to~$e$ and $F_{\rr}$ is right-incident to~$e$.
We write $I_{e,\Pi}$ to denote an instance $I$ with a distinguished non-leaf edge $e$ and a distinguished incident pair $\Pi$ of~$e$ in~$I$.
\end{defi}
Note that an incident pair chooses two incident \emph{facts} (not edges): this is intuitively because in the PQE problem, we will give probabilities to single facts and not edges. It is clear that every non-leaf edge~$e$ must have an incident pair, as we can pick $F_{\ll}$ and $F_{\rr}$ from the edges incident to~$u$ and~$v$ which are not~$e$.
Moreover, we must have $F_{\ll} \neq F_{\rr}$, and neither $F_{\ll}$ nor~$F_{\rr}$ can be unary facts. However, as the relations $R_{\ll}$ and $R_{\rr}$ are $\sigma^\leftrightarrow$-relations, we may have $R_{\ll} = R_{\rr}$ or $R_{\ll} = R_{\rr}^-$, and the elements $l$ and $r$ may be equal if the edge $(u, v)$ is part of a triangle with some edges $\{u, w\}$ and $\{v, w\}$.

Let us illustrate the notion of incident pair on an example.
\begin{exa}
 Given an instance $I = \{ R(a, b), T(b, b), S(c, b), R(d, c) \}$, the edge $(b, c)$ is non-leaf and the only possible incident pair for it  is $(R(a, b), R^-(c, d))$.
\end{exa}

We can now define the \emph{iteration process} on an instance~$I_{e,\Pi}$, which intuitively replaces the edge~$e$ by a path of copies of~$e$, keeping the facts of~$\Pi$ at the beginning and end of the path, and copying all other incident facts.
Note that, while the instances that we work with are over the signature~$\sigma$, we will see them as $\sigma^\leftrightarrow$ instances in the definition of this process, e.g., when creating copies of facts, in particular of the $\sigma^\leftrightarrow$-facts $F_{\ll}$ and $F_{\rr}$; but the facts that we actually create are $\sigma$-facts, and the resulting instance is a $\sigma$-instance.
The iteration process is represented in Figure~\ref{fig:disso}, and defined formally below:
\begin{defi}%
	\label{def:iteration}
	Let $I_{e,\Pi}$ be a $\sigma$-instance where $e = (u, v)$, $\Pi = (F_{\ll}, F_{\rr})$, $F_{\ll} = R_{\ll}(l, u)$, $F_{\rr} = R_{\rr}(v, r)$, and  let $n \geq 1$.
    The \emph{$n$-th iterate} of~$e$ in~$I$ relative to~$\Pi$, denoted $I_{e,\Pi}^n$, is a $\sigma$-instance with domain $\dom(I_{e,\Pi}^n) \colonequals \dom(I) \cup \{u_2, \ldots, u_n\} \cup \{v_1, \ldots, v_{n-1}\}$, where the new elements are fresh, and where we use  $u_1$ to refer to~$u$ and $v_n$ to refer to~$v$ for convenience.
    The facts of $I_{e,\Pi}^n$ are defined by applying the following steps:
	\begin{itemize}
		\item \emph{Copy non-incident facts:} Initialize $I_{e,\Pi}^n$ as the induced subinstance of~$I$ on $\dom(I) \setminus \{u, v\}$.
                \item \emph{Copy incident facts $F_{\ll}$ and $F_{\rr}$:}
                  Add $F_{\ll}$ and $F_{\rr}$ to~$I_{e,\Pi}^n$, using $u_1$ and $v_n$, respectively.
                \item \emph{Copy other left-incident facts:}
                  For each $\sigma^\leftrightarrow$-fact $F_{\ll}' = R_{\ll}'(l', u)$ of~$I$ that is left-incident to~$e$ (i.e., $l' \notin \{u, v\}$) and where $F_{\ll}' \neq F_{\ll}$, add to~$I_{e,\Pi}^n$ the fact $R_{\ll}'(l', u_i)$ for each $1 \leq i \leq n$.
                \item \emph{Copy other right-incident facts:} For each $\sigma^\leftrightarrow$-fact $F_{\rr}' = R_{\rr}'(v, r')$ of~$I$ that is right-incident to~$e$ (i.e., $r' \notin \{u, v\}$) and where $F_{\rr}' \neq F_{\rr}$, add to~$I_{e,\Pi}^n$ the fact $R_{\rr}'(v_i, r')$ for each $1 \leq i \leq n$.
                \item \emph{Create copies of~$e$:} Copy the edge $e$ (in the sense defined in the Preliminaries) on the following pairs: $(u_i, v_i)$ for $1 \leq i \leq n$, and $(u_{i+1}, v_i)$ for $1 \leq i \leq n-1$.
	\end{itemize}
\end{defi}

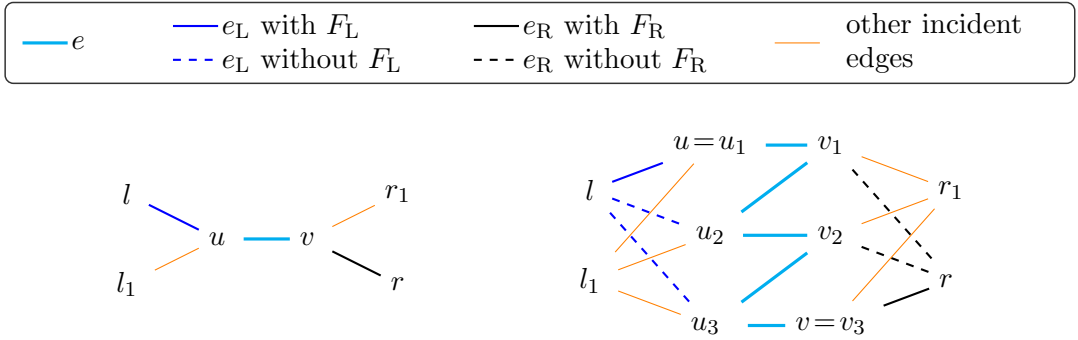
\begin{figure}
  \drawnewkey%
	\null\hfill\begin{tikzpicture}[node distance = 0.5cm,shorten <=3pt,-,yscale=.6,xscale=.6]
	\node (dumb) at (0, -3.1) {};
	\node (vl) at (0, 0) {$l$};
	\node (vll) at (0, -2) {$l_1$};
	\node (u) at (2, -1) {$u$};
	\node (v) at (4, -1) {$v$};
	\node (ur) at (6, -2) {$r$};
	\node (urr) at (6, 0) {$r_1$};
	\draw (u) edge[-,cyan,very thick] (v);
	\draw (vl) edge[-,thick,blue] (u);
	\draw (vll) edge[-,orange] (u);
	\draw (v) edge[thick,-,black] (ur);
	\draw (v) edge[-,orange] (urr);
	\end{tikzpicture}
	\hfill
	\begin{tikzpicture}[node distance = 0.5cm,shorten <=3pt,-,yscale=.6,xscale=.8]
	\node (vl) at (0, -2) {$l$};
	\node (vll) at (0, -4) {$l_1$};
	\node (u1) at (2, -1) {$u\!=\!u_1$};
	\node (u2) at (2, -3) {$u_2$};
	\node (u3) at (2, -5) {$u_3~$};
	\node (v1) at (4, -1) {$v_1$};
	\node (v2) at (4, -3) {$v_2$};
	\node (v3) at (4, -5) {$v\!=\!v_3$};
	\node (ur) at (6, -4) {$r~$};
	\node (urr) at (6, -2) {$r_1$};
	\draw (u1) edge[-,cyan,very thick] (v1);
	\draw (u2) edge[-,cyan,very thick] (v1);
	\draw (u2) edge[-,cyan,very thick] (v2);
	\draw (u3) edge[-,cyan,very thick] (v2);
	\draw (u3) edge[-,cyan,very thick] (v3);
	\draw (vl) edge[-,thick,blue] (u1);
	\draw (vl) edge[-,dashed,thick,blue] (u2);
	\draw (vl) edge[-,dashed,thick,blue] (u3);
	\draw (vll) edge[-,orange] (u1);
	\draw (vll) edge[-,orange] (u2);
	\draw (vll) edge[-,orange] (u3);
	\draw (v3) edge[thick,-,black] (ur);
	\draw (v1) edge[-,dashed,thick,black] (ur);
	\draw (v2) edge[-,dashed,thick,black] (ur);
	\draw (v1) edge[-,orange] (urr);
	\draw (v2) edge[-,orange] (urr);
	\draw (v3) edge[-,orange] (urr);
	\end{tikzpicture}\hfill\null
        \caption{Example of iteration from an instance $I_{e,\Pi}$ (left) to~$I_{e,\Pi}^3$ (middle). We write $\Pi = (F_{\ll}, F_{\rr})$ and call $e_{\ll}$ and $e_{\rr}$ the edges of~$F_{\ll}$ and~$F_{\rr}$.
        Each line represents an edge covering in general multiple $\sigma^\leftrightarrow$-facts.
        A key is given at the top.
            }%
	\label{fig:disso}
\end{figure}

\noindent
Note that, for $n=1$, we obtain exactly the original instance. Intuitively, we replace~$e$ by a path going back-and-forth between copies of~$u$ and~$v$ (and traversing~$e$ alternatively in one direction and another). The intermediate vertices have the same incident facts as the original endpoints except that we have not copied the left-incident fact and the right-incident fact of the incident pair.

We first notice that larger iterates have homomorphisms back to smaller iterates:

\begin{obs}%
	\label{obs:iterhomom}
	For any instance~$I$, for any non-leaf edge $e$ of~$I$, for any incident pair~$\Pi$ for~$e$, and for any
	$1\leq i \leq j$, it holds that $I_{e,\Pi}^j$ has a homomorphism to $I_{e,\Pi}^i$.
\end{obs}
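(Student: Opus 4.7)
The plan is to construct an explicit homomorphism $\phi$ from $I_{e,\Pi}^j$ to $I_{e,\Pi}^i$ that is the identity outside of the zigzag, and that ``folds'' the longer zigzag of $j$ copies of $e$ onto the shorter one of $i$ copies by collapsing the excess length onto the last forward bridge. Concretely, I would let $\phi$ act identically on $\dom(I) \setminus \{u,v\}$ and set
\[
  \phi(u_k) \colonequals u_{\min(k,i)}, \qquad \phi(v_k) \colonequals v_{\min(k,i)}
\]
for all $1 \leq k \leq j$, using the conventions $u_1 = u$ and $v_n = v$ valid in every iterate. Because $j \geq i$, this sends $u_1$ to $u_1 = u$ and $v_j$ to $v_i = v$, which is precisely what is required to preserve $F_{\ll}$ (which lives only at $u_1$) and $F_{\rr}$ (which lives only at $v_n$).

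I would then verify the homomorphism condition by walking through the bullet points of Definition~\ref{def:iteration}. The non-incident facts inherited from the induced subinstance of~$I$ on $\dom(I) \setminus \{u,v\}$ are fixed pointwise by~$\phi$. The facts $F_{\ll}$ and $F_{\rr}$ are handled by the endpoint computations above. The remaining left-incident facts are copied to every $u_k$ in both instances, and the remaining right-incident facts to every $v_k$, so their images under $\phi$ are present in $I_{e,\Pi}^i$. It remains to check the facts produced by the copies of $e$: for each pair $(u_a, v_b)$ on which $e$ is copied in $I_{e,\Pi}^j$, we must verify that $(\phi(u_a), \phi(v_b))$ is a pair on which $e$ is copied in $I_{e,\Pi}^i$, so that the canonical $\sigma$-facts created from $e$ land on their counterparts.

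The pairs in question are the forward pairs $(u_k, v_k)$ for $1 \leq k \leq j$ and the backward pairs $(u_{k+1}, v_k)$ for $1 \leq k \leq j-1$. Forward pairs are sent to $(u_{\min(k,i)}, v_{\min(k,i)})$, which is always a forward pair of $I_{e,\Pi}^i$. For a backward pair $(u_{k+1}, v_k)$, if $k + 1 \leq i$ then the image is the backward pair $(u_{k+1}, v_k)$ of $I_{e,\Pi}^i$; otherwise $k \geq i$ and the image collapses to the forward pair $(u_i, v_i)$. The main and essentially only obstacle is this last sub-case: a backward edge of $I_{e,\Pi}^j$ is forced to map onto a forward edge of $I_{e,\Pi}^i$. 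This step is legitimate because the ``copy $e$ on $(x,y)$'' operation produces a canonical set of $\sigma$-facts that depends only on identifying $x$ with $u$ and $y$ with $v$, independently of the orientation of the pair in the ambient zigzag; hence every $\sigma$-fact contributed by copying $e$ at $(u_{k+1}, v_k)$ in the source is matched by the corresponding fact contributed by copying $e$ at $(u_i, v_i)$ in the target. Once this subtlety is spelled out, the remaining checks are routine bookkeeping from Definition~\ref{def:iteration}, and $\phi$ is the desired homomorphism.
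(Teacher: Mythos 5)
Your map $\phi(u_k) = u_{\min(k,i)}$, $\phi(v_k) = v_{\min(k,i)}$ is exactly the paper's construction, which it describes succinctly as "merge $u_i, \ldots, u_j$, and merge $v_i, \ldots, v_j$." Your proposal is correct and simply spells out the same homomorphism and its verification in more detail, including the one nontrivial point (backward pairs beyond index $i$ collapsing onto the forward pair $(u_i, v_i)$).
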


\begin{proof}
  Simply merge $u_i, \ldots, u_j$, and merge $v_i, \ldots, v_j$.
\end{proof}

Hence, choosing an instance~$I$ that satisfies $Q$, a non-leaf edge~$e$ of~$I$, and an incident pair~$\Pi$, there are two possibilities. Either all iterates $I_{e,\Pi}^n$ satisfy~$Q$, or there is some iterate $I_{e,\Pi}^{n_0}$ with $n_0 > 1$ that violates $Q$ (and, by Observation~\ref{obs:iterhomom}, all subsequent iterates also do). We call~$e$ \emph{iterable} relative to~$\Pi$ in the first case, and \emph{non-iterable} in the second case:

\begin{defi}%
\label{def:iterable}
  A non-leaf edge~$e$ of a model $I$ of a query~$Q$ is \emph{iterable relative to an incident pair~$\Pi$} if $I_{e,\Pi}^n$ satisfies~$Q$ for each $n \geq 1$; otherwise, it is \emph{non-iterable relative to~$\Pi$}. We call $e$ \emph{iterable} if it is iterable relative to some incident pair, and \emph{non-iterable} otherwise.
\end{defi}

The goal of this section is to show that if a query~$Q$ has a model with a non-leaf edge which is not iterable, then $\PQE(Q)$ is intractable:

\begin{thm}%
	\label{thm:pp2dnfred}
	For every \ucqinf $Q$, if $Q$ has a model $I$ with a non-leaf edge~$e$ that is non-iterable, then $\PQE(Q)$ is \#P-hard.
\end{thm}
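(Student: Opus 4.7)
The plan is to Turing-reduce \pptwodnf to $\PQE(Q)$, closely following the classical hardness reduction for $Q_0 : R(w,x), S(x,y), T(y,z)$ but replacing the single middle edge by a zigzag gadget shaped like the iterated instance from Definition~\ref{def:iteration}. First, since $e$ is non-leaf I fix any incident pair $\Pi = (F_{\ll}, F_{\rr})$ with $F_{\ll} = R_{\ll}(l, u)$ and $F_{\rr} = R_{\rr}(v, r)$. Non-iterability of $e$ relative to~$\Pi$ combined with Observation~\ref{obs:iterhomom} yields a smallest integer $n \geq 1$ with $I_{e,\Pi}^n \models Q$ and $I_{e,\Pi}^{n+1} \not\models Q$; this $n$ determines the length of the gadgets.

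Given a connected bipartite input $H = (A, B, C)$ to \pptwodnf, I build a TID $\pdb = (J, \pi)$ as follows. Take one shared copy of the induced subinstance of~$I$ on $\dom(I) \setminus \{u, v\}$ (the ``context''), but with $l$ duplicated into per-$a$ copies $l_a$ and $r$ into per-$b$ copies $r_b$, replicating every context fact involving $l$ or~$r$ on all relevant copies (only polynomially many new facts, even when $l$ and $r$ are linked in the context). Introduce fresh vertices $u^a$ (for $a \in A$) and $v^b$ (for $b \in B$) that will play the roles of the iterated-zigzag endpoints $u_1$ and $v_n$. For every $(a, b) \in C$, attach a gadget $G_{a,b}$ with fresh interior vertices $u_2^{a,b}, \dots, u_n^{a,b}$ and $v_1^{a,b}, \dots, v_{n-1}^{a,b}$ (identifying $u_1^{a,b} := u^a$ and $v_n^{a,b} := v^b$), and add copies of~$e$ on all pairs and copies of the other incident facts at all $u_i^{a,b}$ and $v_i^{a,b}$ exactly as in Definition~\ref{def:iteration}. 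All these facts receive probability~$1$. The only probabilistic facts are $F_{\ll}^a := R_{\ll}(l_a, u^a)$ and $F_{\rr}^b := R_{\rr}(v^b, r_b)$, each with probability $1/2$. A possible world of~$\pdb$ is then parameterized by $(A', B') \subseteq A \times B$ via ``$F_{\ll}^a$ present iff $a \in A'$'' and the analogue for~$B$.

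I show that the corresponding world $J_{A', B'}$ satisfies $Q$ iff $(A', B')$ is good. The ``good $\Rightarrow Q$'' direction is immediate: for a witnessing edge $(a, b) \in C$ with $a \in A'$ and $b \in B'$, the subinstance of $J_{A', B'}$ on the context together with $l_a, r_b$, gadget $G_{a,b}$, $F_{\ll}^a$, and $F_{\rr}^b$ is isomorphic to $I_{e,\Pi}^n$ and hence satisfies $Q$, so $J_{A', B'} \models Q$ by monotonicity. For the ``bad $\Rightarrow \lnot Q$'' direction I exhibit a homomorphism $h : J_{A', B'} \to I_{e,\Pi}^{n+1}$ and invoke homomorphism-closure with $I_{e,\Pi}^{n+1} \not\models Q$. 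Define $h$ to be the identity on the context, $l_a \mapsto l$ and $r_b \mapsto r$ everywhere, $u^a \mapsto u_1$ if $a \in A'$ and $u^a \mapsto u_2$ otherwise, and $v^b \mapsto v_{n+1}$ if $b \in B'$ and $v^b \mapsto v_n$ otherwise. Badness rules out the combination $a \in A', b \in B'$ for any $(a, b) \in C$, so each gadget's endpoint image pair $(h(u^a), h(v^b))$ lies in the three remaining cases $(u_1, v_n)$, $(u_2, v_{n+1})$, $(u_2, v_n)$; in each case the length-$(2n-1)$ gadget zigzag admits a walk inside the length-$(2n+1)$ target zigzag (distances $2n-1, 2n-1, 2n-3$, all with the correct parity), and this walk prescribes consistent images of the interior $u_i^{a,b}, v_i^{a,b}$. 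All facts are preserved: context facts by identity; the (possibly present) $F_{\ll}^a, F_{\rr}^b$ land on $F_{\ll}, F_{\rr}$ of the target; other left/right-incident facts land trivially because they appear at every $u_j$ and $v_j$ of $I_{e,\Pi}^{n+1}$; and each gadget copy of~$e$ lands on a target copy of~$e$ (either a ``forward'' $(u_j, v_j)$ or a ``back'' $(u_{j+1}, v_j)$). Hence $\Pr_\pdb(Q) = 2^{-|A|-|B|} \cdot |\{\text{good possible worlds of } H\}|$, giving a polynomial-time Turing reduction from \pptwodnf and thus \sharpP-hardness.

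I expect the main obstacle to be the homomorphism in the ``bad'' direction, and in particular the simultaneous compatibility of the walks across gadgets that share an endpoint $u^a$ or $v^b$, together with the careful distinction between ``forward'' copies $(u_j, v_j)$ and ``back'' copies $(u_{j+1}, v_j)$ of~$e$ so that the orientations of the facts covered by~$e$ are preserved. The slack of exactly two extra zigzag edges in $I_{e,\Pi}^{n+1}$ compared with $I_{e,\Pi}^n$ is precisely what makes the walk exist in all three non-forbidden cases, which is where non-iterability is used in an essential way. Corner cases ($l = r$, context facts between $l$ and~$r$, $R_{\ll} = R_{\rr}^-$, and so on) need careful but routine bookkeeping during the construction without affecting the polynomial bound.
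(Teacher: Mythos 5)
Your proof is correct, and for the completeness direction (bad worlds violate $Q$) it is genuinely simpler than the paper's argument. The coding is essentially the paper's Definition~\ref{def:ppcoding}, with one inessential change: you duplicate $l$ into $l_a$ and $r$ into $r_b$, whereas the paper keeps them as single elements and attaches all the probabilistic copies of~$F_{\ll}$ to the same~$l$ and all copies of~$F_{\rr}$ to the same~$r$. That duplication is what forces you into the corner-case bookkeeping you mention at the end ($l=r$, context facts between them, etc.); if you simply keep $l$ and~$r$ singular as in the paper, your homomorphism still goes through verbatim (send $l\mapsto l$ and $r\mapsto r$), and those corner cases disappear.

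The real divergence is in the homomorphism for the ``bad'' direction. The paper maps bad worlds into the much larger iterate $I_{e,\Pi}^{3n-1}$ by a BFS-style ``back-and-forth'' exploration of the bipartite graph: all $u_a$ with $a\in A'$ go to the start of the target zigzag, their $B$-neighbours to the position after $2n-1$ edges, then (new, not-in-$A'$) $A$-neighbours of those to position $4n-1$, and so on, using the badness of~$\omega$ at each level. You instead give a flat two-level assignment driven directly by membership: $u^a\mapsto u_1$ or $u_2$ depending on whether $a\in A'$, and $v^b\mapsto v_{n+1}$ or~$v_n$ depending on whether $b\in B'$, which lets you target only $I_{e,\Pi}^{n+1}$. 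The walk-existence check is the right criterion here and all three non-forbidden endpoint pairs do admit a length-$(2n-1)$ walk of the correct parity inside the target zigzag; your orientation bookkeeping ($u$-copies only ever map to $u$-copies, $v$-copies to $v$-copies, so every gadget copy of~$e$ lands on a genuine target copy of~$e$) is exactly what's needed and works. One small slip: for $n=1$ the distance from $u_2$ to $v_n$ is $|2n-3|=1$, not $2n-3=-1$; the walk still exists, but you should state the distance as $|2n-3|$ and check the parity condition $|p-q|\le 2n-1$ with $(2n-1)-|p-q|$ even, rather than rely on $2n-3$ being nonnegative. Beyond that, the argument is sound, and your observation that badness only needs ``not directly adjacent'' rather than ``sufficiently far apart'' is precisely what buys the tighter target iterate; the paper uses the coarser distance argument and pays for it with a larger target and a more intricate traversal.
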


Let us illustrate on an example how to apply this result:
\begin{exa}%
  \label{exa:rstrpq}
Consider the RPQ $R S^* T$. This query has a model $\{R(a, b), S(b, c), T(c, d)\}$ with an edge $(b, c)$ that is non-leaf and non-iterable. Indeed its iterate with $n = 2$ relative to the only possible incident pair yields
$\{R(a, b), S(b, c'), S(b', c'), \allowbreak S(b', c), T (c, d)\}$ which does not satisfy the query.
Hence, Theorem~\ref{thm:pp2dnfred} shows that PQE is \#P-hard for this RPQ\@.
Importantly, the choice of the model matters, as this query also has models where all non-leaf edges are iterable, for instance $\{R(a, b), S(b, c), T(c, d), R(a', b'), S(b', c'), \allowbreak T(c', d')\}$, or $\{R(a, b), T(b, c)\}$ which has no non-leaf edge at all.
\end{exa}

Note that Theorem~\ref{thm:pp2dnfred} does not assume that the query is unbounded, and also applies to some bounded queries.
For instance, the unsafe CQ $Q_0: R(w, x), S(x,y), T(y, z)$ can be shown to be unsafe using this result, with the model $\{R(a, b), S(b, c), T(c, d)\}$ and edge $(b, c)$.
However, Theorem~\ref{thm:pp2dnfred} is too coarse to show \#P-hardness for all unsafe UCQs; for instance,  it does not cover ${Q_0': R(x, x), S(x, y), T(y, y)}$, or ${Q_1: (R(w, x), S(x, y))
\lor (S(x, y), T(y, z))}$. It will nevertheless be sufficient for our purposes when studying \emph{unbounded queries}, as we will see in the next sections.

Hence, in the rest of this section, we prove Theorem~\ref{thm:pp2dnfred}. Let $I_{e,\Pi}$ be the instance with the non-iterable, non-leaf edge, and let us take the smallest $n_0 > 1$ such that $I_{e,\Pi}^{n_0}$ violates the query. The idea is to use $I_{e,\Pi}^{n_0}$ to show hardness of PQE by reducing from
\pptwodnf (Definition~\ref{def:pp2dnf}). Thus, let us explain how we can use $I_{e,\Pi}$ to code a bipartite graph~$H$ in polynomial time into a TID $\pdb$.
The definition of this coding does not depend on the query~$Q$, but we will use the properties of
$I_{e,\Pi}$ and~$n_0$ to argue that it defines a reduction between \pptwodnf and PQE, i.e., there is a
correspondence between the possible worlds of~$H$ and the possible worlds of~$\pdb$, such that good possible worlds of~$H$ are mapped to possible worlds of~$\pdb$ which satisfy~$Q$. Let us first define the coding, which we also illustrate on an example in Figure~\ref{fig:pp2dnfcode}:

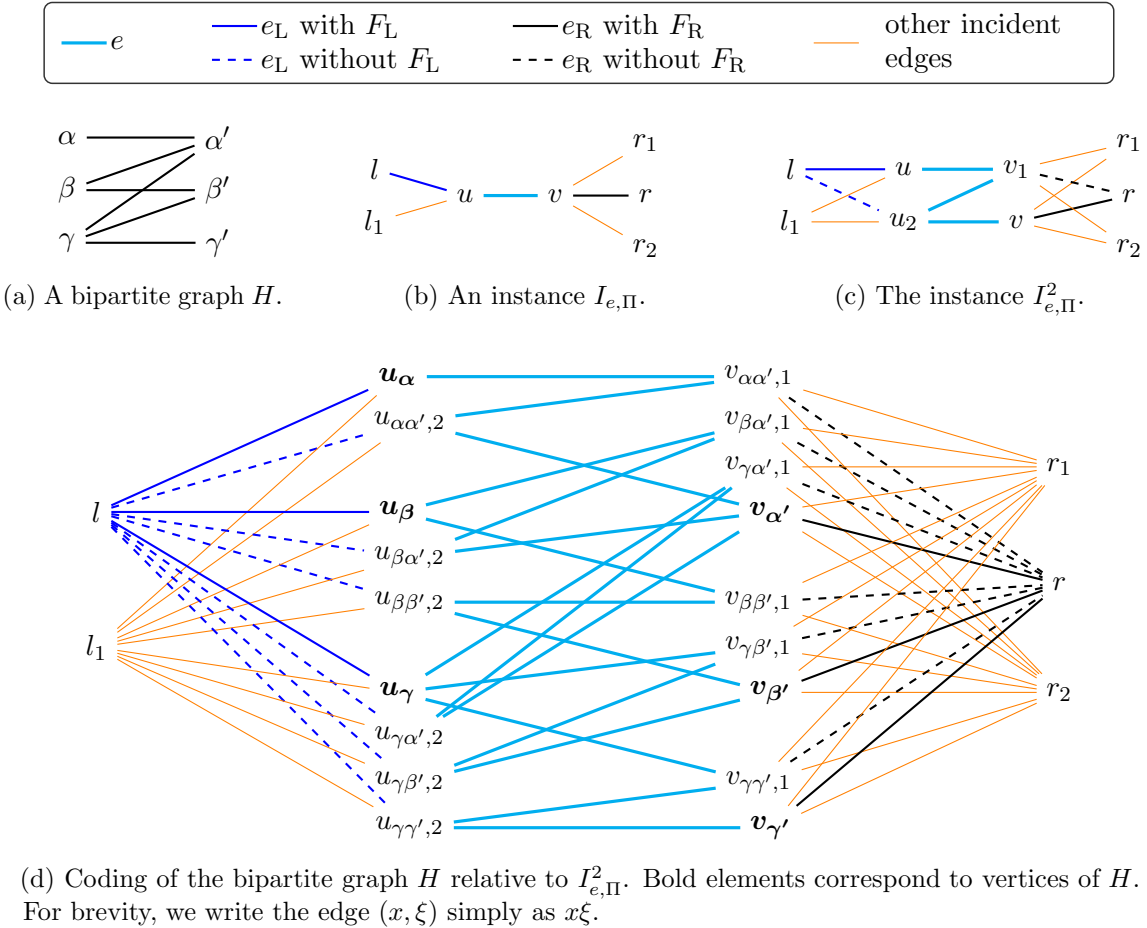
\begin{figure}[t]
	\drawnewkey%
	\bigskip

	\begin{subfigure}[t]{1.45in}
		\centering
		\begin{tikzpicture}[yscale=.7]
		\node (a) at (0, 0) {$\alpha$};
		\node (b) at (0, -1) {$\beta$};
		\node (c) at (0, -2) {$\gamma$};
		\node (alpha) at (2, 0) {$\alpha'$};
		\node (beta) at (2, -1) {$\beta'$};
		\node (gamma) at (2, -2) {$\gamma'$};
		\draw (a) edge[-, thick] (alpha);
		\draw (b) edge[-, thick] (alpha);
		\draw (b) edge[-, thick] (beta);
		\draw (c) edge[-, thick] (beta);
		\draw (c) edge[-, thick] (gamma);
		\draw (c) edge[-, thick] (alpha);
		\end{tikzpicture}
		\caption{A bipartite graph $H$.}\label{fig:pp2dnfcode-a}
	\end{subfigure}
	\hfill
	\begin{subfigure}[t]{1.8in}
		\begin{tikzpicture}[yscale=.7]
		\node (v1) at (0, -0.5) {$l$};
		\node (v2) at (0, -1.5) {$l_1$};
		\node (u) at (1.2, -1) {$u$};
		\node (v) at (2.4, -1) {$v$};
		\node (u1) at (3.6, 0) {$r_1$};
		\node (u2) at (3.6, -1) {$r$};
		\node (u3) at (3.6, -2) {$r_2$};
		\draw (u) edge[-, thick, cyan, very thick] (v);
		\draw (v1) edge[-,thick, blue]  (u);
		\draw (v2) edge[-, orange] (u);
		\draw (v) edge[-, orange] (u1);
		\draw (v) edge[-,thick, black] (u2);
		\draw (v) edge[-,orange] (u3);
		\end{tikzpicture}
		\caption{An instance $I_{e,\Pi}$.}\label{fig:pp2dnfcode-b}
	\end{subfigure}
	\hfill
	\begin{subfigure}[t]{2in}
		\centering
		\begin{tikzpicture}[yscale=.7]
		\node (v1) at (0, -0.5) {${l}$};
		\node (v2) at (0, -1.5) {${l_1}$};
		\node (u) at (1.5, -0.5) {$u$};
		\node (ua) at (1.5, -1.5) {$u_2$};
		\node (va) at (3, -0.5) {$v_1$};
		\node (v) at (3, -1.5) {$v$};
		\node (u1) at (4.5, 0) {${r_1}$};
		\node (u2) at (4.5, -1) {${r}$};
		\node (u3) at (4.5, -2) {${r_2}$};
		\draw (v1) edge[-,thick, blue] (u);
		\draw (v1) edge[-,dashed,thick, blue] (ua);
		\draw (v2) edge[-, orange] (u);
		\draw (v2) edge[-,orange] (ua);
		\draw (u) edge[-,thick, cyan,very thick]  (va);
		\draw (ua) edge[-,thick, cyan,very thick](v);
		\draw (ua) edge[-,thick, cyan,very thick](va);
		\draw (va) edge[-,orange] (u1);
		\draw (v) edge[-,orange] (u1);
		\draw (va) edge[-,thick, dashed,black] (u2);
		\draw (v) edge[-,thick, black] (u2);
		\draw (va) edge[-, orange] (u3);
		\draw (v) edge[-, orange] (u3);
		\end{tikzpicture}
		\caption{The instance $I_{e,\Pi}^2$.}\label{fig:pp2dnfcode-c}
	\end{subfigure}

	\bigskip
	\medskip

	\null\hfill\begin{subfigure}[t]{.97\linewidth}
		\centering
		\begin{tikzpicture}[xscale=1.6,yscale=1.2]
		\node (v1) at (0, -0.5) {${l}$};
		\node (v2) at (0, -2) {${l_1}$};
		\node (ua) at (2.5, 1) {$\bm{u_\alpha}$};
		\node (uaalpha) at (2.6, 0.5) {$u_{\alpha\alpha',2}$};
		\node (ub) at (2.5, -0.5) {$\bm{u_\beta}$};
		\node (ubalpha) at (2.6, -1) {$u_{\beta\alpha',2}$};
		\node (ubbeta) at (2.6, -1.5) {$u_{\beta\beta',2}$};
		\node (uc) at (2.5, -2.5) {$\bm{u_\gamma}$};
		\node (ucalpha) at (2.6, -3) {$u_{\gamma\alpha',2}$};
		\node (ucbeta) at (2.6, -3.5) {$u_{\gamma\beta',2}$};
		\node (ucgamma) at (2.6, -4) {$u_{\gamma\gamma',2}$};
		\node (vaalpha) at (5.5, 1) {$v_{\alpha\alpha',1}$};
		\node (vbalpha) at (5.5, 0.5) {$v_{\beta\alpha',1}$};
		\node (vcalpha) at (5.5, 0) {$v_{\gamma\alpha',1}$};
		\node (valpha) at (5.6, -0.5) {$\bm{v_{\alpha'}}$};
		\node (vbbeta) at (5.5, -1.5) {$v_{\beta\beta',1}$};
		\node (vcbeta) at (5.5, -2) {$v_{\gamma\beta',1}$};
		\node (vbeta) at (5.6, -2.5) {$\bm{v_{\beta'}}$};
		\node (vcgamma) at (5.5, -3.5) {$v_{\gamma\gamma',1}$};
		\node (vgamma) at (5.6, -4) {$\bm{v_{\gamma'}}$};
		\node (u1) at (8, 0) {${r_1}$};
		\node (u2) at (8, -1.3) {${r}$};
		\node (u3) at (8, -2.5) {${r_2}$};
		\draw (v1) edge[-, thick, blue] (ua);
		\draw (v1) edge[-, thick, dashed,blue] (uaalpha);
		\draw (v2) edge[-, orange] (ua);
		\draw (v2) edge[-, orange] (uaalpha);
		\draw (v1) edge[-, thick, blue] (ub);
		\draw (v1) edge[-, thick, dashed,blue] (ubalpha);
		\draw (v1) edge[-, thick, dashed,blue] (ubbeta);
		\draw (v2) edge[-, orange] (ub);
		\draw (v2) edge[-, orange] (ubalpha);
		\draw (v2) edge[-, orange] (ubbeta);
		\draw (v1) edge[-, thick, blue] (uc);
		\draw (v1) edge[-, thick, dashed,blue] (ucalpha);
		\draw (v1) edge[-, thick, dashed,blue] (ucbeta);
		\draw (v1) edge[-, thick, dashed,blue] (ucgamma);
		\draw (v2) edge[-, orange] (uc);
		\draw (v2) edge[-, orange] (ucalpha);
		\draw (v2) edge[-, orange] (ucbeta);
		\draw (v2) edge[-, orange] (ucgamma);
		\draw (ua) edge[-,  cyan,very thick] (vaalpha);
		\draw (uaalpha) edge[-, cyan,very thick] (vaalpha);
		\draw (uaalpha) edge[-, cyan,very thick] (valpha);
		\draw (ub) edge[-,  cyan,very thick] (vbalpha);
		\draw (ubalpha) edge[-, cyan,very thick] (vbalpha);
		\draw (ubalpha) edge[-, cyan,very thick] (valpha);
		\draw (ub) edge[-,  cyan,very thick] (vbbeta);
		\draw (ubbeta) edge[-, cyan,very thick] (vbbeta);
		\draw (ubbeta) edge[-, cyan,very thick] (vbeta);
		\draw (uc) edge[-,  cyan,very thick] (vcbeta);
		\draw (ucbeta) edge[-, cyan,very thick] (vcbeta);
		\draw (ucbeta) edge[-, cyan,very thick] (vbeta);
		\draw (uc) edge[-,  cyan,very thick] (vcgamma);
		\draw (ucgamma) edge[-, cyan,very thick] (vcgamma);
		\draw (ucgamma) edge[-, cyan,very thick] (vgamma);
		\draw (uc) edge[-, cyan,very thick] (vcalpha);
		\draw (ucalpha) edge[-, cyan,very thick] (vcalpha);
		\draw (ucalpha) edge[-, cyan,very thick] (valpha);
		\draw (vaalpha) edge[-, orange] (u1);
		\draw (vaalpha) edge[-, thick,dashed, black] (u2);
		\draw (vaalpha) edge[-, orange] (u3);
		\draw (vbalpha) edge[-, orange] (u1);
		\draw (vbalpha) edge[-, thick,dashed, black] (u2);
		\draw (vbalpha) edge[-, orange] (u3);
		\draw (vcalpha) edge[-, orange] (u1);
		\draw (vcalpha) edge[-, thick,dashed, black] (u2);
		\draw (vcalpha) edge[-, orange] (u3);
		\draw (valpha) edge[-, orange] (u1);
		\draw (valpha) edge[-, thick,black] (u2);
		\draw (valpha) edge[-, orange] (u3);
		\draw (vbbeta) edge[-, orange] (u1);
		\draw (vbbeta) edge[-, thick,dashed, black] (u2);
		\draw (vbbeta) edge[-, orange] (u3);
		\draw (vcbeta) edge[-, orange] (u1);
		\draw (vcbeta) edge[-, thick,dashed, black] (u2);
		\draw (vcbeta) edge[-, orange] (u3);
		\draw (vbeta) edge[-, orange] (u1);
		\draw (vbeta) edge[-, thick,black] (u2);
		\draw (vbeta) edge[-, orange] (u3);
		\draw (vcgamma) edge[-, orange] (u1);
		\draw (vcgamma) edge[-, thick,dashed, black] (u2);
		\draw (vcgamma) edge[-, orange] (u3);
		\draw (vgamma) edge[-, orange] (u1);
		\draw (vgamma) edge[-, thick,black] (u2);
		\draw (vgamma) edge[-, orange] (u3);
		\end{tikzpicture}
		\caption{Coding of the bipartite graph $H$ relative to $I_{e,\Pi}^2$. Bold elements correspond to vertices of~$H$. For brevity, we write the edge $(x, \xi)$ simply as $x\xi$.}\label{fig:pp2dnfcode-d}
	\end{subfigure}\hfill\null%
	\caption{Example of the coding of a bipartite graph $H$ shown in Figure~\ref{fig:pp2dnfcode-a}.
		We encode $H$ relative to an instance $I_{e,\Pi}$ (Figure~\ref{fig:pp2dnfcode-b}), with a non-leaf
		edge~$e$ and an incident pair~$\Pi$.
		The result $I_{e,\Pi}^2$ of iterating $e$ in~$I$ with $n=2$ (Definition~\ref{def:iteration}) is shown in Figure~\ref{fig:pp2dnfcode-c}.
		The coding of~$H$ relative to~$I_{e,\Pi}$ and $n=2$  (Definition~\ref{def:ppcoding}) is shown in Figure~\ref{fig:pp2dnfcode-d}, with the probabilistic facts being the copies of~$F_{\ll}$ and $F_{\rr}$ in the edges in solid blue and black.
	}%
	\label{fig:pp2dnfcode}
\end{figure}
\begin{defi}%
	\label{def:ppcoding}
        Let $I_{e,\Pi}$ be a $\sigma$-instance where~$e = (u, v)$, $\Pi = (F_{\ll},F_{\rr})$, $F_{\ll} = R_{\ll}(l, u)$, $F_{\rr}=R_{\rr}(v, r)$, and let $n \geq 1$. Let $H = (A, B, C)$ be a connected bipartite graph.
        The \emph{coding} of~$H$ relative to $I_{e,\Pi}$ and~$n$  is a TID $\pdb = (J, \pi)$ with domain
        $\dom(J) \colonequals (\dom(I)\setminus \{u, v\}) \cup \{u_a \mid a \in A\} \cup \{v_b \mid b \in B\} \cup \{u_{c,2}, \ldots, u_{c,n} \mid c \in C\} \cup \{v_{c,1}, \ldots, v_{c,n-1} \mid c \in C\}$,
        where the new elements are fresh.
        The facts of the $\sigma$-instance~$J$ and the probability mapping~$\pi$ are defined as follows:
	\begin{itemize}
		\item \emph{Copy non-incident facts:} Initialize $J$ as the induced subinstance of~$I$ on $\dom(I) \setminus \{u, v\}$.
                \item \emph{Copy incident facts $F_{\ll}$ and $F_{\rr}$:}
                  Add to~$J$ the $\sigma^\leftrightarrow$-fact $R_{\ll}(l, u_a)$
                  for each $a \in A$, and add to~$J$ the $\sigma^\leftrightarrow$-fact $R_{\rr}(v_b, r)$
                  for each $b \in B$.
                \item \emph{Copy other left-incident facts:}
                  For each $\sigma^\leftrightarrow$-fact $F_{\ll}' = R_{\ll}'(l', u)$ of~$I$
                  that is left-incident to~$e$ (i.e., $l' \notin \{u, v\}$)
                  and where $F_{\ll}' \neq F_{\ll}$,
                  add to~$J$ the facts $R_{\ll}'(l', u_a)$ for each $a \in A$,
                  and add to~$J$ the facts $R_{\ll}'(l', u_{c,j})$
                  for each $2 \leq j \leq n$ and $c \in C$.
                \item \emph{Copy other right-incident facts:}
                  For each $\sigma^\leftrightarrow$-fact $F_{\rr}' = R_{\rr}'(v, r')$ of~$I$
                  that is right-incident to~$e$ (i.e., $r' \notin \{u, v\}$)
                  and where $F_{\rr}' \neq F_{\rr}$,
                  add to~$J$ the facts $R_{\rr}'(v_b, r')$ for each $b \in B$
                  and add to~$J$ the facts $R_{\rr}'(v_{c,j}, r')$
                  for each $1 \leq j \leq n-1$ and $c \in C$.
                \item \emph{Create copies of~$e$:} For each $c \in C$ with $c =
                  (a, b)$, copy~$e$ on the following pairs:
                  $(u_{c,i}, v_{c,i})$ for $1 \leq i \leq n$, and $(u_{c,i+1},
                  v_{c,i})$ for $1 \leq i \leq n-1$, where we use $u_{c,1}$ to
                  refer to $u_a$ and $v_{c,n}$ to refer to~$v_b$.
	\end{itemize}
        Finally, we define the function $\pi$ such that it maps all the facts created in the step ``Copy incident facts $F_{\ll}$ and $F_{\rr}$'' to $0.5$, and all other facts to $1$.
\end{defi}

Observe how this definition relates to the definition of iteration (Definition~\ref{def:iteration}): we intuitively code each edge of the bipartite graph as a copy of the path of copies of~$e$ in the definition of the $n$-th iterate of~$(u, v)$.
Note also that there are exactly $\card{A} + \card{B}$ uncertain facts, by construction. It is clear that, for any choice of $I_{e,\Pi}$ and~$n$, this coding is in polynomial time in~$H$.

We now define the bijection~$\phi$, mapping each possible world~$\omega$ of the connected bipartite graph~$H$ to a possible world of the TID $\pdb$. For each vertex $a \in A$, we keep the copy of~$F_{\rr}$ incident to~$u_a$ in $\phi(\omega)$ if~$a$ is kept in $\omega$, and we do not keep it otherwise; we do the same for~$v_b$, and~$F_{\ll}$.
It is obvious that this correspondence is bijective, and that all possible worlds have the same probability, namely, $0.5^{\card{A} + \card{B}}$.  Furthermore, we can use $\phi$ to define a reduction, thanks to the following statement:
 \begin{figure}[t]
   \drawnewkey%
   \bigskip

	\begin{subfigure}[t]{0.4\textwidth}
		\centering
		\begin{tikzpicture}[yscale=.5]
		\tikzstyle{varb} = [text width=1.1em, text centered, circle, draw,inner sep=0.5pt, blue]
		\tikzstyle{varr} = [text width=1em, text centered, circle, draw,inner sep=0pt, red]
		\node (A) at (0, 0.9) {$A$};
		\node (B) at (2, 0.9) {$B$};
		\node[varb] (a) at (0, 0) {$\alpha$};
		\node (b) at (0, -1) {$\beta$};
		\node (c) at (0, -2) {$\gamma$};
		\node (alpha) at (2, 0) {$\alpha'$};
		\node[varr]  (beta) at (2, -1) {$\beta'$};
		\node[varr]  (gamma) at (2, -2) {$\gamma'$};
		\draw (alpha) edge[-, thick, shorten >=1pt] (a);
		\draw (b) edge[-, thick] (alpha);
		\draw (b)  edge[-, thick, shorten >=1pt] (beta);
		\draw (c)  edge[-, thick, shorten >=1pt] (beta);
		\draw (c)  edge[-, thick, shorten >=1pt](gamma);
		\draw (c) edge[-,thick] (alpha);
		\end{tikzpicture}
                \caption{A possible world $\omega$ of $H$ from Figure~\ref{fig:pp2dnfcode-a}, containing all circled nodes.}\label{fig:pp2dnffold-a}
	\end{subfigure}
	\hspace*{1cm}
	\begin{subfigure}[t]{0.45\textwidth}
		\centering
		\begin{tikzpicture}[yscale=.5]
		\tikzstyle{varb} = [text width=1.1em, text centered, circle, draw,inner sep=0.5pt, blue]
		\tikzstyle{varr} = [text width=1em, text centered, circle, draw,inner sep=0pt, red]
		\node (A) at (0, 0.9) {$A$};
		\node (B) at (2, 0.9) {$B$};
		\node (A) at (4, 0.9) {$A$};
		\node (B) at (6, 0.9) {$B$};
		\node[varb]  (a) at (0, 0) {$\alpha$};
		\node (alpha) at (2, 0) {$\alpha'$};
		\node (b) at (4, 0) {$\beta$};
		\node (c) at (4, -1) {$\gamma$};
		\node (alpha2) at (6, 0) {$\alpha'$};
		\node[varr] (beta) at (6, -1) {$\beta'$};
		\node[varr] (gamma) at (6, -2) {$\gamma'$};
		\draw (alpha) edge[-, thick, shorten >=1pt] (a);
		\draw (alpha) edge[-, thick] (b);
		\draw (alpha) edge[-, thick] (c);
		\draw (b) edge[-, thick] (alpha2);
		\draw (b) edge[-, thick, shorten >=1pt] (beta);
		\draw (c) edge[-, thick, shorten >=1pt] (beta);
		\draw (c) edge[-, thick, shorten >=1pt] (gamma);
		\end{tikzpicture}
		\caption{The way $H$ is considered in the completeness proof of
                Proposition~\ref{prp:ppcoding}.}\label{fig:pp2dnffold-b}
	\end{subfigure}

	\bigskip
	\bigskip

	\null\hfill\begin{subfigure}[t]{.97\linewidth}
          \centering
          \begin{tikzpicture}[xscale=1.6,yscale=1.2]
		\node (v1) at (0, -0.5) {${l}$};
		\node (v2) at (0, -2) {${l_1}$};
		\node (ua) at (2.5, 1) {$\bm{u_\alpha}$};
		\node (uaalpha) at (2.6, 0.5) {$u_{\alpha\alpha',2}$};
		\node (ub) at (2.5, -0.5) {$\bm{u_\beta}$};
		\node (ubalpha) at (2.6, -1) {$u_{\beta\alpha',2}$};
		\node (ubbeta) at (2.6, -1.5) {$u_{\beta\beta',2}$};
		\node (uc) at (2.5, -2.5) {$\bm{u_\gamma}$};
		\node (ucalpha) at (2.6, -3) {$u_{\gamma\alpha',2}$};
		\node (ucbeta) at (2.6, -3.5) {$u_{\gamma\beta',2}$};
		\node (ucgamma) at (2.6, -4) {$u_{\gamma\gamma',2}$};
		\node (vaalpha) at (5.5, 1) {$v_{\alpha\alpha',1}$};
		\node (vbalpha) at (5.5, 0.5) {$v_{\beta\alpha',1}$};
		\node (vcalpha) at (5.5, 0) {$v_{\gamma\alpha',1}$};
		\node (valpha) at (5.6, -0.5) {$\bm{v_{\alpha'}}$};
		\node (vbbeta) at (5.5, -1.5) {$v_{\beta\beta',1}$};
		\node (vcbeta) at (5.5, -2) {$v_{\gamma\beta',1}$};
		\node (vbeta) at (5.6, -2.5) {$\bm{v_{\beta'}}$};
		\node (vcgamma) at (5.5, -3.5) {$v_{\gamma\gamma',1}$};
		\node (vgamma) at (5.6, -4) {$\bm{v_{\gamma'}}$};
		\node (u1) at (8, 0) {${r_1}$};
		\node (u2) at (8, -1.3) {${r}$};
		\node (u3) at (8, -2.5) {${r_2}$};
		\draw (v1) edge[-, thick, blue] (ua);
		\draw (v1) edge[-, thick, dashed,blue] (uaalpha);
		\draw (v2) edge[-, orange] (ua);
		\draw (v2) edge[-, orange] (uaalpha);
		\draw (v1) edge[-, thick, dashed, blue] (ub);
		\draw (v1) edge[-, thick, dashed,blue] (ubalpha);
		\draw (v1) edge[-, thick, dashed,blue] (ubbeta);
		\draw (v2) edge[-, orange] (ub);
		\draw (v2) edge[-, orange] (ubalpha);
		\draw (v2) edge[-, orange] (ubbeta);
		\draw (v1) edge[-, thick, dashed,blue] (uc);
		\draw (v1) edge[-, thick, dashed,blue] (ucalpha);
		\draw (v1) edge[-, thick, dashed,blue] (ucbeta);
		\draw (v1) edge[-, thick, dashed,blue] (ucgamma);
		\draw (v2) edge[-, orange] (uc);
		\draw (v2) edge[-, orange] (ucalpha);
		\draw (v2) edge[-, orange] (ucbeta);
		\draw (v2) edge[-, orange] (ucgamma);
		\draw (ua) edge[-, very thick, cyan] (vaalpha);
		\draw (uaalpha) edge[-, very thick,cyan] (vaalpha);
		\draw (uaalpha) edge[-, very thick,cyan] (valpha);
		\draw (ub) edge[-, very thick, cyan] (vbalpha);
		\draw (ubalpha) edge[-, very thick,cyan] (vbalpha);
		\draw (ubalpha) edge[-, very thick,cyan] (valpha);
		\draw (ub) edge[-, very thick, cyan] (vbbeta);
		\draw (ubbeta) edge[-, very thick,cyan] (vbbeta);
		\draw (ubbeta) edge[-, very thick,cyan] (vbeta);
		\draw (uc) edge[-, very thick, cyan] (vcbeta);
		\draw (ucbeta) edge[-, very thick,cyan] (vcbeta);
		\draw (ucbeta) edge[-, very thick,cyan] (vbeta);
		\draw (uc) edge[-, very thick, cyan] (vcgamma);
		\draw (ucgamma) edge[-, very thick,cyan] (vcgamma);
		\draw (ucgamma) edge[-, very thick,cyan] (vgamma);
		\draw (uc) edge[-, very thick, cyan] (vcalpha);
		\draw (ucalpha) edge[-, very thick,cyan] (vcalpha);
		\draw (ucalpha) edge[-, very thick,cyan] (valpha);
		\draw (vaalpha) edge[-, orange] (u1);
		\draw (vaalpha) edge[-, thick,dashed, black] (u2);
		\draw (vaalpha) edge[-, orange] (u3);
		\draw (vbalpha) edge[-, orange] (u1);
		\draw (vbalpha) edge[-, thick,dashed, black] (u2);
		\draw (vbalpha) edge[-, orange] (u3);
		\draw (vcalpha) edge[-, orange] (u1);
		\draw (vcalpha) edge[-, thick,dashed, black] (u2);
		\draw (vcalpha) edge[-, orange] (u3);
		\draw (valpha) edge[-, orange] (u1);
		\draw (valpha) edge[-, thick,dashed,black] (u2);
		\draw (valpha) edge[-, orange] (u3);
		\draw (vbbeta) edge[-, orange] (u1);
		\draw (vbbeta) edge[-, thick,dashed, black] (u2);
		\draw (vbbeta) edge[-, orange] (u3);
		\draw (vcbeta) edge[-, orange] (u1);
		\draw (vcbeta) edge[-, thick,dashed, black] (u2);
		\draw (vcbeta) edge[-, orange] (u3);
		\draw (vbeta) edge[-, orange] (u1);
		\draw (vbeta) edge[-, thick,black] (u2);
		\draw (vbeta) edge[-, orange] (u3);
		\draw (vcgamma) edge[-, orange] (u1);
		\draw (vcgamma) edge[-, thick,dashed, black] (u2);
		\draw (vcgamma) edge[-, orange] (u3);
		\draw (vgamma) edge[-, orange] (u1);
		\draw (vgamma) edge[-, thick,black] (u2);
		\draw (vgamma) edge[-, orange] (u3);
		\end{tikzpicture}
                \caption{The possible world $\phi(\omega)$ of the coding (Figure~\ref{fig:pp2dnfcode-d}) for~$\omega$.
                 The edges $(l,u_\beta)$, $(l,u_\gamma)$, and $(v_{\alpha'}, r)$ are changed to dashed lines, as they correspond to vertices of~$H$ that are not kept in~$\omega$.
		}\label{fig:pp2dnffold-c}
	\end{subfigure}\hfill\null%

	\caption{
		Example for the completeness direction of the proof of Proposition~\ref{prp:ppcoding}.
		Figure~\ref{fig:pp2dnffold-a} shows a bad possible world~$\omega$ of the bipartite graph.
         The corresponding possible world of the coding of Figure~\ref{fig:pp2dnfcode-d} (using the instance $I_{e,\Pi}^2$ of Figure~\ref{fig:pp2dnfcode-b}) is given in Figure~\ref{fig:pp2dnffold-c}.
         In the proof, we explore $H$ as depicted in Figure~\ref{fig:pp2dnffold-b} to argue that Figure~\ref{fig:pp2dnffold-c} has a homomorphism to~$I_{e,\Pi}^5$.
	}%
	\label{fig:pp2dnffold}
\end{figure}
\begin{prop}%
	\label{prp:ppcoding}
	 Let the TID $\pdb = (J, \pi)$ be the coding of a connected bipartite graph $H=(A,B,C)$ relative to an instance $I_{e,\Pi}$ and to $n \geq 1$ as described in Definition~\ref{def:ppcoding}, and let $\phi$ be the bijective function defined above from the possible worlds of $H$ to those of $\pdb$. Then:
	 \begin{enumerate}
        \item For any \emph{good} possible world $\omega$ of~$H$, $\phi(\omega)$ has a homomorphism \emph{from}~$I_{e,\Pi}^{n}$.
		\item For any \emph{bad} possible world $\omega$ of~$H$, $\phi(\omega)$ has a homomorphism \emph{to} $I_{e,\Pi}^{3n-1}$.
        \end{enumerate}
\end{prop}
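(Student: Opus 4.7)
For part 1 (soundness), given a good possible world $\omega = (A', B')$, I fix any edge $(a_0, b_0) \in C$ with $a_0 \in A'$ and $b_0 \in B'$, which exists because $\omega$ is good. I then define a homomorphism $\psi : I_{e,\Pi}^n \to \phi(\omega)$ as the identity on $\dom(I) \setminus \{u, v\}$, together with $\psi(u_i) \colonequals u_{(a_0, b_0), i}$ (with $u_{(a_0, b_0), 1} = u_{a_0}$) and $\psi(v_j) \colonequals v_{(a_0, b_0), j}$ (with $v_{(a_0, b_0), n} = v_{b_0}$). All facts of $I_{e,\Pi}^n$ are then preserved by inspection: non-incident facts are copied identically; the fact $F_{\ll}$ at $u_1$ maps to its copy at $u_{a_0}$, which is retained in $\phi(\omega)$ because $a_0 \in A'$, and symmetrically for $F_{\rr}$ using $b_0 \in B'$; other left- and right-incident facts are present at every $u_i$ and $v_j$ in both structures by construction; and the copies of~$e$ in $I_{e,\Pi}^n$ match exactly the copies placed on $(u_{(a_0, b_0), i}, v_{(a_0, b_0), i})$ and $(u_{(a_0, b_0), i+1}, v_{(a_0, b_0), i})$ in the coding.

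For part 2 (completeness), given a bad $\omega = (A', B')$, I define a homomorphism $h : \phi(\omega) \to I_{e,\Pi}^{3n-1}$ as the identity on $\dom(I) \setminus \{u, v\}$ and by assigning positions depending on membership in~$\omega$: I set $h(u_a) \colonequals u_1$ when $a \in A'$ and $h(u_a) \colonequals u_{2n}$ when $a \notin A'$, and dually $h(v_b) \colonequals v_{3n-1}$ when $b \in B'$ and $h(v_b) \colonequals v_n$ when $b \notin B'$. For each edge $c = (a, b) \in C$, the zigzag of $2n-1$ copies of~$e$ joining $u_a$ to $v_b$ in~$\phi(\omega)$ must be sent to a walk of length $2n-1$ in the zigzag of~$I_{e,\Pi}^{3n-1}$ between $h(u_a)$ and $h(v_b)$. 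Since $\omega$ is bad, no edge goes from $A'$ to~$B'$, so only three cases arise: for an edge with $a \in A'$ and $b \notin B'$ I use the forward walk $u_1, v_1, u_2, v_2, \ldots, u_n, v_n$; for $a \notin A'$ and $b \in B'$ the forward walk $u_{2n}, v_{2n}, u_{2n+1}, v_{2n+1}, \ldots, u_{3n-1}, v_{3n-1}$; and for $a \notin A'$ and $b \notin B'$ the backward walk $u_{2n}, v_{2n-1}, u_{2n-1}, v_{2n-2}, \ldots, u_{n+1}, v_n$. Each walk uses exactly $2n-1$ copies of~$e$, and its intermediate vertices are always $u_j$ with $j \geq 2$ and $v_j$ with $j \leq 3n-2$, i.e., vertices where $F_{\ll}$ and $F_{\rr}$ are absent in~$I_{e,\Pi}^{3n-1}$.

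Checking that $h$ is a homomorphism is then straightforward: the non-incident facts are preserved by identity; the fact $F_{\ll}$ appears in $\phi(\omega)$ only at vertices $u_a$ with $a \in A'$, which $h$ sends to $u_1$ where $F_{\ll}$ is present, and symmetrically for $F_{\rr}$; the other left- and right-incident facts exist at every $u_j$ and every $v_j$ of~$I_{e,\Pi}^{3n-1}$, so they are preserved automatically; and each copy of~$e$ in~$\phi(\omega)$ is sent along the chosen walk to an $e$-copy of the form $(u_j, v_j)$ or $(u_{j+1}, v_j)$ in~$I_{e,\Pi}^{3n-1}$. The main obstacle I anticipate is pinning down the positions for the ``not-kept'' vertices: the naive choice $u_2$ and $v_{3n-2}$ would leave the walk for a ``not-kept to not-kept'' edge too long to fit in exactly $2n-1$ steps, whereas pushing these positions inward to $u_{2n}$ and $v_n$ balances the zigzag so that all three types of walks have path distance exactly $2n-1$. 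This balancing also explains why the iterate~$I_{e,\Pi}^{3n-1}$ (rather than a smaller one) is required.
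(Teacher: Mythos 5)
Your proof is correct, and part~1 is essentially the same as the paper's (you exhibit a subinstance of $\phi(\omega)$ isomorphic to $I_{e,\Pi}^n$ by fixing one kept adjacent pair $(a_0,b_0)$ and following its zigzag). Part~2, however, takes a genuinely different route from the paper's proof. The paper constructs the homomorphism $\phi(\omega)\to I_{e,\Pi}^{3n-1}$ by a breadth-first traversal of~$H$: step~1 maps the kept $u_a$'s to $u_1$ and their neighbors $v_b$ to $v_n$, step~2 walks back to reach new $u_{a'}$'s at $u_{2n}$, step~3 walks forward to new $v_b$'s at $v_{3n-1}$, and so on in alternating backward-and-forward steps; the connectedness of~$H$ is invoked to ensure every element is eventually visited, and one must check at each stage that the already-visited endpoints one runs into are mapped compatibly with the new walk. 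You instead define the target positions directly as a function of membership in $(A',B')$: kept $u_a\mapsto u_1$, discarded $u_a\mapsto u_{2n}$, kept $v_b\mapsto v_{3n-1}$, discarded $v_b\mapsto v_n$. That the three types of edges remaining in a bad world all give zigzag distance exactly $2n-1$ between the chosen endpoints is an easy check (two forward walks of length $2n-1$ and one backward walk of length $2n-1$), and the intermediate $u_{c,i}$ and $v_{c,j}$ have strictly fewer incident facts than any interior copy of $u,v$ so they can be placed anywhere along the chosen walk. Your approach buys simplicity and self-containedness: the map is given in closed form, the case analysis is local to a single edge, and connectedness of~$H$ is not needed for this direction (the paper uses it to terminate the BFS, while you never need it). Note that the two constructed homomorphisms genuinely differ: in the paper's BFS, a discarded $v_b$ adjacent only to discarded $a$'s lands at $v_{3n-1}$, whereas in yours it lands at $v_n$; both choices are consistent, which your case analysis makes transparent.
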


\begin{proof}
 Observe that (1) corresponds to the soundness of the reduction, and (2) to the completeness. Intuitively, (1) holds because $\phi(\omega)$ then contains a subinstance isomorphic to~$I_{e,\Pi}^n$. To show (2), we need a more involved argument, see e.g.~Figure~\ref{fig:pp2dnffold}: when $\omega$ is bad, we can show how to ``fold back'' $\phi(\omega)$, going from the copies of $F_{\ll}$ to the copies of $F_{\rr}$, into the iterate $I_{e,\Pi}^{3n-1}$. This uses the fact that $\omega$ is bad, so the copies of~$F_{\ll}$ and $F_{\rr}$ must be sufficiently far from one another.

\begin{enumerate}
\item  Let us assume that $\phi(\omega)=J'$. We more specifically claim that~$J'$ has a subinstance which is isomorphic to~$I_{e,\Pi}^n$. To see why, drop all copies of~$u$ from~$J'$ except $u_a$ and the $u_{c,i}$, and all copies of $v$ except $v_b$ and the $v_{c,i}$, along with all facts where these elements appear. All of the original instance~$I$ except for the facts involving~$u$ and~$v$ can be found as-is in~$J'$. Now, for the others, $u_a$ has an incident copy of all edges incident to~$u$ in~$J'$ (including~$F_{\ll}$), the same is true for~$v_b$ and~$v$ (including~$F_{\rr}$), and we can use the $u_{e,i}$ and $v_{e,i}$ to witness the requisite path of copies of~$e$.
\item 	As before, let us assume that $\phi(\omega)=J'$. Let us describe the homomorphism from~$J'$ to~$I_{e,\Pi}^{3n-1}$. To do this, first map all facts of~$J'$ that do not involve a copy of~$u$ or~$v$ to the corresponding facts of~$I_{e,\Pi}^{3n-1}$ using the identity mapping. We will now explain how the copies of $u$ and $v$ in~$J'$ are mapped to copies of~$u$ and~$v$ in~$I_{e,\Pi}^{3n-1}$: this is clearly correct for the facts in~$J'$ that use these copies of~$u$ and~$v$ and that were created as copies of left-incident or right-incident facts to~$e$ in~$I$ except $F_{\ll}$ and~$F_{\rr}$. Thus, we must simply ensure that this mapping respects the facts in~$J'$ that were created as copies of~$F_{\ll}$, of~$F_{\rr}$, or of the edge~$e$, as we have argued that all other facts of~$J'$ are correctly mapped to~$I_{e,\Pi}^{3n-1}$.

 Our way to do this is illustrated in Figure~\ref{fig:pp2dnffold}.
    The first step is to take all copies of~$F_{\ll}$ in~$J'$, which correspond to vertices in~$a \in A$ that were kept, and to map them all to the element~$u$ in~$I_{e,\Pi}^{3n-1}$, which is possible as it has the incident fact~$F_{\ll}$. In Figure~\ref{fig:pp2dnffold-c}, this is only the copy of~$F_{\ll}$ on $(l, u_{\alpha})$.
    Now, we start at the elements of the form~$u_a$, and we follow the paths of $2n-1$ copies of~$e$ back-and-forth from these elements until we reach elements of the form~$v_b$: we map these paths to the first $2n-1$ edges of the path of copies of~$e$ from~$u$ to~$v$ in~$I_{e,\Pi}^{3n-1}$. In Figure~\ref{fig:pp2dnffold-c}, we reach $v_{\alpha'}$.
  From our assumption about the possible world~$J'$, none of the $v_b$ reached at that stage have an incident copy of~$F_{\rr}$, as we would otherwise have a witness to the fact that we kept two adjacent $a\in A$ and $b\in B$ in the possible world~$\omega$ of~$H$, which is impossible as~$\omega$ is bad.

  The second step is to go \emph{back} in~$J'$ on the copies of~$e$ incident to these elements that were not yet visited, and we follow a path of copies of~$e$ that were not yet mapped. We map these to the next $2n-1$ copies of~$e$, going \emph{forward} in the path from~$u$ to~$v$ in~$I_{e,\Pi}^{3n-1}$. We then reach elements of the form~$u_a$, and they do not have any incident~copies of~$F_{\ll}$ because all such edges and their
  outgoing paths were visited in the first step. In Figure~\ref{fig:pp2dnffold-c}, we reach~$u_\beta$ and~$u_\gamma$.

    The third step is to go \emph{forward} in~$J'$ on the copies of~$e$ incident to these elements that were not yet visited, and follow a path of copies of~$e$ that goes to elements of the form~$v_b$, mapping this to the last $2n-1$ edges of the path from~$u$ to~$v$ in~$I_{e,\Pi}^{3n-1}$. Some of these $v_b$ may now be incident to copies of~$F_{\rr}$, but the same is true of~$v$ in~$I_{e,\Pi}^{3n-1}$, and we have just reached~$v$. Indeed, note that we have followed $(2n-1)\times 3$ copies of~$e$ in~$J'$ in total (going forward each time), and this is equal to $2\times(3n-1)-1$, the number of copies of~$e$ in the path from~$u$ to~$v$ in~$I_{e,\Pi}^{3n-1}$. Thus, we can map the copies of~$F_{\rr}$ correctly. In Figure~\ref{fig:pp2dnffold-c}, we reach~$v_{\beta'}$ and~$v_{\gamma'}$.

  In Figure~\ref{fig:pp2dnffold-c}, we have visited everything after the third step. However, in general, there may be some elements of~$J'$ that we have not yet visited, and for which we still need to define a homomorphic image. Thus, we perform more steps until all elements are visited. Specifically, in even steps,
    we go \emph{back} on copies of~$e$ in~$J'$ from the elements reached in the previous step to reach elements that were not yet visited, going \emph{back} on the path from~$u$ to~$v$ in~$I_{e,\Pi}^{3n-1}$, reaching elements of the form $u_a$ (which cannot be incident to any copy of~$F_{\ll}$ for the same reason as in the second step). In odd steps, we go \emph{forward} in~$J'$ on copies of~$e$, going \emph{forward} on the path from~$u$ to~$v$ in~$I_{e,\Pi}^{3n-1}$, reaching elements of the form~$v_b$ in~$J'$ that we map to~$b$ in~$I_{e,\Pi}^{3n-1}$, including the~$F_{\rr}$-fact that may be incident to them.

    We repeat these additional backward-and-forward steps until everything reachable has been visited.
  At the end of the process, from our assumption that $H$ is a connected bipartite graph, we have visited all the elements of~$J'$ for which we had not defined a homomorphic image yet, and we have argued that the way we have mapped them is indeed a homomorphism.
  This concludes the construction of the homomorphism, and concludes the proof. \qedhere
\end{enumerate}
\end{proof}

\noindent
Thanks to Proposition~\ref{prp:ppcoding}, we can now prove the main result of this section:
\begin{proof}[Proof of Theorem~\ref{thm:pp2dnfred}]
  Fix the query $Q$, the instance $I$, the non-leaf edge~$e$ of~$I$ which is non-iterable, the incident pair $\Pi$ relative to which it is not iterable, and let us take the smallest $n_0 > 1$ such that $I_{e,\Pi}^{n_0}$ does not satisfy the query, but $I_{e,\Pi}^{n_0-1}$ does.

   We show the \#P-hardness of~$\PQE(Q)$ by reducing from \pptwodnf (Definition~\ref{def:pp2dnf}). Let $H = (A, B, C)$ be an input connected bipartite graph. We apply the coding of Definition~\ref{def:ppcoding} with~$n_0-1$ and obtain a TID $\pdb$. This coding can be done in polynomial time.

    Now let us use Proposition~\ref{prp:ppcoding}. We know that $I_{e,\Pi}^{n_0-1}$ satisfies~$Q$, but $I_{e,\Pi}^{3(n_0-1)-1}$ does not, because $n_0 > 1$ so $3(n_0-1)-1=3n_0-4 \geq n_0$, and as we know that $I_{e,\Pi}^{n_0}$ violates~$Q$, then so does $I_{e,\Pi}^{3(n_0-1)-1}$ thanks to Observation~\ref{obs:iterhomom}.
	Thus, Proposition~\ref{prp:ppcoding} implies that the number of good possible worlds of~$H$ is the probability that~$Q$ is satisfied in a possible world of~$\pdb$, multiplied by the constant factor~$2^{\card{A} + \card{B}}$. Thus, the number of good possible worlds of~$H$ is $\Pr_\pdb(Q) \cdot 2^{\card{A} + \card{B}}$. This shows that the reduction is correct, and concludes the proof.
\end{proof}

\section{Finding a Minimal Tight Pattern}%
\label{sec:findhard}

In the previous section, we have shown hardness for queries (bounded or unbounded) that have a model with a non-iterable, non-leaf edge. This leaves open the case of unbounded queries for which all non-leaf edges in all models can be iterated.
We first note that this case is not hypothetical, i.e., there actually exist some unbounded queries for which, in all models, all non-leaf edges can be iterated:
\begin{exa}%
  \label{exa:badq}
Consider the following Datalog program:
\begin{align*}
    R(x, y) &\rightarrow A(y), \\
   A(x), S(x, y) &\rightarrow B(y), \\
   B(x), S(y, x) &\rightarrow A(y), \\
   B(x), T(x, y) &\rightarrow \mathrm{Goal}().
\end{align*}
This program is unbounded, as it tests if the instance contains a path of the form $R(a, a_1), \allowbreak S(a_1, a_2), S^-(a_2, a_3), \ldots, S(a_{2n+1}, a_{2n+2}), T(a_{2n+2}, b)$.
However, it has no model with a non-iterable, non-leaf edge: in every model, the query is satisfied by a path of the form above, and we cannot break such a path by iterating a non-leaf edge (i.e., this yields a longer path of the same form).
\end{exa}

Importantly, if we tried to reduce from \pptwodnf for this query as in the previous section, then the reduction would fail because the edge is iterable: in possible worlds of the bipartite graph, where we have not retained two adjacent vertices, we would still have matches of the query in the corresponding possible world of the probabilistic instance, where we go from a chosen vertex to another by going \emph{back-and-forth} on the copies of~$e$ that code the edges of the bipartite graph. These are the ``back-and-forth matches'' which were missed in~\cite{jung2014reasoning,JL12} and are discussed in~\cite{JL20}.

In light of this, we handle the case of such queries in the next two sections. In this section, we prove a general result for unbounded queries (independent from the previous section): all unbounded queries must have a model with a \emph{tight edge}, which is additionally \emph{minimal} in some sense. Tight edges and iterable edges will then be used in Section~\ref{sec:ustcon} to show hardness for unbounded queries which are not covered by the previous section.

Let us start by defining this notion of \emph{tight edge}, via a rewriting operation on instances called a \emph{dissociation}.

\begin{defi}%
  \label{def:coarsedissociation}
    The \emph{dissociation} of a non-leaf edge $e = (u, v)$ in~$I$ is the instance $I'$ where:
	\begin{itemize}
		\item $\dom(I') = \dom(I) \cup \{u', v'\}$ where $u'$ and $v'$ are fresh.
         \item $I'$ is $I$ where we create a copy of the edge $e$ on~$(u, v')$ and on~$(u', v)$, and then remove all non-unary facts covered by~$e$ in~$I'$.
	\end{itemize}
\end{defi}

\noindent
Dissociation is illustrated  in the following example (see also Figure~\ref{fig:dissociation}):

\begin{exa}
  Consider the following instance:
  \[
  I = \{R(a, b), S(b, a), T(b, a), R(a, c), S(c, b), S(d, b), U(a, a), U(b, b)\}.
  \]
  The edge $(a, b)$ is non-leaf, as witnessed by the edges $\{a, c\}$ and $\{b, c\}$.
  The result of the dissociation is then:
  \begin{align*}
     I' = \{ &R(a, b'), S(b', a), T(b', a), R(a', b), S(b, a'), T(b, a'),\\ &R(a, c), S(c, b), S(d, b), U(a, a), U(a', a'), U(b, b), U(b', b')\}
  \end{align*}
\end{exa}

We then call an edge \emph{tight} in a model of~$Q$ if dissociating it makes~$Q$ false:

\begin{defi}
  Let $Q$ be a query and $I$ be a model of~$Q$. An edge $e$ of~$I$ is \emph{tight} if it is non-leaf, and the result of the dissociation of~$e$ in~$I$ does not satisfy~$Q$.
  A \emph{tight pattern} for the query~$Q$ is a pair $(I, e)$ of a model $I$ of~$Q$ and of an edge $e$ of~$I$ that is tight.
\end{defi}

Intuitively, a tight pattern is a model of a query containing at least three
edges $\{u, a\}, \allowbreak \{a, b\},\allowbreak  \{b, v\}$ (possibly $u = v$) such that performing a dissociation makes the query false.  For instance, for the unsafe CQ $Q_0: R(w, x), S(x,y), T(y, z)$ from~\cite{DaSu07}, a tight pattern would be $\{R(a, b), S(b, c), T(c, d)\}$ with the edge $(b, c)$. Again, not all unsafe CQs have a tight pattern, e.g., ${Q_0': R(x, x), S(x, y), T(y, y)}$, and ${Q_1: (R(w, x), S(x, y)) \lor (S(x, y), T(y, z))}$ from Section~\ref{sec:pp2dnf} do not.

For our purposes, we will not only need tight patterns, but \emph{minimal tight patterns}:

\begin{defi}%
	\label{def:minimal}
        Given an instance $I$ with a non-leaf edge $e = (a, b)$, the \emph{weight} of~$e$  is the number of facts covered by~$e$ in~$I$ (including unary facts). The \emph{side weight} of~$e$ is the number of
        $\sigma^\leftrightarrow$-facts
        in~$I$ that are left-incident to~$e$, plus the number of $\sigma^\leftrightarrow$-facts in~$I$ that are right-incident\footnote{Recall that left-incident and right-incident facts do not include unary facts.} to~$e$.
        Given a query~$Q$, we say that a tight pattern $(I, e)$ is \emph{minimal} if:
	\begin{itemize}
		\item $Q$ has no tight pattern $(I', e')$ where the weight of~$e'$ is
		strictly less than that of~$e$; and
		\item $Q$ has no tight pattern $(I', e')$ where the weight of~$e'$ is
		equal to that of~$e$ and the side weight of~$e'$ is strictly less than
		that of~$e$.
	\end{itemize}
\end{defi}

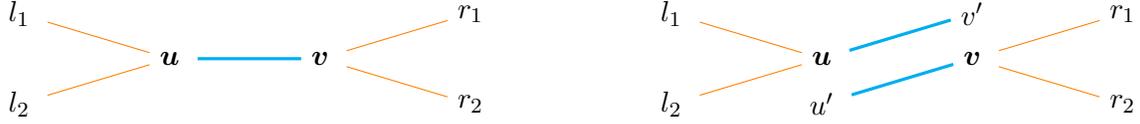
\begin{figure}[t]
	\begin{tikzpicture}[node distance = 0.5cm,shorten <=3pt,yscale=.6]
	\node (v1) at (0, 0) {$l_1$};
	\node (v2) at (0, -2) {$l_2$};
	\node (u) at (2, -1) {$\bm{u}$};
	\node (v) at (4, -1) {$\bm{v}$};
	\node (u1) at (6, -2) {$r_2$};
	\node (u2) at (6, 0) {$r_1$};
	\draw (u) edge[very thick,cyan] (v);
	\draw (v1) edge[orange] (u);
	\draw (v2) edge[orange] (u);
	\draw (v) edge[orange] (u1);
	\draw (v) edge[orange] (u2);
	\end{tikzpicture}
	\hfill
	\begin{tikzpicture}[node distance = 0.5cm,shorten <=3pt,yscale=.6]
	\node (v1) at (0, 0) {$l_1$};
	\node (v2) at (0, -2) {$l_2$};
	\node (u) at (2, -1) {$\bm{u}$};
	\node (up) at (2, -2) {$u'$};
	\node (v) at (4, -1) {$\bm{v}$};
	\node (vp) at (4, 0) {$v'$};
	\node (u1) at (6, 0) {$r_1$};
	\node (u2) at (6, -2) {$r_2$};
	\draw (v1) edge[orange] (u);
	\draw (v2) edge[orange] (u);
	\draw (u) edge[very thick,cyan] (vp);
	\draw (up) edge[very thick,cyan] (v);
	\draw (v) edge[orange] (u1);
	\draw (v) edge[orange] (u2);
	\end{tikzpicture}
        \caption{An instance (left) with a non-leaf edge $(u, v)$, and the
        result (right) of dissociating $(u, v)$.}%
	\label{fig:dissociation}
\end{figure}

\noindent
We can now state the main result of this section:

\begin{thm}%
	\label{thm:findhard2}
	Every unbounded \ucqinf~$Q$ has a model $I$ with a non-leaf edge $e$ such that $(I, e)$ is a minimal tight pattern.
\end{thm}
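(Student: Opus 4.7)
The plan is to prove the contrapositive: if $Q$ has no tight pattern at all, then $Q$ is bounded (i.e., equivalent to a UCQ). This suffices because, once \emph{some} tight pattern $(I,e)$ of $Q$ has been produced, a minimal one is obtained immediately by well-ordering: pick a tight pattern of least weight and, among those, one of least side weight. Both are natural numbers, so the minimum exists.

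So assume that for every model $I$ of $Q$ and every non-leaf edge $e$ of $I$, the dissociation of $e$ in $I$ is still a model of $Q$. The key combinatorial observation is that \emph{each dissociation strictly decreases the number of non-leaf edges of the Gaifman graph}: the fresh vertices $u',v'$ each have only one incident undirected edge and are therefore leaves, so the two added edges $(u,v')$ and $(u',v)$ are leaf edges; the degrees of all other vertices (including $u$ and $v$) are unchanged; and the non-leaf edge $(u,v)$ itself is removed. Iterating dissociations from any model $I$ of $Q$ thus terminates after finitely many steps, yielding an instance $I^\ast \models Q$ (by induction using the hypothesis) whose Gaifman graph has \emph{no} non-leaf edge. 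Each connected component of such an $I^\ast$ is then a \emph{star}: either an isolated vertex, a single edge between two degree-one vertices, or a center vertex with some leaves attached (with arbitrary unary facts allowed on any vertex).

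Replace $I^\ast$ by its core $C(I^\ast)$. By homomorphism-closure and the equivalence $I^\ast \leftrightarrow C(I^\ast)$, we have $C(I^\ast)\models Q$, and being a subinstance of $I^\ast$ it is still a disjoint union of stars. The goal is to bound $|C(I^\ast)|$ by a constant $N$ depending only on the finite sub-signature. Within a single star, any two leaves carrying the same set of incident $\sigma^\leftrightarrow$-facts (the edge facts to the center together with the unary facts on the leaf itself) would be identified by a retract, so in the core all leaves have pairwise distinct incidence types; since $\sigma$ has arity two, the number of such types is finite, hence each star has bounded size. Similarly, no two connected components of a core can be isomorphic, and the number of star isomorphism types is bounded, so the number of components is bounded as well. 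Hence $|C(I^\ast)|\leq N$.

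Composing $C(I^\ast)\hookrightarrow I^\ast \to I$ (using the canonical collapsing homomorphism $u'\mapsto u$, $v'\mapsto v$ at each dissociation step) shows that every model $I$ of $Q$ admits a homomorphism from some model of $Q$ of size at most $N$. Up to isomorphism there are only finitely many such bounded-size $\sigma$-instances; listing those that satisfy $Q$ as $J_1,\ldots,J_k$ and letting $q_{J_i}$ be the conjunctive query obtained by freezing $J_i$, we get $Q \equiv \bigvee_i q_{J_i}$, so $Q$ is bounded. The step I expect to require the most care is the size bound on cores of disjoint unions of stars; once the strict decrease of non-leaf edges under dissociation has been verified and the star structure of $I^\ast$ follows, the rest reduces to standard finiteness and homomorphism-closure bookkeeping that leverages the arity-two assumption on $\sigma$.
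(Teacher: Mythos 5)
Your proof is correct and follows essentially the same route as the paper: both hinge on the facts that dissociation strictly decreases the number of non-leaf edges, and that instances with no non-leaf edges (disjoint unions of stars over an arity-two signature) compress homomorphically to bounded size, and both then extract a minimal tight pattern by taking least weight and then least side weight. The only organizational difference is that the paper derives a contradiction by starting from a minimal model of $Q$ whose size exceeds the star bound (via the characterization of unboundedness through minimal models) and pulling back a small satisfying subinstance, whereas you argue the contrapositive directly by coring the star-shaped instance reached from an arbitrary model and concluding that $Q$ is a finite UCQ; these are the same argument phrased differently.
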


The idea of how to find tight patterns is as follows. We first note that the only instances without non-leaf edges are intuitively disjoint unions of star-shaped subinstances.
Now, if a query is unbounded, then its validity cannot be determined simply by looking at such subinstances (unlike $Q'_0$ or~$Q_1$  from Section~\ref{sec:pp2dnf}), so there must be a model of the query with an edge that we cannot dissociate without breaking the query, i.e., a tight pattern. Once we know that there is a tight pattern, then it is simple to argue that we can find a model with a tight edge that is minimal in the sense that we require.

To formalize this intuition, let us first note that any \emph{iterative dissociation process}, i.e., any process of iteratively applying dissociation to a given instance, will necessarily terminate. More precisely, an \emph{iterative dissociation process} is a sequence of instances starting at an instance~$I$ and where each instance is defined from the previous one by performing the dissociation of some non-leaf edge. We say
that the process \emph{terminates} if it reaches an instance where there is no edge left to dissociate, i.e., all edges are leaf edges.
\begin{obs}%
\label{obs:nonleaf}
  For any instance $I$, any iterative dissociation process will terminate in $n$ steps, where $n$ is the number of non-leaf edges in~$I$.
\end{obs}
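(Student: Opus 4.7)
The plan is to prove that every single dissociation step strictly decreases the number of non-leaf edges by exactly one. Once this invariant is established, induction gives that starting from an instance with $n$ non-leaf edges, any iterative dissociation process reaches an instance with zero non-leaf edges after exactly $n$ steps, at which point by definition it must terminate.

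To establish the invariant, I would analyze what happens to the Gaifman graph when a single non-leaf edge $e = (u,v)$ of $I$ is dissociated to produce $I'$. First, the Gaifman edge $\{u,v\}$ disappears in $I'$: by Definition~\ref{def:coarsedissociation}, all non-unary facts covered by $e$ are removed, and these are precisely the facts in which $u$ and $v$ co-occur, so no fact of $I'$ contains both of them. Second, the two newly introduced edges $\{u,v'\}$ and $\{u',v\}$ are leaf edges, since $u'$ and $v'$ are fresh and occur in no other facts of $I'$. Third, no other edge of $I$ changes its leaf/non-leaf status: for any vertex $w \notin \{u,v,u',v'\}$, the facts incident to $w$ in $I'$ are exactly those of $I$, so its leaf status is preserved; and for $u$ itself, since $e$ is non-leaf in $I$ there exists some incident edge $\{u,w\}$ of $I$ with $w \neq v$, and this edge survives in $I'$, so $u$ still has at least two incident edges in $I'$ (namely $\{u,w\}$ and $\{u,v'\}$) and remains non-leaf; the same argument applies symmetrically to $v$.

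Combining these three observations, dissociation removes exactly one non-leaf edge (namely $\{u,v\}$ itself), adds only leaf edges ($\{u,v'\}$ and $\{u',v\}$), and preserves the leaf/non-leaf status of every other edge of the Gaifman graph. Hence the count of non-leaf edges drops by exactly one, and the claim follows by induction on the step count.

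There is no genuine obstacle in the argument; the only subtlety to handle carefully is that unary facts of the form $R(u,u)$ and $R(v,v)$ are preserved by dissociation, but since unary facts never create a Gaifman edge between distinct vertices they do not obstruct the disappearance of the edge $\{u,v\}$, nor do they affect the leaf status of $u$ or $v$.
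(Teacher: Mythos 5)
Your proposal is correct and follows essentially the same approach as the paper's proof: you show that a single dissociation step removes exactly one non-leaf edge (the dissociated one, whose Gaifman edge vanishes), adds only leaf edges ($\{u,v'\}$ and $\{u',v\}$), and leaves the leaf/non-leaf status of every other element and edge unchanged, then conclude by induction. The only difference is presentational — you make the unary-fact point explicit, which the paper leaves implicit.
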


\begin{proof}
It is sufficient to show that an application of dissociation decreases the number of non-leaf edges by $1$. To do so, we consider an instance $I$ with a non-leaf edge $e$, and show that the dissociation~$I'$ of~$e$ in~$I$, has $n-1$ non-leaf edges.

Let us write $e = (a, b)$. The new elements $a'$ and $b'$ in~$I'$ are leaf elements, and for any other element of the domain of~$I'$, it is a leaf in~$I'$ iff it was a leaf in~$I$: this is clear for elements that are not $a$ and~$b$ as they occur exactly in the same edges, and for $a$ and~$b$ we know that they were not leaves in~$I$  (they occurred in $e = \{a, b\}$ and in some other edge), and they are still not leaves in~$I'$ (they occur in the same other edge and in $\{a, b'\}$ and $\{b, a'\}$, respectively).

Thus, the edges of~$I'$ that are not $\{a, b'\}$ or $\{a', b\}$ are leaf edges in~$I'$ iff they were in~$I$. So, in terms of non-leaf edges the only difference between $I$ and $I'$ is that we removed the non-leaf edge $\{a, b\}$ from $I$ and we added the two edges $\{a, b'\}$ and $\{a',b\}$ in~$I'$ which are leaf edges because $a'$ and~$b'$ are leaves. Thus, we conclude the claim.
\end{proof}

Let us now consider instances with no non-leaf edges. As we explained, they are intuitively disjoint unions of star-shaped subinstances, and in particular they homomorphically map to some constant-sized subset of their facts,
as will be crucial when studying our unbounded query.

\begin{prop}%
	\label{prp:stars}
	For every signature $\sigma$, there exists a bound $k_\sigma > 0$,  ensuring the following: for every instance $I$  on~$\sigma$ having no non-leaf edge, there exists a subinstance $I' \subseteq I$ such that $I$ has a homomorphism to~$I'$ and such that we have $\card{I'}<k_\sigma$.
\end{prop}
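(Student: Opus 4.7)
\medskip

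\noindent\textbf{Proof plan for Proposition~\ref{prp:stars}.}
The plan is to exploit the very restricted shape of instances with no non-leaf edge. First I would observe that if $I$ has no non-leaf edge, then every edge $\{u,v\}$ of $I$ has a leaf endpoint, which forces each connected component of the Gaifman graph of~$I$ to be either an isolated vertex (carrying only unary facts) or a \emph{star}: a single non-leaf ``center''~$v$ together with some leaves $u_1, \dots, u_m$, each $u_i$ appearing only in the edge $\{v,u_i\}$ (the degenerate case of a single edge between two leaves fits this pattern by picking either endpoint as the center).

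Next I would define a notion of type and use it to fold the instance. For a leaf~$u$ attached to a center~$v$, the \emph{leaf type} records the set of unary $\sigma$-facts $S(u,u) \in I$ together with the set of non-unary $\sigma$-facts whose domain is $\{u,v\}$, classified by their orientation (i.e., which are $S(u,v)$ and which are $S(v,u)$). The \emph{star type} of a component records the set of unary $\sigma$-facts at the center together with the set of leaf types realized by its leaves; isolated vertices carry only their set of unary facts. Crucially, each of these type collections is a subset of a finite set depending only on~$\sigma$: there are at most $2^{\card{\sigma}}$ possible sets of unary facts at a single vertex, at most $4^{\card{\sigma}}$ orientation patterns for non-unary facts on a single edge, hence at most $T_\ell \colonequals 2^{\card{\sigma}} \cdot 4^{\card{\sigma}}$ leaf types and at most $T_s \colonequals 2^{\card{\sigma}} \cdot 2^{T_\ell}$ star types.

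To build $I'$, I would pick, for each star type realized in~$I$, one component $C$ of~$I$ of that type as a representative, and inside~$C$ keep exactly one leaf per leaf type realized there, discarding the facts on all other leaves of~$C$; for isolated vertices I would keep one representative per unary-fact type. The resulting subinstance~$I'$ is a subinstance of~$I$ whose size is bounded by $k_\sigma \colonequals T_s \cdot (T_\ell \cdot c_1 + c_2) + c_3$ for constants $c_1, c_2, c_3$ depending only on~$\sigma$ (bounding, respectively, the number of facts per leaf, the unary facts at a center, and the contribution of isolated-vertex representatives). The homomorphism $h \colon I \to I'$ sends each component of~$I$ to the chosen representative of its star type, sending the center to the representative center and each leaf to the unique representative leaf of matching leaf type in the target; since leaf types and star types record exactly the unary and binary facts incident to each vertex, every fact of~$I$ has its image among the facts of~$I'$, so $h$ is a homomorphism.

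The only delicate point — and where I expect most of the writing effort — is checking carefully that the leaf-type/star-type bookkeeping really captures all facts that need to be preserved. In particular, one must be careful with unary facts at the center (which belong to every edge of the star and must be matched by the chosen representative center) and with the degenerate leaf-leaf edge case (where the choice of which endpoint plays the role of the center is arbitrary but must be made consistently). Once these bookkeeping issues are dispatched, the rest is a routine counting argument yielding the bound~$k_\sigma$.
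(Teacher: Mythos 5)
Your proposal is correct and takes essentially the same route as the paper: identify that components are stars (or isolated vertices/single edges), collapse leaves within a component according to the multiset of incident facts (your ``leaf type'' corresponds to the paper's equivalence relation $\sim$ on leaves, both of which account for orientation), and then collapse isomorphic components (your ``star type'' corresponds to the paper's grouping of the collapsed components up to isomorphism). The one packaging difference is that you name the types explicitly up front and fold both levels in a single construction, whereas the paper first handles the connected case and then bootstraps to the disconnected case, but this is cosmetic rather than a genuinely different argument.
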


\begin{proof}
  We start by outlining the main idea behind of the proof. Connected instances having no non-leaf edges can have at most one non-leaf element, with all edges using this element and a leaf. Now, each edge can be described by the set of facts that it covers, for which there are finitely many possibilities (exponentially many in the signature size). We can thus collapse together the edges that have the same set of facts and obtain the subinstance~$I'$.
  Now, disconnected instances having no non-leaf edges are unions of the connected instances of the form above, so the number of possibilities up to homomorphic equivalence is again finite (exponential in the number of possible connected instances). We can then conclude by collapsing together connected components that are isomorphic.

  Let us now formally prove the result, first for connected instances $I$. In this case, we define the constant $k_\sigma' \colonequals 2^{4 \times \card{\sigma}}$.  There are two cases. The first case is when all elements of~$I$ are leaves: then, as $I$ is connected, it must consist of a single edge $(a, b)$ and consists of at most $4 \card{\sigma}$ facts: there are $\card{\sigma}$ possible facts of the form $R(a, b)$, plus $\card{\sigma}$ possible facts of the form $R(b, a)$, plus $\card{\sigma}$ possible facts of the form $R(a, a)$, plus $\card{\sigma}$ possible facts of the form $R(b, b)$. Thus, taking $I' = I$ and the identity homomorphism concludes the proof.
  The second case is when $I$ contains a non-leaf element~$a$. In this case, consider all edges $\{a, b_1\}, \ldots, \{a, b_n\}$ incident to~$a$.
  Each of the $b_i$ must be leaves: if some $b_i$ is not a leaf then $\{a, b_i\}$ would be a non-leaf edge because neither $a$ nor~$b_i$ would be leaves, contradicting our assumption that $I$ has no non-leaf edge.
  We then define an equivalence relation~$\sim$ on the $b_i$ by writing $b_i \sim b_j$ if the edges $\{a, b_i\}$ and $\{a, b_j\}$ contain the exact same set of facts (up to the isomorphism mapping $b_i$ to $b_j$): there are at most $k_\sigma'$ equivalence classes. The requisite subset of~$I$ and the homomorphism can thus be obtained by picking one representative of each equivalence class, keeping the edges incident to these representatives, and mapping each $b_i$ to the chosen representative of its class.

  Second, we formally show the result for instances $I$ that are not necessarily connected. Letting $I$ be such an instance, we consider its connected components $I_1, \ldots, I_m$, i.e., the disjoint subinstances induced by the connected components of the Gaifman graph of~$I$.
  Each of these is connected and has no non-leaf edges, so, by the proof of the previous paragraph,
  there are subsets $I_1' \subseteq I_1, \ldots, I_m' \subseteq I_m$ with at most $k_\sigma'$ facts each and a homomorphism of each $I_i$ to its $I_i'$.
  Now, there are only constantly many instances with at most $k_\sigma'$ facts up to isomorphism: let $k_\sigma''$ be their number, and let $k_\sigma \colonequals k_\sigma'' \times k_\sigma'$. The requisite subinstance and homomorphism is obtained by again picking one representative for each isomorphism equivalence class of the $I_i'$ (at most $k_\sigma''$ of them, so at most $k_\sigma$ facts in total) and mapping each $I_i$ to the $I_j'$ which is the representative for its~$I_i'$. This concludes the proof.
\end{proof}

We can now prove Theorem~\ref{thm:findhard2} by appealing to the unboundedness of the query.
To do this, we will rephrase unboundedness in terms of \emph{minimal models}:

\begin{defi}
A \emph{minimal model} of a query~$Q$ is an instance~$I$ that satisfies~$Q$ and such that every proper subinstance of~$I$ violates~$Q$.
\end{defi}

We can rephrase the unboundedness of a \ucqinf $Q$ in terms of minimal models: $Q$ is unbounded iff it has infinitely many minimal models. Indeed, if a query $Q$ has finitely many minimal models, then it is clearly equivalent to the UCQ formed from these minimal models, because it is closed under homomorphisms.
Conversely, if $Q$ is equivalent to a UCQ, then it has finitely many minimal models which are obtained as homomorphic images of  the UCQ disjuncts. Thus, we can clearly rephrase unboundedness as follows:
\begin{obs}%
  \label{obs:infmod}
  A \ucqinf query $Q$ is unbounded iff it has a minimal model $I$ with more
  than~$k$ facts for any $k \in \NN$.
\end{obs}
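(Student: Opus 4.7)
The plan is to prove the two directions of the biconditional separately, using the definition of unboundedness (not equivalent to any UCQ) together with two simple finiteness facts: for a fixed arity-two signature~$\sigma$, there are only finitely many $\sigma$-instances of bounded size up to isomorphism, and every finite model of a query contains a $\subseteq$-minimal subinstance that still satisfies the query.

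For the reverse direction, I would argue the contrapositive: if~$Q$ is bounded, say equivalent to a UCQ $q_1 \vee \cdots \vee q_m$, then setting $k$ to be the maximum number of atoms appearing in any $q_j$, every minimal model of~$Q$ has at most $k$ facts. Indeed, for a minimal model~$I$, some disjunct $q_j$ has a homomorphism $h$ into~$I$; the image $h(q_j)$ is a subinstance of~$I$ with at most~$k$ facts that already satisfies $q_j$, hence~$Q$, so minimality forces $I = h(q_j)$ and $\card{I} \leq k$. Thus if all minimal models are of size at most~$k$, the existence of a minimal model with more than $k$ facts fails.

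For the forward direction, I would again argue the contrapositive: assume there is a uniform bound $k$ such that every minimal model of~$Q$ has at most $k$ facts. Because~$\sigma$ is finite and arity-two, the number of $\sigma$-instances of size at most~$k$, taken up to isomorphism, is finite; let $M_1, \ldots, M_n$ be representatives of the isomorphism classes of minimal models, and let $q_{M_i}$ be the canonical CQ obtained from $M_i$ by existentially quantifying its elements. I claim $Q$ is equivalent to $\bigvee_{i=1}^n q_{M_i}$. The direction $q_{M_i} \Rightarrow Q$ follows from homomorphism-closure of~$Q$: $I \models q_{M_i}$ gives a homomorphism $M_i \to I$, and $M_i \models Q$ then yields $I \models Q$. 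Conversely, if $I \models Q$ then because~$I$ is finite we may take a $\subseteq$-minimal subinstance $M \subseteq I$ satisfying~$Q$; this $M$ is a minimal model of~$Q$, hence isomorphic to some~$M_i$, so the inclusion $M \hookrightarrow I$ composed with that isomorphism witnesses $I \models q_{M_i}$. Consequently~$Q$ is equivalent to a UCQ, i.e., bounded.

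No real obstacle is expected: both directions are direct verifications once the right finiteness observation is invoked. The only mildly subtle point is the existence of a minimal subinstance in the forward direction, which follows from well-foundedness of $\subsetneq$ on the finite set of subinstances of~$I$, and the implicit use that homomorphism-closure propagates satisfaction from~$M_i$ to any instance that it maps into.
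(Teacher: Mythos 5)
Your proof is correct and takes essentially the same route as the paper, which also argues that a bounded query's minimal models are homomorphic images of its UCQ disjuncts (hence of bounded size), and conversely that finitely many minimal models up to isomorphism yield an equivalent UCQ via homomorphism-closure. Your version merely makes explicit the size bookkeeping and the existence of a $\subseteq$-minimal satisfying subinstance, both of which the paper leaves implicit.
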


We are now ready to conclude the proof of Theorem~\ref{thm:findhard2}:
\begin{proof}[Proof of Theorem~\ref{thm:findhard2}]
We start by showing the first part of the claim: any unbounded query has a tight pattern. Let $k_\sigma$ be the bound from Proposition~\ref{prp:stars}. By Observation~\ref{obs:infmod}, let $I_0$ be a minimal model with more than $k_\sigma$ facts. Set $I \colonequals I_0$ and let us apply an iterative dissociation process: while $I$ has edges that are non-leaf but not tight, perform the dissociation, yielding $I'$, and let $I \colonequals I'$.

Observation~\ref{obs:nonleaf} implies that the dissociation process must terminate after at most $n_0$ steps, where $n_0$ is the number of non-leaf edges of~$I_0$. Let $I_n$ be the result of this process, with $n \leq n_0$. If $I_n$ has a non-leaf edge $e$ which is tight, then we are done as we have found a tight pattern $(I, e)$. Otherwise, let us reach a contradiction.

First notice that, throughout the rewriting process, it has remained true that $I$ is a model of~$Q$. Indeed, if performing a dissociation breaks this, then the dissociated edge was tight. Also notice that, throughout the rewriting, it has remained true that $I$ has a homomorphism to~$I_0$: it is true initially, with the identity homomorphism, and when we dissociate $I$ to~$I'$ then $I'$ has a homomorphism to~$I$ defined by mapping the fresh elements $a'$ and $b'$ to the original elements $a$ and~$b$ and as the identity otherwise. Hence, $I_n$ is a model of~$Q$ having a homomorphism to~$I_0$.

Note that $I_n$ has no non-leaf edges. Thus, Proposition~\ref{prp:stars} tells us that $I_n$ admits a homomorphism to some subset $I_n'$ of size at most~$k_\sigma$. This homomorphism witnesses that $I_n'$ also satisfies~$Q$. But now, $I_n'$ is a subset of~$I_n$ so it has a homomorphism to~$I_n$, which has a homomorphism to~$I_0$.
Let $I_0' \subseteq I_0$ be the image of~$I_n'$ in~$I_0$ by the composed homomorphism. It has at most $k_\sigma$ facts, because $I_n'$ does; and it satisfies $Q$ because $I_n'$ does. But as $I_0$ had more than $k_\sigma$ facts, $I_0'$ is a strict subset of $I_0$ that satisfies $Q$. This contradicts the minimality of~$I_0$. Thus, we conclude the first part of the claim.

It only remains to show the second part of the claim: there exists a minimal tight pattern. We already concluded that $Q$ has a tight pattern $(I, e)$, and $e$ has some finite weight $w_1 > 0$ in~$I$. Pick the minimal $0 < w_1' \leq w_1$ such that $Q$ has a tight pattern $(I', e')$ where $e'$ has weight $w_1'$. Now, $e'$ has some finite side weight $w_2 \geq 2$ in~$I'$. Pick the minimal $2 \leq w_2' \leq w_2$ such that $Q$ has a tight pattern $(I'', e'')$, where $e'$ has weight $w_1'$ and has side weight $w_2'$. We can then see that $(I'', e'')$ is a minimal tight pattern by minimality of~$w_1'$ and~$w_2'$. This concludes the proof.
\end{proof}

\section{Hardness with Tight Iterable Edges}%
\label{sec:ustcon}
In this section, we conclude the proof of Theorem~\ref{thm:main2} by showing that a minimal tight pattern can be used to show hardness when it is iterable. Formally:

\begin{thm}%
  \label{thm:ustcon}
  For every \ucqinf~$Q$, if $Q$ has a model $I$ with a non-leaf edge $e$ that is iterable
  then $\PQE(Q)$ is \#P-hard.
\end{thm}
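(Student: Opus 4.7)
My plan is to reduce from \stcon (Definition~\ref{def:stcon}). The iterability hypothesis alone does not obviously force hardness (even safe queries such as $R(x,y)$ admit models with iterable non-leaf edges), so the reduction must combine it with the tightness and minimality supplied by Theorem~\ref{thm:findhard2}. Concretely, I first invoke that theorem to obtain a minimal tight pattern $(I,e)$ for $Q$, and appeal to the iterability assumption to fix an incident pair $\Pi=(F_{\ll},F_{\rr})$ with $F_{\ll}=R_{\ll}(l,u)$, $F_{\rr}=R_{\rr}(v,r)$, and $e=(u,v)$, relative to which $e$ is iterable. In the eventual application to Theorem~\ref{thm:main2}, iterability of the minimal tight edge follows from assuming that Theorem~\ref{thm:pp2dnfred} does not already apply.

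Given an st-graph $G=(W,C)$, I build a TID $\pdb=(J,\pi)$ by an adaptation of Definition~\ref{def:ppcoding}. For each $w\in W$, introduce a ``$u$-port'' $u_w$ and a ``$v$-port'' $v_w$ carrying copies of all non-$\Pi$ left- and right-incident facts and unary facts of~$u$ and~$v$; include one fresh copy of $I\setminus\{u,v\}$. For each undirected edge $\{x,y\}\in C$, insert a fresh copy of a short iterate of~$e$ (as in Definition~\ref{def:iteration}) between $u_x$ and $v_y$, plus a symmetric copy between $u_y$ and $v_x$, so that the edge may be used in either orientation. Assign probability $0.5$ to the copy of~$F_{\ll}$ incident to~$u_s$ and to the copy of~$F_{\rr}$ incident to~$v_t$, and probability~$1$ to every other fact, so that $\pdb$ has $2^{|C|}$ equiprobable worlds, one per subset $C'\subseteq C$. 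The construction is polynomial in~$|G|$.

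Correctness amounts to showing that the world for~$C'$ satisfies~$Q$ iff $s$ and~$t$ are connected in~$(W,C')$. Soundness is essentially immediate from iterability: an $s$-to-$t$ walk of length $\ell$ in~$C'$ yields inside the world a subinstance isomorphic to $I_{e,\Pi}^{\ell}$, which models~$Q$ by Definition~\ref{def:iterable}, and the world then models~$Q$ by homomorphism-closure. Completeness is the core difficulty and is where tightness and minimality enter: if $s$ and $t$ are disconnected in~$C'$, then any hypothetical homomorphic image of a model of~$Q$ inside the world must in particular send the tight edge~$e$ somewhere, and because $e$ is \emph{tight} its image cannot straddle a ``dissociated'' spot. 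Tracking port labels along the chain of gadgets that the image traverses then extracts an actual $s$-to-$t$ path in~$C'$, a contradiction.

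The main obstacle is completeness, and specifically ruling out the ``back-and-forth matches'' identified as a gap in~\cite{JL20}. The danger is that a bad world may admit a match that zig-zags between the $s$-component and the $t$-component by reusing facts of neighbouring gadgets, mimicking the iteration without witnessing an $s$-$t$ path. Closing this gap is precisely the role of the weight and side-weight minimality in Definition~\ref{def:minimal}: any such short-circuit would produce an alternative tight pattern of strictly smaller weight or side weight than $(I,e)$, contradicting minimality. I expect the formal argument to require a careful case analysis of how each domain element of~$I$ can be mapped into a gadget, together with verification that dissociating any gadget-edge not covered by the traversal indeed destroys the purported match — here Observation~\ref{obs:iterhomom} and the uniform gadget size chosen in the construction are the technical levers.
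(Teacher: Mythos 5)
Your overall direction is right — reduce from \stcon{}, use iterability for soundness, and use the tightness and minimality of the pattern for completeness — and you correctly observe that the iterability hypothesis alone cannot suffice, which is indeed why the paper's proof of Theorem~\ref{thm:ustcon} immediately works with a minimal tight pattern $(I,e)$ rather than an arbitrary iterable edge. But as written the proposal has two genuine gaps.

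First, the probability assignment does not encode subsets of~$C$. You put probability~$0.5$ only on the one copy of~$F_{\ll}$ at~$u_s$ and the one copy of~$F_{\rr}$ at~$v_t$, which gives four possible worlds, not~$2^{|C|}$; the claimed bijection with subsets $C'\subseteq C$ is not defined. The paper's reduction (Definition~\ref{def:concoding}) instead makes probabilistic one fact \emph{per edge}: for each $c\in C$ it picks an orientation and sets to~$0.5$ the copy of a single chosen non-unary fact $F_{\mm}$ \emph{covered by}~$e$ on that orientation. This choice of~$F_{\mm}$ (rather than the incident facts~$F_{\ll},F_{\rr}$, which is what the \pptwodnf{} coding of Definition~\ref{def:ppcoding} uses) is essential, because when~$F_{\mm}$ is dropped the corresponding copy of~$e$ becomes strictly lighter, which is what the completeness argument hinges on.

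Second, and more fundamentally, you do not actually rule out back-and-forth matches; you only gesture at minimality. The paper does not argue by ``tracking port labels'' or extracting an $s$--$t$ path directly from a hypothetical match. Instead, it introduces the auxiliary notion of a \emph{fine dissociation} (Definition~\ref{def:finedissociation}) of~$e$ relative to~$\Pi$ and~$F_{\mm}$, shows via an $s,t$-cut that every bad possible world of the coding admits a homomorphism \emph{to} this fine dissociation (Proposition~\ref{prp:concoding}(2)), and then proves separately — this is Lemma~\ref{lem:magic}, the technical heart of the whole section — that the fine dissociation does not satisfy~$Q$. That lemma is a several-stage rewriting argument: it twice dissociates the lighter copies of~$e$ (permitted because they cannot be tight, since~$(I,e)$ has minimal weight), folds the result back, then twice dissociates the full-weight copies (permitted because they have strictly smaller side weight than~$e$, since one of~$F_{\ll},F_{\rr}$ is absent), and finally collapses to the ordinary dissociation of~$e$ in~$I$, contradicting tightness. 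Nothing in your sketch supplies this; ``any such short-circuit would produce an alternative tight pattern of strictly smaller weight or side weight'' is precisely the claim that needs a proof, and the fine dissociation is the device the paper uses to prove it. Without that ingredient the completeness direction is unsubstantiated.
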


This covers all the queries to which Section~\ref{sec:pp2dnf} did not apply. We note however that it does not subsume the result of that section, i.e., there are some unbounded queries to which it does not apply.

\begin{exa}
  Consider again the RPQ $R S^* T$ from Example~\ref{exa:rstrpq}. Recall that we can find some models with iterable edges (e.g., $\{R(a, b), S(b, c), T(c, d), R(a', b'), S(b', c'), T(c', d')\}$), but this query has no models with an iterable edge which is tight. Thus, hardness for this query cannot be shown with the result in this section, and we really need Theorem~\ref{thm:pp2dnfred} to cover it. Of course, there are also some unbounded queries for which hardness can be shown with either of the two results, e.g., a disjunction of the RPQ $R S^* T$ and of the query of Example~\ref{exa:badq} on a disjoint signature.
\end{exa}

From Theorem~\ref{thm:ustcon}, it is easy to conclude the proof of Theorem~\ref{thm:main2}:

\begin{proof}[Proof of Theorem~\ref{thm:main2}]
  Let $Q$ be an unbounded \ucqinf.
  If we have a model of~$Q$ with a non-iterable edge, then we conclude by
  Theorem~\ref{thm:pp2dnfred} that $\PQE(Q)$ is \#P-hard.
  Otherwise, by Theorem~\ref{thm:findhard2}, we have a minimal tight pattern,
  and its edge is then iterable (otherwise the first case would have applied),
  so that we can apply Theorem~\ref{thm:ustcon}.
\end{proof}

Thus, it only remains to show Theorem~\ref{thm:ustcon}. The idea is to use the iterable edge~$e$ of the minimal tight pattern $(I, e)$ for some incident pair $\Pi$ to reduce from the undirected st-connectivity problem \stcon (Definition~\ref{def:stcon}).
Given an input st-graph~$G$ for \stcon, we will code it as a TID $\pdb$ built using~$I_{e,\Pi}$, with one probabilistic fact per edge of~$G$. To show a reduction, we will argue that good possible worlds of~$G$ correspond to possible worlds $J'$ of~$\pdb$ containing some iterate $I_{e,\Pi}^n$ of the instance (Definition~\ref{def:iteration}), with $n$ being the length of the path, and $J'$ then satisfies~$Q$ because~$e$ is iterable.
Conversely, we will argue that bad possible worlds of~$G$ correspond to possible worlds $J'$ of~$\pdb$ that have a homomorphism to a so-called \emph{fine dissociation} of~$e$ in~$I$, and we will argue that this violates the query~$Q$ thanks to our choice of~$(I, e)$ as a minimal tight pattern.
The notion of \emph{fine dissociation} will be defined for an edge relative to an incident pair, but also relative to a specific choice of fact covered by the edge, as we formally define below (and illustrate in Figure~\ref{fig:finedissociation}):

\begin{defi}%
  \label{def:finedissociation}
  Let $I$ be a $\sigma$-instance, let $e = (u, v)$ be a non-leaf edge in~$I$, let $F_{\ll} = R_{\ll}(l, u)$ and $F_{\rr} = R_{\rr}(v, r)$ be an incident pair of~$e$ in~$I$, and let $F_{\mm}$ be a non-unary fact covered by the edge~$e$.
  The result of performing the \emph{fine dissociation} of~$e$ in~$I$ relative to~$F_{\ll},F_{\rr}$ and $F_{\mm}$ is a $\sigma$-instance $I'$ on the domain $\dom(I')=\dom(I) \cup \{u', v'\}$, where the new elements are fresh.
  It is obtained by applying the following steps:
  \begin{itemize}
    \item \emph{Copy non-incident facts:} Initialize $I'$ as the induced subinstance of~$I$ on $\dom(I) \setminus \{u, v\}$.
    \item \emph{Copy incident facts $F_{\ll}$ and $F_{\rr}$:} Add the facts $F_{\ll}$ and $F_{\rr}$ to~$I'$.
    \item \emph{Copy other left-incident facts:} For every $\sigma^\leftrightarrow$-fact $F_{\ll}' = R_{\ll}'(l', u)$
      of~$I$ that is left-incident to~$e$ (i.e., $l' \notin \{u, v\}$)
      and where $F_{\ll}' \neq F_{\ll}$,
      add to~$I'$ the fact $R_{\ll}'(l', u')$.
    \item \emph{Copy other right-incident facts:}
      For every $\sigma^\leftrightarrow$-fact $F_{\rr}' = R_{\rr}'(v, r')$
      of~$I$ that is right-incident to~$e$ (i.e., $r' \notin \{u, v\}$)
      and where $F_{\rr}' \neq F_{\rr}$,
      add to~$I'$ the fact $R_{\rr}'(v', r')$.
    \item \emph{Create the copies of~$e$:} Copy~$e$ on the pairs
      $(u, v')$ and $(u', v)$ of~$I'$, and copy~$e$
      \emph{except the fact $F_m$} on the pairs $(u, v)$ and $(u', v')$ of~$I'$.
    \end{itemize}
\end{defi}

\tikzset{midedge/.style={cyan,very thick}}

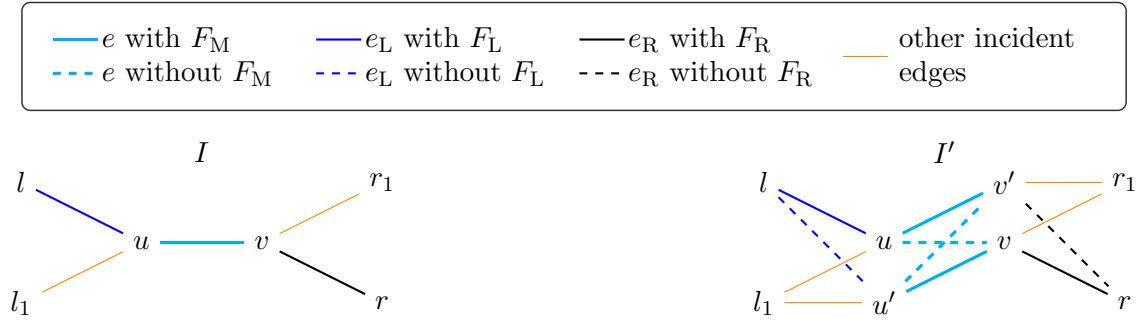
\begin{figure}
  \drawnewkeyb%
  \medskip

  \begin{tikzpicture}[yscale=.8,xscale=.8]
    \node (I) at (3, .5) {$I$};
    \node (vl) at (0, 0) {$l$};
    \node (vll) at (0, -2) {$l_1$};
    \node (u) at (2, -1) {$u$};
    \node (v) at (4, -1) {$v$};
    \node (ur) at (6, -2) {$r$};
    \node (urr) at (6, 0) {$r_1$};
    \draw (u) edge[very thick,midedge] (v);
    \draw (vl) edge[blue,thick] (u);
    \draw (vll) edge[orange] (u);
    \draw (v) edge[thick,black] (ur);
    \draw (v) edge[orange] (urr);
  \end{tikzpicture}
  \hfill
  \begin{tikzpicture}[yscale=.8,xscale=.8]
    \node (I) at (3, .5) {$I'$};
    \node (vl) at (0, 0) {$l$};
    \node (vll) at (0, -2) {$l_1$};
    \node (u) at (2, -1) {$u$};
    \node (up) at (2, -2) {$u'$};
    \node (v) at (4, -1) {$v$};
    \node (vp) at (4, 0) {$v'$};
    \node (ur) at (6, -2) {$r$};
    \node (urr) at (6, 0) {$r_1$};
    \draw (u) edge[cyan,very thick,midedge] (vp);
    \draw (up) edge[cyan,very thick,midedge] (v);
    \draw (u) edge[dashed,very thick,cyan,midedge] (v);
    \draw (up) edge[dashed,very thick,cyan,midedge] (vp);
    \draw (vl) edge[blue,thick] (u);
    \draw (vl) edge[blue,dashed,thick] (up);
    \draw (vll) edge[orange] (u);
    \draw (vll) edge[orange] (up);
    \draw (v) edge[black, thick] (ur);
    \draw (vp) edge[black,dashed,thick] (ur);
    \draw (v) edge[orange] (urr);
    \draw (vp) edge[orange] (urr);
  \end{tikzpicture}
  \caption{Example of fine dissociation from an instance~$I$ (left) to~$I'$ (middle) for a choice of~$e$, of~$\Pi = (F_{\ll}, F_{\rr})$, and of $F_{\mm}$. We call $e_{\ll}$ and~$e_{\rr}$ the edges of~$F_{\ll}$ and $F_{\rr}$.}%
  \label{fig:finedissociation}
\end{figure}

\noindent
Note that if the only non-unary fact covered by the edge~$e$ in~$I$ is~$F_{\mm}$, then $(u, v)$ and $(u', v')$ are not edges in the result of the fine dissociation; otherwise, they are edges but with a smaller weight than~$e$.
Observe that fine dissociation is related both to dissociation (Section~\ref{sec:findhard}) and to iteration
(Section~\ref{sec:pp2dnf}). We will study later when fine dissociation can make the query false.

We can now start the proof of Theorem~\ref{thm:ustcon} by describing the coding. It depends on our choice of~$I_{e,\Pi}$ and of a fact~$F_{\mm}$, but like in Section~\ref{sec:pp2dnf} it
does not depend on the query~$Q$. Given an
input st-graph~$G$, we construct a TID $\pdb$ whose possible worlds will have a bijection to those of~$G$. The process is illustrated on an example in Figure~\ref{fig:concode}, and defined formally below:
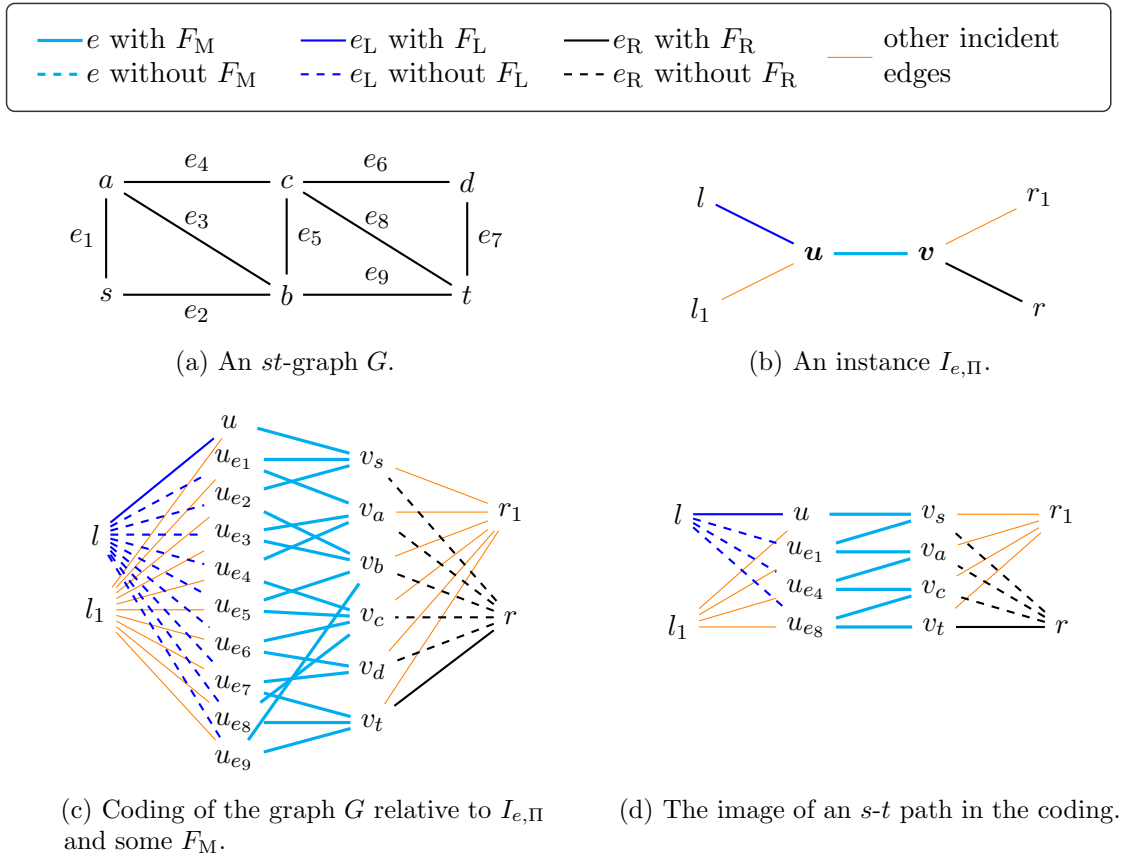
\begin{figure}[t]
	\begin{subfigure}[t]{\linewidth}
		\drawnewkeyb%
	\end{subfigure}

	\bigskip

	\begin{subfigure}[t]{.49\linewidth}
		\centering
		\begin{tikzpicture}[xscale=.8]
		\node (s) at (0, 0) {$s$};
		\node (a) at (0, 1.5) {$a$};
		\node (b) at (3, 0) {$b$};
		\node (c) at (3, 1.5) {$c$};
		\node (d) at (6, 1.5) {$d$};
		\node (t) at (6, 0) {$t$};

		\draw (s) edge[-, thick] node[midway, left]{$e_1$} (a);
		\draw (s) edge[-, thick] node[midway, below]{$e_2$} (b);
		\draw (a) edge[-, thick] node[midway, above]{$e_3$} (b);
		\draw (a) edge[-, thick] node[midway, above]{$e_4$} (c);
		\draw (b) edge[-, thick] node[midway, right]{$e_5$} (c);
		\draw (b) edge[-, thick] node[midway, above]{$e_9$} (t);
		\draw (c) edge[-, thick] node[midway, above]{$e_6$} (d);
		\draw (c) edge[-, thick] node[midway, above]{$e_8$} (t);
		\draw (d) edge[-, thick] node[midway, right]{$e_7$} (t);
		\end{tikzpicture}
		\caption{An $st$-graph $G$.}\label{fig:concode-a}
	\end{subfigure}
	\hfill
	\begin{subfigure}[t]{.49\linewidth}
		\centering
		\begin{tikzpicture}
		\node (v1) at (0, 1.5) {$l$};
		\node (v2) at (0, 0) {$l_1$};
		\node (u) at (1.5, 0.75) {$\bm{u}$};
		\node (v) at (3, 0.75) {$\bm{v}$};
		\node (u1) at (4.5, 1.5) {$r_1$};
		\node (u2) at (4.5, 0) {$r$};
		\draw (u) edge[-, very thick, cyan] (v);
		\draw (v1) edge[-,thick, blue]  (u);
		\draw (v2) edge[-,  orange] (u);
		\draw (v) edge[-,  orange] (u1);
		\draw (v) edge[-,thick, black] (u2);
		\end{tikzpicture}
		\caption{An instance $I_{e,\Pi}$.}\label{fig:concode-b}
	\end{subfigure}

	\bigskip

	\qquad\begin{subfigure}[t]{.42\linewidth}
		\centering
		\begin{tikzpicture}[xscale=.8]
		\node (v1) at (0, 0) {${l}$};
		\node (v2) at (0, -1) {${l_1}$};
		\node (u) at (2.3, 1.5) {$u~$};
		\node (ue1) at (2.3, 1) {$u_{e_1}$};
		\node (ue2) at (2.3, 0.5) {$u_{e_2}$};
		\node (ue3) at (2.3, 0) {$u_{e_3}$};
		\node (ue4) at (2.3, -0.5) {$u_{e_4}$};
		\node (ue5) at (2.3, -1) {$u_{e_5}$};
		\node (ue6) at (2.3, -1.5) {$u_{e_6}$};
		\node (ue7) at (2.3, -2) {$u_{e_7}$};
		\node (ue8) at (2.3, -2.5) {$u_{e_8}$};
		\node (ue9) at (2.3, -3) {$u_{e_9}$};
		\node (vs) at (4.6, 1) {$v_{s}$};
		\node (va) at (4.6, 0.3) {$v_{a}$};
		\node (vb) at (4.6, -0.4) {$v_{b}$};
		\node (vc) at (4.6, -1.1) {$v_{c}$};
		\node (vd) at (4.6, -1.8) {$v_{d}$};
		\node (vt) at (4.6, -2.5) {$v_{t}$};
		\node (u1) at (6.9, 0.3) {${r_1}$};
		\node (u2) at (6.9, -1.1) {${r}$};
		\draw (v2) edge[-, orange] (u);
		\draw (v2) edge[-, orange] (ue1);
		\draw (v2) edge[-, orange] (ue2);
		\draw (v2) edge[-, orange] (ue3);
		\draw (v2) edge[-, orange] (ue4);
		\draw (v2) edge[-, orange] (ue5);
		\draw (v2) edge[-, orange] (ue6);
		\draw (v2) edge[-, orange] (ue7);
		\draw (v2) edge[-, orange] (ue8);
		\draw (v2) edge[-, orange] (ue9);

		\draw (u1) edge[-, orange] (vs);
		\draw (u1) edge[-, orange] (va);
		\draw (u1) edge[-, orange] (vb);
		\draw (u1) edge[-, orange] (vc);
		\draw (u1) edge[-, orange] (vd);
		\draw (u1) edge[-, orange] (vt);
		\draw (vs) edge[-, very thick, cyan] (u);
		\draw (vs) edge[-, very thick,cyan] (ue1);
		\draw (vs) edge[-, very thick,cyan] (ue2);
		\draw (va) edge[-, very thick,cyan] (ue1);
		\draw (va) edge[-, very thick,cyan] (ue3);
		\draw (va) edge[-, very thick,cyan] (ue4);
		\draw (vb) edge[-, very thick,cyan] (ue2);
		\draw (vb) edge[-, very thick,cyan] (ue3);
		\draw (vb) edge[-, very thick,cyan] (ue5);
		\draw (vb) edge[-, very thick,cyan] (ue9);
		\draw (vc) edge[-, very thick,cyan] (ue4);
		\draw (vc) edge[-, very thick,cyan] (ue5);
		\draw (vc) edge[-, very thick,cyan] (ue6);
		\draw (vc) edge[-, very thick,cyan] (ue8);
		\draw (vd) edge[-, very thick,cyan] (ue6);
		\draw (vd) edge[-, very thick,cyan] (ue7);
		\draw (vt) edge[-, very thick,cyan] (ue7);
		\draw (vt) edge[-, very thick,cyan] (ue8);
		\draw (vt) edge[-, very thick,cyan] (ue9);
		\draw (u2) edge[-, thick,dashed, black] (vs);
		\draw (u2) edge[-, thick,dashed,black] (va);
		\draw (u2) edge[-, thick,dashed,black] (vb);
		\draw (u2) edge[-, thick,dashed,black] (vc);
		\draw (u2) edge[-, thick,dashed,black] (vd);
		\draw (u2) edge[-, thick,black] (vt);

		\draw (v1) edge[-, thick, blue] (u);
		\draw (v1) edge[-, thick, dashed,blue] (ue1);
		\draw (v1) edge[-, thick, dashed,blue] (ue2);
		\draw (v1) edge[-, thick, dashed,blue] (ue3);
		\draw (v1) edge[-, thick, dashed,blue] (ue4);
		\draw (v1) edge[-, thick, dashed,blue] (ue5);
		\draw (v1) edge[-, thick, dashed,blue] (ue6);
		\draw (v1) edge[-, thick, dashed,blue] (ue7);
		\draw (v1) edge[-, thick, dashed,blue] (ue8);
		\draw (v1) edge[-, thick, dashed,blue] (ue9);
		\end{tikzpicture}
		\caption{Coding of the graph~$G$ relative to~$I_{e,\Pi}$ and
			some~$F_{\mm}$.}\label{fig:concode-c}
	\end{subfigure}
	\hfill
	\begin{subfigure}[t]{.49\linewidth}
		\centering
		\begin{tikzpicture}
		\node (dumb) at (0, -2.9) {};
		\node (v1) at (0, 0.5) {${l}$};
		\node (v2) at (0, -1) {${l_1}$};
		\node (u) at (1.7, 0.5) {$u~$};
		\node (ue1) at (1.7, 0) {$u_{e_1}$};
		\node (ue4) at (1.7, -0.5) {$u_{e_4}$};
		\node (ue8) at (1.7, -1) {$u_{e_8}$};
		\node (vs) at (3.4, 0.5) {$v_{s}$};
		\node (va) at (3.4, 0) {$v_{a}$};
		\node (vc) at (3.4, -0.5) {$v_{c}$};
		\node (vt) at (3.4, -1) {$v_{t}$};
		\node (u1) at (5.1, 0.5) {${r_1}$};
		\node (u2) at (5.1, -1) {${r}$};
		\draw (v2) edge[-,  orange] (u);
		\draw (v2) edge[-, orange] (ue1);
		\draw (v2) edge[-, orange] (ue4);
		\draw (v2) edge[-, orange] (ue8);
		\draw (vs) edge[-, very thick, cyan] (u);
		\draw (vs) edge[-, very thick,cyan] (ue1);
		\draw (va) edge[-, very thick,cyan] (ue1);
		\draw (va) edge[-, very thick,cyan] (ue4);
		\draw (vc) edge[-, very thick,cyan] (ue4);
		\draw (vc) edge[-, very thick,cyan] (ue8);
		\draw (vt) edge[-, very thick,cyan] (ue8);
		\draw (u1) edge[-,  orange] (vs);
		\draw (u1) edge[-, orange] (va);
		\draw (u1) edge[-, orange] (vc);
		\draw (u1) edge[-, orange] (vt);
		\draw (u2) edge[-, thick,dashed, black] (vs);
		\draw (u2) edge[-, thick,dashed,black] (va);
		\draw (u2) edge[-, thick,dashed,black] (vc);
		\draw (u2) edge[-, thick,black] (vt);

		\draw (v1) edge[-, thick, blue] (u);
		\draw (v1) edge[-, thick, dashed,blue] (ue1);
		\draw (v1) edge[-, thick, dashed,blue] (ue4);
		\draw (v1) edge[-, thick, dashed,blue] (ue8);
		\end{tikzpicture}
		\caption{The image of an $s$-$t$ path in the coding.}\label{fig:concode-d}
	\end{subfigure}
	\caption{Example of the coding on an $st$-graph $G$ shown in Figure~\ref{fig:concode-a}.
		We encode $G$ relative to an instance $I_{e,\Pi}$ shown in Figure~\ref{fig:concode-b}, and relative to some choice of a non-unary fact $F_{\mm}$ covered by~$e$.
		The coding of~$G$ relative to~$I_{e,\Pi}$ and~$F_{\mm}$ is shown in Figure~\ref{fig:concode-c}, with the probabilistic facts being \emph{exactly one} copy of~$F_{\mm}$ for \emph{one} of every pair of cyan edges adjacent to an element in $\{u_{e_1}, \ldots, u_{e_9}\}$.
		Each $st$-path in $G$ gives rise to a subinstance in the coding: consider for instance the $st$-path which is via the edges $e_1, e_4,e_8$. The corresponding subinstance in the coding for this path is shown in Figure~\ref{fig:concode-d}: it is an iterate of the form~$I_{e,\Pi}^{n+1}$ where $n$ is the number of edges on the path (here $n=3$).
	}%
	\label{fig:concode}
\end{figure}

\begin{defi}%
  \label{def:concoding}
  Let $I_{e,\Pi}$ be a $\sigma$-instance where $e = (u, v)$, $\Pi = (F_{\ll}, F_{\rr})$, $F_{\ll} = R_{\ll}(l, u)$, $F_{\rr} = R_{\rr}(v, r)$ and let $F_{\mm}$ be a non-unary fact of~$I$ covered by~$e$. Let $G = (W, C)$ be an st-graph with source~$s$ and target~$t$.
  The \emph{coding} of~$G$ relative to~$I_{e,\Pi}$ and~$F_{\mm}$ is a TID $\pdb = (J, \pi)$ with domain $\dom(J) \colonequals \dom(I) \cup \{u_c \mid c \in C\} \cup \{v_w \mid w \in W \setminus \{t\}\}$, where the new elements are fresh, and where we use~$v_t$ to refer to~$v$ for convenience.
  The facts of the $\sigma$-instance~$J$ and the probability mapping~$\pi$ are defined as follows:
  \begin{itemize}
    \item \emph{Copy non-incident facts:}
    Initialize $J$ as the induced subinstance of~$I$ on $\dom(I) \setminus \{u, v\}$.
    \item \emph{Copy incident facts $F_{\ll}$ and $F_{\rr}$:}
    Add the facts $F_{\ll}$ and $F_{\rr}$ to~$J$.
    \item \emph{Copy other left-incident facts:}
    For every $\sigma^\leftrightarrow$-fact $F_{\ll}' = R_{\ll}'(l', u)$ of~$I$ that is left-incident to~$e$ (i.e., $l' \notin \{u, v\}$) and where $F_{\ll}' \neq F_{\ll}$, add to~$J$ the facts $R_{\ll}'(l', u_c)$ for each edge $c \in C$.
    \item \emph{Copy other right-incident facts:}
     For every $\sigma^\leftrightarrow$-fact $F_{\rr}' = R_{\rr}'(v, r')$ of~$I$ that is right-incident to~$e$ (i.e., $r' \notin \{u, v\}$) and where $F_{\rr}' \neq F_{\rr}$, add to~$J$ the facts $R_{\rr}'(v_w, r')$ for each $w \in W$.
    \item \emph{Create copies of~$e$:}
    Copy~$e$ on the pair $(u,v_s)$ of~$J$, and for each edge $c = \{a, b\}$ in~$C$, copy~$e$ on the pairs $(u_c, v_a)$ and $(u_c, v_b)$ of~$J$.
    \end{itemize}
  Finally, we define the function $\pi$ as follows. For each edge $c$ of~$C$, we choose one arbitrary vertex $w \in c$, and set $\pi$ to map the copy of the fact~$F_{\mm}$ in the edge $(u_c, v_w)$ to~$0.5$, All other facts are mapped to~$1$ by~$\pi$.
\end{defi}

It is important to note that the edges are coded by paths of length~2. This choice is critical, because the source graph in the reduction is undirected, but the facts on edges are directed; so, intuitively, we symmetrize by having two copies of the edge in opposite directions in order to traverse them in both ways.
The choice on how to orient the edges (i.e., the choice of $w \in c$ when defining~$\pi$) has no impact in how the edges can be traversed when their probabilistic fact is present, but it has an impact when the probabilistic fact is missing. Indeed, this is the reason why fine dissociation includes two copies of~$e$ with one missing fact.

It is easy to see that the given coding is in polynomial time in the input $G$ for every choice of~$I_{e,\Pi}$ and~$F_{\mm}$. Let us now define the bijection~$\phi$, mapping each possible world~$\omega$ of~$G$ to a possible world of the TID $\pdb$ as follows. For each edge $c \in C$, we keep the probabilistic fact incident to~$u_c$ in the instance $\phi(\omega)$  if $c$ is kept in the possible world~$\omega$, and we do not keep it otherwise. It is obvious that this correspondence is bijective and that all possible worlds have the same probability $0.5^{\card{C}}$. We can now explain why~$\phi$ defines a reduction.
Recall from Definition~\ref{def:stcon} that a possible world of~$G$ is \emph{good} if it contains an $s,t$-path, and \emph{bad} otherwise. Here is the formal statement:
\begin{prop}%
	\label{prp:concoding}
        Let the TID $\pdb = (J, \pi)$ be the coding of an undirected st-graph $G$ relative to an instance $I_{e,\Pi}$ and to~$F_{\mm}$ as described in Definition~\ref{def:concoding}. Let $\phi$ be the bijective function defined above from the possible worlds of $G$ to those of $\pdb$. Then:
	\begin{enumerate}
          \item For any \emph{good} possible world $\omega$ of~$G$ with a witnessing simple $s,t$-path traversing $n$ edges, $\phi(\omega)$ has a homomorphism \emph{from} the iterate~$I_{e,\Pi}^{n+1}$.
          \item For any \emph{bad} possible world $\omega$ of~$G$, $\phi(\omega)$ has a homomorphism \emph{to} the result of finely dissociating~$e$ in~$I$ relative to~$\Pi$ and~$F_{\mm}$.
	\end{enumerate}
\end{prop}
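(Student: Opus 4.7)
Both parts are proved by exhibiting an explicit homomorphism between $\phi(\omega)$ and the appropriate reference instance, using the simple $s$-$t$ path in $\omega$ for Part~(1) and the $S$-$T$ cut separating $s$ from $t$ in $\omega$ for Part~(2).

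For Part~(1), fix a simple $s$-$t$ path $s = w_0, w_1, \ldots, w_n = t$ in $\omega$ through edges $c_i = \{w_{i-1}, w_i\}$. The coding $\phi(\omega)$ contains the zig-zag path $u, v_{w_0}, u_{c_1}, v_{w_1}, u_{c_2}, \ldots, u_{c_n}, v_{w_n} = v$ made of $2n+1$ consecutive copies of $e$, each full because every $c_i$ is kept in $\omega$ (so both the probabilistic and the non-probabilistic $F_{\mm}$-copies on its two $e$-edges are present) and the initial $(u, v_s)$-edge is unconditionally full. This path matches the backbone of the iterate $I_{e,\Pi}^{n+1}$, whose own $2n+1$ copies of $e$ traverse $u = u_1, v_1, u_2, v_2, \ldots, u_{n+1}, v_{n+1} = v$. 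I therefore define $h : I_{e,\Pi}^{n+1} \to \phi(\omega)$ by the identity on $\dom(I) \setminus \{u, v\}$, $u_1 \mapsto u$, $u_{i+1} \mapsto u_{c_i}$ for $1 \leq i \leq n$, and $v_i \mapsto v_{w_{i-1}}$ for $1 \leq i \leq n+1$ (so $h(v_{n+1}) = v_t = v$); the only non-routine check is that $F_{\ll}$ is attached to $h(u_1) = u$ and $F_{\rr}$ to $h(v_{n+1}) = v$, while the other incident facts are present at every $u_c$ and $v_w$ by construction of the coding.

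For Part~(2), let $S \subseteq W$ be the connected component of $s$ in $\omega$ and $T = W \setminus S$; badness forces $t \in T$ and ensures no $\omega$-edge crosses the cut. I define $h : \phi(\omega) \to I'$ (the fine dissociation) by the identity on $\dom(I) \setminus \{u, v\}$, $h(u) = u$, $h(v_w) = v'$ if $w \in S$ and $h(v_w) = v$ if $w \in T$ (consistently with $v_t = v$). For $c = \{a, b\} \in C$, I set $h(u_c) = u$ if $a, b \in S$, $h(u_c) = u'$ if $a, b \in T$, and for cut-crossing edges $c$ (which necessarily lie outside $\omega$) I make the choice according to the orientation, namely $h(u_c) = u'$ when the probabilistic copy of $F_{\mm}$ was placed on the $S$-endpoint of $c$ and $h(u_c) = u$ when it was placed on the $T$-endpoint.

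The main obstacle is verifying $h$ on the $e$-copies of $\phi(\omega)$ in Part~(2). Recall that $I'$ has full $e$-copies on $(u, v')$ and $(u', v)$ and $F_{\mm}$-removed copies on $(u, v)$ and $(u', v')$: full copies in $\phi(\omega)$ must land on full copies of $I'$, while $F_{\mm}$-removed copies may land on either kind. A case analysis handles all situations. When both endpoints of $c$ lie on the same side of the cut, $v_a$ and $v_b$ collapse to a single image and $h(u_c)$ is the unique element of $\{u, u'\}$ with a full $e$-edge to that side, which absorbs all $e$-copies on $c$ regardless of whether they retain $F_{\mm}$. When $c$ straddles the cut, $v_a$ and $v_b$ split between $v$ and $v'$; exactly one of the two $e$-copies on $c$ keeps its $F_{\mm}$ (the non-probabilistic one) and must land on a full edge of $I'$, and the orientation-based choice of $h(u_c)$ is precisely calibrated so that this full copy reaches a full $e$-edge of $I'$ while the $F_{\mm}$-removed copy lands on the corresponding minus edge. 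Compatibility on the incident facts then follows from the parallel structure of the coding and $I'$: in both instances $u$ and its duplicates ($u_c$ in the coding, resp.\ $u'$ in $I'$) carry all other left-incident facts, only $u$ carries $F_{\ll}$, and symmetrically on the right for $v$, $F_{\rr}$, and the $v_w$'s.
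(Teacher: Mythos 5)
Your proof is correct and follows essentially the same approach as the paper: for Part~(1) you exhibit the zig-zag image of the simple $s$-$t$ path as a copy of the iterate, and for Part~(2) you use an $s$-$t$ cut (the connected component of~$s$) to map the $v_w$'s to $v'$ or $v$ and the $u_c$'s to $u$ or $u'$, with the orientation choice for cut-crossing edges dictating which side of $\{u,u'\}$ to use. Your explicit coordinate-wise formulas for the homomorphisms and your case analysis of full versus $F_{\mm}$-removed copies match the paper's argument exactly, just phrased a bit more formally.
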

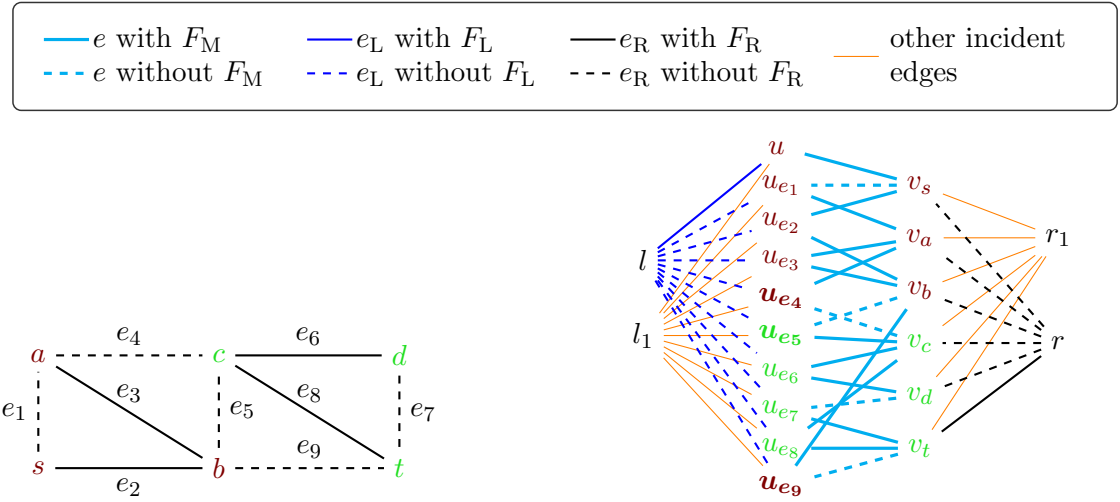
\begin{figure}
	\drawnewkeyb%
	\medskip

	\begin{subfigure}[t]{2.4in}
		\noindent\begin{tikzpicture}[xscale=.8]
		\node[darkred] (s) at (0, 0) {$s$};
		\node[darkred] (a) at (0, 1.5) {$a$};
		\node[darkred] (b) at (3, 0) {$b$};
		\node[darkgreen] (c) at (3, 1.5) {$c$};
		\node[darkgreen] (d) at (6, 1.5) {$d$};
		\node[darkgreen] (t) at (6, 0) {$t$};

		\draw (s) edge[-, thick, dashed] node[midway, left]{$e_1$} (a);
		\draw (s) edge[-, thick] node[midway, below]{$e_2$} (b);
		\draw (a) edge[-, thick] node[midway, above]{$e_3$} (b);
		\draw (a) edge[-, thick, dashed] node[midway, above]{$e_4$} (c);
		\draw (b) edge[-, thick, dashed] node[midway, right]{$e_5$} (c);
		\draw (b) edge[-, thick, dashed] node[midway, above]{$e_9$} (t);
		\draw (c) edge[-, thick] node[midway, above]{$e_6$} (d);
		\draw (c) edge[-, thick] node[midway, above]{$e_8$} (t);
		\draw (d) edge[-, thick, dashed] node[midway, right]{$e_7$} (t);
		\end{tikzpicture}
		\caption{A possible world $\omega$ of~$G$ with no $s,t$-path
			(dashed edges are the ones that are not kept): the vertices are colored in
			\textcolor{darkred}{red} or \textcolor{darkgreen}{green} depending on their side of the cut.}\label{fig:concode-e}
	\end{subfigure}
	\hfill
	\begin{subfigure}[t]{3in}
		\centering
		\begin{tikzpicture}[xscale=.8]
		\node (v1) at (0, 0) {${l}$};
		\node (v2) at (0, -1) {${l_1}$};
		\node[darkred] (u) at (2.3, 1.5) {$u~$};
		\node[darkred] (ue1) at (2.3, 1) {$u_{e_1}$};
		\node[darkred] (ue2) at (2.3, 0.5) {$u_{e_2}$};
		\node[darkred] (ue3) at (2.3, 0) {$u_{e_3}$};
		\node[darkred] (ue4) at (2.3, -0.5) {$\bm{u_{e_4}}$};
		\node[darkgreen] (ue5) at (2.3, -1) {$\bm{u_{e_5}}$};
		\node[darkgreen] (ue6) at (2.3, -1.5) {$u_{e_6}$};
		\node[darkgreen] (ue7) at (2.3, -2) {$u_{e_7}$};
		\node[darkgreen] (ue8) at (2.3, -2.5) {$u_{e_8}$};
		\node[darkred] (ue9) at (2.3, -3) {$\bm{u_{e_9}}$};
		\node[darkred] (vs) at (4.6, 1) {$v_{s}$};
		\node[darkred] (va) at (4.6, 0.3) {$v_{a}$};
		\node[darkred] (vb) at (4.6, -0.4) {$v_{b}$};
		\node[darkgreen] (vc) at (4.6, -1.1) {$v_{c}$};
		\node[darkgreen] (vd) at (4.6, -1.8) {$v_{d}$};
		\node[darkgreen] (vt) at (4.6, -2.5) {$v_{t}$};
		\node (u1) at (6.9, 0.3) {${r_1}$};
		\node (u2) at (6.9, -1.1) {${r}$};
		\draw (v2) edge[-, orange] (u);
		\draw (v2) edge[-, orange] (ue1);
		\draw (v2) edge[-, orange] (ue2);
		\draw (v2) edge[-, orange] (ue3);
		\draw (v2) edge[-, orange] (ue4);
		\draw (v2) edge[-, orange] (ue5);
		\draw (v2) edge[-, orange] (ue6);
		\draw (v2) edge[-, orange] (ue7);
		\draw (v2) edge[-, orange] (ue8);
		\draw (v2) edge[-, orange] (ue9);
		\draw (vs) edge[-, very thick, cyan] (u);
		\draw (vs) edge[-, very thick,cyan,dashed] (ue1);
		\draw (vs) edge[-, very thick,cyan] (ue2);
		\draw (va) edge[-, very thick,cyan] (ue1);
		\draw (va) edge[-, very thick,cyan] (ue3);
		\draw (va) edge[-, very thick,cyan] (ue4);
		\draw (vb) edge[-, very thick,cyan] (ue2);
		\draw (vb) edge[-, very thick,cyan] (ue3);
		\draw (vb) edge[-, very thick,cyan,dashed] (ue5);
		\draw (vb) edge[-, very thick,cyan] (ue9);
		\draw (vc) edge[-, very thick,cyan,dashed] (ue4);
		\draw (vc) edge[-, very thick,cyan] (ue5);
		\draw (vc) edge[-, very thick,cyan] (ue6);
		\draw (vc) edge[-, very thick,cyan] (ue8);
		\draw (vd) edge[-, very thick,cyan] (ue6);
		\draw (vd) edge[-, very thick,cyan,dashed] (ue7);
		\draw (vt) edge[-, very thick,cyan] (ue7);
		\draw (vt) edge[-, very thick,cyan] (ue8);
		\draw (vt) edge[-, very thick,cyan,dashed] (ue9);
		\draw (u1) edge[-, orange] (vs);
		\draw (u1) edge[-, orange] (va);
		\draw (u1) edge[-, orange] (vb);
		\draw (u1) edge[-, orange] (vc);
		\draw (u1) edge[-, orange] (vd);
		\draw (u1) edge[-, orange] (vt);
		\draw (v1) edge[-, thick, blue] (u);
		\draw (v1) edge[-, thick, dashed,blue] (ue1);
		\draw (v1) edge[-, thick, dashed,blue] (ue2);
		\draw (v1) edge[-, thick, dashed,blue] (ue3);
		\draw (v1) edge[-, thick, dashed,blue] (ue4);
		\draw (v1) edge[-, thick, dashed,blue] (ue5);
		\draw (v1) edge[-, thick, dashed,blue] (ue6);
		\draw (v1) edge[-, thick, dashed,blue] (ue7);
		\draw (v1) edge[-, thick, dashed,blue] (ue8);
		\draw (v1) edge[-, thick, dashed,blue] (ue9);
		\draw (u2) edge[-, thick,dashed, black] (vs);
		\draw (u2) edge[-, thick,dashed,black] (va);
		\draw (u2) edge[-, thick,dashed,black] (vb);
		\draw (u2) edge[-, thick,dashed,black] (vc);
		\draw (u2) edge[-, thick,dashed,black] (vd);
		\draw (u2) edge[-, thick,black] (vt);
		\end{tikzpicture}
		\caption{Possible world of the coding in Figure~\ref{fig:concode-c} for the possible world
			of~$G$ at the left. Copies of~$e$ are dashed when they are missing the fact~$F_{\mm}$.
			Vertices $u_{e_i}$ corresponding to edges across the cut are in bold.
		}\label{fig:concode-f}
	\end{subfigure}
	\caption{Illustration of a possible world (Figure~\ref{fig:concode-e}) of the graph $G$ from Figure~\ref{fig:concode-a}, and the corresponding possible world (Figure~\ref{fig:concode-f}) of the coding (Figure~\ref{fig:concode-c}). The homomorphism of Figure~\ref{fig:concode-f} to the fine dissociation is given by the vertex colors:
		the red $u$-vertices are mapped to~$u$, the red $v$-vertices are mapped to~$v'$, the green $u$-vertices are mapped to~$u'$, and the green $v$-vertices are mapped to~$v$. The vertex colors are determined by the cut (Figure~\ref{fig:concode-e}) except for the bold vertices where it depends on the orientation choice.}%
	\label{fig:concode2}
\end{figure}
\begin{proof}
  As before, we start with the easier forward direction (1), and then prove the backward direction (2).

\begin{enumerate}
    \item 	Consider a witnessing path $s = w_1, \ldots, w_{n+1} = t$ in the possible world $\omega$ of~$G$, and assume without loss of generality that the path is simple, i.e., it traverses each vertex at most once. We claim that the possible world $J' \colonequals \phi(\omega)$ actually has a subinstance isomorphic to~$I_{e,\Pi}^{n+1}$. See Figure~\ref{fig:concode-d} for an example.

  To see why this is true, we take as usual the facts of~$J'$ that do not involve any copy of~$u$ or~$v$ and keep them as-is, because they occur in~$J'$ as they do in~$I_{e,\Pi}^{n+1}$.
  Now, we start by taking the one copy of~$F_{\ll}$ leading to~$u$ and the copy of~$e$ leading to~$v_s$. We now follow the path which gives a path of copies of~$e$: for each edge $c =\{w_j,w_{j+1}\}$ of the path, we have two successive copies of~$e$ between~$v_{w_j}$ and~$u_c$, and between $u_c$ and~$v_{w_{j+1}}$.
  Note that, as the path uses edge~$c$, it was kept in~$\omega$, so all the copies of~$e$ in question have all their facts, i.e., neither of the copies of~$F_{\mm}$ can be missing. The assumption that the path is simple ensures that we do not visit the same vertex multiple times. After traversing these $2n$ copies of~$e$ in alternating directions, we reach $v_t = v$, and finally we use the fact $F_{\rr}$ which is incident to~$v$. So, we have indeed found a subinstance of~$J'$ which is isomorphic to~$I_{e,\Pi}^{n+1}$.

\item	Let us write $J' \colonequals \phi(\omega)$, and let us write $e = (u, v)$.
  Let us denote by $I'$ the result of finely dissociating in~$I$ the edge~$e$ relative to the incident pair~$\Pi$ and the fact~$F_{\mm}$: this is depicted in Figure~\ref{fig:finedissociation}. Let us show that $J'$ has a homomorphism to~$I'$. See Figure~\ref{fig:concode-f} for an example of a bad possible world~$J'$, and Figure~\ref{fig:concode-e} for the corresponding possible world~$\omega$ of~$G$.

  We use the fact that, as the possible world~$\omega$ of~$G$ has no path from~$s$ to~$t$, there is an \emph{$s,t$-cut of~$\omega$}, i.e., a function $\psi$ mapping each vertex of~$G$ to either $\LL$ or~$\RR$ such that $s$ is mapped to~$\LL$, $t$ is mapped to~$\RR$, and every edge $\{x, y\}$ for which $\psi(x) \neq \psi(y)$ was not kept in~$\omega$. See Figure~\ref{fig:concode-e} for an illustration, where the red vertices are mapped to~$\LL$ and the green vertices are mapped to~$\RR$.

  We map~$u$ in~$J'$ to~$u$ in~$I'$ and $v_s$ to~$v$, which maps the copy of~$e$ between~$u$ and~$v_s$ in~$J'$ to a copy of~$e$ in~$I'$. Now observe that we can map to~$v'$ in~$I'$ all the nodes~$v_w$ such that $\psi(w) = \LL$, including $v_s$. The edges between these nodes in~$J'$, whether they were kept in~$\omega$ or not, are mapped by going back-and-forth on the edge $(u, v')$ in~$I'$.
  In Figure~\ref{fig:concode-f}, this defines the image of $v_a$, $v_b$, and $u_{e_1}, u_{e_2}, u_{e_3}$ corresponding to the edges between them. In the same way we can map to~$v$ in~$I'$ all the nodes~$v_w$ such that $\psi(w) = \RR$, including $v_t$ and all edges between these nodes, going back-and-forth on edge $(u', v)$ in~$I'$. In Figure~\ref{fig:concode-f}, this defines the image of $v_c$, $v_d$, $v_t$, and $u_{e_6}, u_{e_7}, u_{e_8}$ corresponding to the edges between them.

  We must still map the edges across the cut, i.e., edges $c = \{x, y\}$ such that $\psi(x) = \LL$ and $\psi(y) = \RR$. In~$J'$, these edges give rise to two edges $(u_c, v_x)$ and $(u_c, v_y)$, one of which is a copy of~$e$ and the other one is a copy of~$e$ with the fact~$F_{\mm}$ missing --- which one is which depends on the arbitrary orientation choice that we made when defining~$\pi$. Depending on the case, we map~$u_c$ either to~$u$ or to~$u'$ so that the two incident edges to~$u_c$ are mapped in~$I'$ either to~$(u,v')$ (a copy of~$e$) and $(u, v)$ (a copy of~$e$ minus~$F_{\mm}$), or to $(u',v')$ (a copy of~$e$ minus~$F_{\mm}$) and $(u', v)$ (a copy of~$e$).
  In Figure~\ref{fig:concode-f}, this allows us to define the image of the bold vertices ($u_{e_4}, u_{e_5}, u_{e_9}$) corresponding to the edges across the cut. We follow the orientation choice when defining~$\pi$, which can be seen by examining which edges are dashed, and we map $u_{e_4}$ and $u_{e_9}$ to~$u$ and map $u_{e_5}$ to~$v$.

  Thus, we have explained how we map the copies of~$u$ and~$v$, the copies of~$e$ (including the ones without~$F_{\mm}$), and the two facts $F_{\ll}$ and~$F_{\rr}$. As usual, we have not discussed the facts that do not involve a copy of~$u$ or~$v$ in~$J'$, or the facts that involve one of them and are not facts of~$e$, $F_{\ll}$, or~$F_{\rr}$, but these are found in~$I'$ in the same way that they occur in~$J'$ (noting that we have only mapped copies of~$u$ to copies of~$u$, and copies of~$v$ to copies of~$v$). This concludes the definition of the homomorphism and concludes the proof. \qedhere
  \end{enumerate}
\end{proof}

\noindent
Proposition~\ref{prp:concoding} leads us to a proof of Theorem~\ref{thm:ustcon}: good possible worlds of~$G$ give a possible world of~$\pdb$ that satisfies~$Q$ thanks to the iterability of~$e$, and bad possible
worlds of~$G$ give a possible world of~$\pdb$ having a homomorphism to the fine dissociation. The only missing piece is to argue that the fine dissociation does not satisfy the query. We can do this using the minimality and tightness of the pattern:

\begin{lem}%
	\label{lem:magic}
  Let $Q$ be a query, let $(I, e)$ be a minimal tight pattern for~$Q$, let $\Pi$ be an arbitrary incident pair of~$e$ in~$I$, and let~$F_{\mm}$ be an arbitrary non-unary fact covered by~$e$ in~$I$.
  Then, the result of the fine dissociation of~$e$ in~$I$ relative to $\Pi$ and $F_{\mm}$ does not satisfy~$Q$.
\end{lem}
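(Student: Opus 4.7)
I will prove the lemma by contradiction: assume that the result $\widetilde{I}$ of the fine dissociation of $e$ in $I$ relative to $\Pi$ and $F_{\mm}$ satisfies $Q$, and derive a contradiction with the minimality of the tight pattern $(I,e)$ in the sense of Definition~\ref{def:minimal}, by exhibiting a tight pattern that is strictly smaller than $(I,e)$ either in weight or in side weight.

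My natural candidate for a strictly smaller tight pattern is $(I\setminus\{F_{\mm}\},e)$. Whenever $e$ carries at least one further non-unary fact besides $F_{\mm}$, the edge $e$ remains a non-leaf edge in $I\setminus\{F_{\mm}\}$, and its weight there is $\mathrm{weight}_I(e)-1$, strictly smaller. Moreover, the plain dissociation (Definition~\ref{def:coarsedissociation}) of $e$ in $I\setminus\{F_{\mm}\}$ is a subinstance of the plain dissociation of $e$ in $I$: it only lacks the two copies of $F_{\mm}$ created on the edges $(u,v')$ and $(u',v)$. Since the dissociation of $e$ in $I$ violates $Q$ by tightness, so does the dissociation of $e$ in $I\setminus\{F_{\mm}\}$ by monotonicity. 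The entire argument therefore reduces to proving the implication $\widetilde{I}\models Q \Longrightarrow I\setminus\{F_{\mm}\}\models Q$.

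To establish this, my strategy is to exploit the two ``middle'' edges $(u,v)$ and $(u',v')$ of $\widetilde{I}$, which carry the facts of $e\setminus\{F_{\mm}\}$ and thus have weight $\mathrm{weight}_I(e)-1$. I apply iterative dissociation (in the sense of Observation~\ref{obs:nonleaf}) inside $\widetilde{I}$, prioritising the middle edges. Two cases arise. If at any step the dissociation of one of them breaks $Q$, then the pair consisting of the current instance and that middle edge is itself a tight pattern of weight $\mathrm{weight}_I(e)-1$, which immediately contradicts the minimality of $(I,e)$. Otherwise, every middle-edge dissociation preserves $\models Q$, and one reaches an instance $\widetilde{I}^{\circ}$ satisfying $Q$ whose only remaining non-leaf ``$e$-like'' edges are the full copies of $e$ on $(u,v')$ and $(u',v)$. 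A careful bookkeeping of left- and right-incident $\sigma^{\leftrightarrow}$-facts at these surviving edges then shows that any tight pattern hidden in $\widetilde{I}^{\circ}$ must have strictly smaller side weight than $e$ in $I$, again contradicting minimality. The degenerate subcases, notably when $F_{\mm}$ is the only non-unary fact covered by $e$ (so that the middle edges have no non-unary facts) or when one of $u',v'$ is a leaf because the $L$- or $R$-side of $e$ has only one incident fact, are treated separately; in those sub-cases $\widetilde{I}$ collapses via an appropriate folding identifying $u'$ with $u$ or $v'$ with $v$ onto an instance sitting below the plain dissociation, and the contradiction follows directly from its tightness.

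The main obstacle will be the side weight bookkeeping in the second branch of the case analysis: one must precisely track how the leaf edges created by the iterative dissociation of the middle edges contribute to the left- and right-incident counts at the surviving candidate edges $(u,v')$ and $(u',v)$, and combine this with the minimality in side weight of $(I,e)$ to derive the strict inequality needed. This is where the specific design of the fine dissociation (keeping only $F_{\ll}$ at $u$ and only $F_{\rr}$ at $v$, moving the other side facts to $u'$ and $v'$, and removing exactly $F_{\mm}$ from the middle edges) is essential to make the bookkeeping go through uniformly.
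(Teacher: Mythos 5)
Your proposal correctly identifies the overall shape of the paper's argument (assume the fine dissociation $\widetilde{I}$ satisfies $Q$, attack the middle edges $(u,v)$ and $(u',v')$ whose weight is one less than $e$, and appeal to the minimality of the tight pattern), but it has two substantial gaps that make the plan incomplete, and its initial "reduction" is a red herring.

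First, the framing "the entire argument therefore reduces to proving $\widetilde{I}\models Q\Rightarrow I\setminus\{F_{\mm}\}\models Q$" is not carried out and is not what the subsequent paragraph attempts: the iterative-dissociation argument in the second paragraph never produces $I\setminus\{F_{\mm}\}$ or a homomorphism into it, so the announced reduction is abandoned. There is also no reason to expect such an implication to be provable by a clean argument, since $\widetilde{I}$ retains $F_{\mm}$ on the side edges $(u,v')$ and $(u',v)$ and has no homomorphism to $I\setminus\{F_{\mm}\}$.

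Second and more importantly, the ``otherwise'' branch of your case analysis is wrong as stated. Dissociating the middle edges does not remove the corresponding facts from $u$, $v$, $u'$, $v'$ --- it replaces each middle edge by two \emph{leaf} edges hanging off the same endpoints with the same facts. Consequently, the left- and right-incidence counts at the surviving side edges $(u,v')$ and $(u',v)$ do \emph{not} drop after the middle-edge dissociations: the side weight stays the same, and your ``careful bookkeeping'' cannot yield a strictly smaller side weight. The missing ingredient is the paper's \emph{fold-back} step: after dissociating the middle edges one exhibits a homomorphism that maps the freshly created leaves onto the opposite endpoints ($u_2\mapsto u_1'$, $v_2\mapsto v_1'$, etc.), yielding $I_3$ in which the middle edges are genuinely gone, and only then do the side edges $(u_1,v_1')$ and $(u_1',v_1)$ have the same weight as $e$ but strictly smaller side weight. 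A second fold-back after dissociating those side edges yields an instance isomorphic to the (coarse) dissociation of $e$ in $I$, which gives the actual contradiction with the tightness of $e$. Your plan never identifies this final contradiction; you only say ``contradicting minimality,'' but minimality of $(I,e)$ by itself produces no contradiction unless you show that a \emph{tight} pattern with smaller weight or side weight actually exists, and in the ``otherwise'' branch the side edges are by hypothesis \emph{not} tight --- so the contradiction must come from a homomorphism back to the plain dissociation and the tightness of $e$, exactly the step you omit.
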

\begin{proof}
  We assume that the fine dissociation~$I_1$ satisfies~$Q$, and show a contradiction by rewriting it in several steps. The process of the proof is illustrated as Figure~\ref{fig:magic}.

    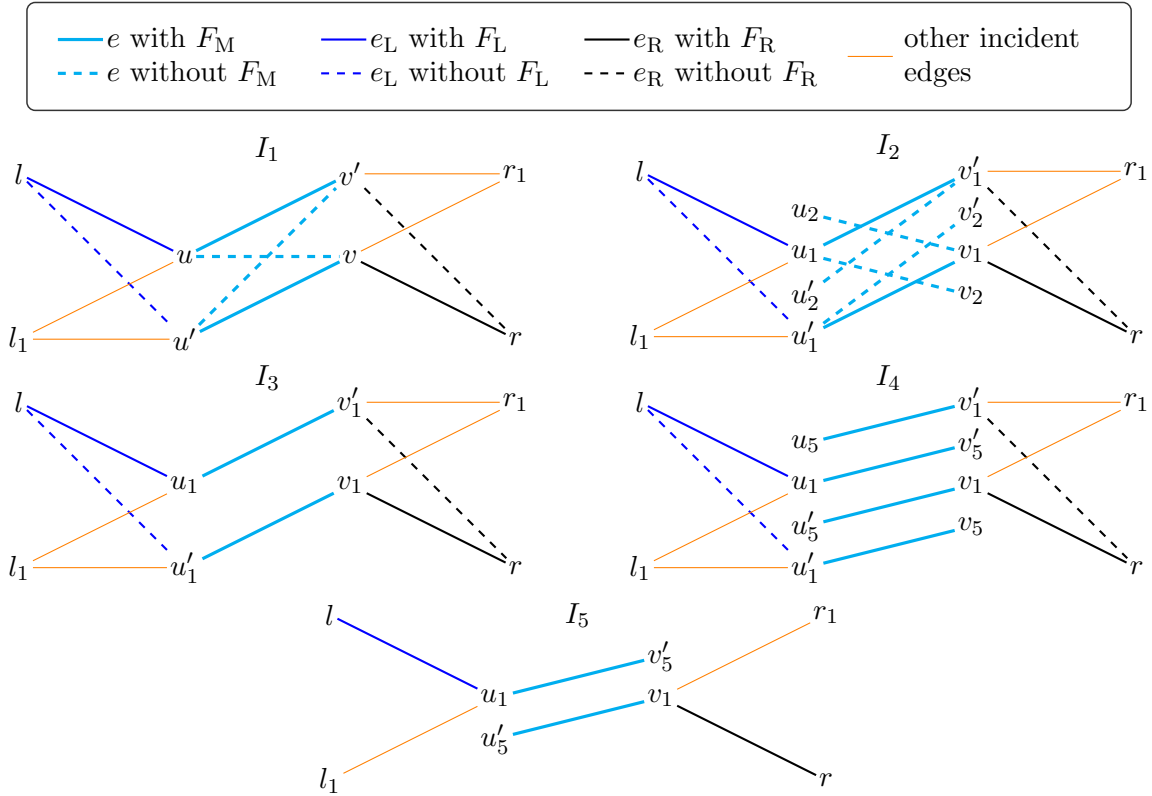
\begin{figure}
  \drawnewkeyb%
  \medskip

	\centering
	\begin{tikzpicture}[scale=1.1, inner sep=1pt]
	\node (I) at (3, .3) {$I_1$};
	\node (vl) at (0, 0) {$l$};
	\node (vll) at (0, -2) {$l_1$};
	\node (u) at (2, -1) {$u$};
	\node (up) at (2, -2) {$u'$};
	\node (v) at (4, -1) {$v$};
	\node (vp) at (4, 0) {$v'$};
	\node (ur) at (6, -2) {$r$};
	\node (urr) at (6, 0) {$r_1$};
	\draw (u) edge[very thick,cyan] (vp);
	\draw (up) edge[very thick,cyan] (v);
	\draw (u) edge[dashed,very thick,cyan] (v);
	\draw (up) edge[dashed,very thick,cyan] (vp);
	\draw (vl) edge[thick,blue] (u);
	\draw (vl) edge[dashed,thick,blue] (up);
	\draw (vll) edge[orange] (u);
	\draw (vll) edge[orange] (up);
	\draw (v) edge[thick,black] (ur);
	\draw (vp) edge[dashed,thick] (ur);
	\draw (v) edge[orange] (urr);
	\draw (vp) edge[orange] (urr);
	\end{tikzpicture}
  \hfill
	\begin{tikzpicture}[scale=1.1, inner sep=1pt]
	\node (I) at (3, .3) {$I_2$};
	\node (vl) at (0, 0) {$l$};
	\node (vll) at (0, -2) {$l_1$};
	\node (u) at (2, -1) {$u_1$};
	\node (up) at (2, -2) {$u_1'$};
	\node (v) at (4, -1) {$v_1$};
	\node (vp) at (4, 0) {$v_1'$};
	\node (u2) at (2, -.5) {$u_2$};
	\node (up2) at (2, -1.5) {$u_2'$};
	\node (v2) at (4, -1.5) {$v_2$};
	\node (vp2) at (4, -.5) {$v_2'$};
	\node (ur) at (6, -2) {$r$};
	\node (urr) at (6, 0) {$r_1$};
	\draw (u) edge[very thick,cyan] (vp);
	\draw (up) edge[very thick,cyan] (v);
	\draw (u) edge[dashed,very thick,cyan] (v2);
	\draw (up) edge[dashed,very thick,cyan] (vp2);
	\draw (u2) edge[dashed,very thick,cyan] (v);
	\draw (up2) edge[dashed,very thick,cyan] (vp);
	\draw (vl) edge[thick,blue] (u);
	\draw (vl) edge[dashed,thick,blue] (up);
	\draw (vll) edge[orange] (u);
	\draw (vll) edge[orange] (up);
	\draw (v) edge[thick,black] (ur);
	\draw (vp) edge[dashed,thick] (ur);
	\draw (v) edge[orange] (urr);
	\draw (vp) edge[orange] (urr);
	\end{tikzpicture}

        \smallskip

	\begin{tikzpicture}[scale=1.1, inner sep=1pt]
	\node (I) at (3, .3) {$I_3$};
	\node (vl) at (0, 0) {$l$};
	\node (vll) at (0, -2) {$l_1$};
	\node (u) at (2, -1) {$u_1$};
	\node (up) at (2, -2) {$u_1'$};
	\node (v) at (4, -1) {$v_1$};
	\node (vp) at (4, 0) {$v_1'$};
	\node (ur) at (6, -2) {$r$};
	\node (urr) at (6, 0) {$r_1$};
	\draw (u) edge[very thick,cyan] (vp);
	\draw (up) edge[very thick,cyan] (v);
	\draw (vl) edge[thick,blue] (u);
	\draw (vl) edge[dashed,thick,blue] (up);
	\draw (vll) edge[orange] (u);
	\draw (vll) edge[orange] (up);
	\draw (v) edge[thick,black] (ur);
	\draw (vp) edge[dashed,thick] (ur);
	\draw (v) edge[orange] (urr);
	\draw (vp) edge[orange] (urr);
	\end{tikzpicture}
     \hfill
	\begin{tikzpicture}[scale=1.1, inner sep=1pt]
	\node (I) at (3, .3) {$I_4$};
	\node (vl) at (0, 0) {$l$};
	\node (vll) at (0, -2) {$l_1$};
	\node (u) at (2, -1) {$u_1$};
	\node (up) at (2, -2) {$u_1'$};
	\node (v) at (4, -1) {$v_1$};
	\node (vp) at (4, 0) {$v_1'$};
	\node (u2) at (2, -.5) {$u_5$};
	\node (up2) at (2, -1.5) {$u_5'$};
	\node (v2) at (4, -1.5) {$v_5$};
	\node (vp2) at (4, -.5) {$v_5'$};
	\node (ur) at (6, -2) {$r$};
	\node (urr) at (6, 0) {$r_1$};
	\draw (u2) edge[very thick,cyan] (vp);
	\draw (up2) edge[very thick,cyan] (v);
	\draw (u) edge[very thick,cyan] (vp2);
	\draw (up) edge[very thick,cyan] (v2);
	\draw (vl) edge[thick,blue] (u);
	\draw (vl) edge[dashed,thick,blue] (up);
	\draw (vll) edge[orange] (u);
	\draw (vll) edge[orange] (up);
	\draw (v) edge[thick,black] (ur);
	\draw (vp) edge[dashed,thick] (ur);
	\draw (v) edge[orange] (urr);
	\draw (vp) edge[orange] (urr);
	\end{tikzpicture}

    \medskip

	\begin{tikzpicture}[scale=1.1, inner sep=1pt]
	\node (I) at (3, 0) {$I_5$};
	\node (vl) at (0, 0) {$l$};
	\node (vll) at (0, -2) {$l_1$};
	\node (u) at (2, -1) {$u_1$};
	\node (v) at (4, -1) {$v_1$};
	\node (up2) at (2, -1.5) {$u_5'$};
	\node (vp2) at (4, -.5) {$v_5'$};
	\node (ur) at (6, -2) {$r$};
	\node (urr) at (6, 0) {$r_1$};
	\draw (up2) edge[very thick,cyan] (v);
	\draw (u) edge[very thick,cyan] (vp2);
	\draw (vl) edge[thick,blue] (u);
	\draw (vll) edge[orange] (u);
	\draw (v) edge[thick,black] (ur);
	\draw (v) edge[orange] (urr);
	\end{tikzpicture}
	\caption{Illustration of the proof of Lemma~\ref{lem:magic}, with $I_1$ being the fine dissociation $I'$
		of Figure~\ref{fig:finedissociation}, and $I_5$ being isomorphic to the dissociation on
		Figure~\ref{fig:dissociation}.}%
	\label{fig:magic}
\end{figure}

  Fix the query~$Q$, the minimal tight pattern $(I, e)$, and the choice of~$F_{\ll}$, $F_{\mm}$, and~$F_{\rr}$. Assume by way of contradiction that the result $I_1$ of the fine dissociation satisfies the query~$Q$. Consider now the edges $e_1'' = (u, v)$ and $e_1' = (u', v')$: their weight in~$I_1$, by construction, is one less than the weight of~$e$. Hence, as $(I, e)$ is minimal, by Definition~\ref{def:minimal}, we know that each of these edges cannot be tight: if one of these edges were, say~$e_1'$, then $(I_1, e_1')$ would be a tight pattern with $e_1'$ having a strictly smaller weight, which is impossible. Thus, as we assumed that~$I$ satisfies~$Q$, it must mean that we can dissociate~$e_1'$, then~$e_1''$ using the dissociation process of Definition~\ref{def:coarsedissociation} without violating~$Q$.
  Formally, we first dissociate $e_1'' = (u, v)$ to remove this edge, rename $u$ and $v$ to $u_1$ and $v_1$, create $u_2$ and $v_2$, and add back copies of the edge from $u_1$ to $v_2$ and from $u_2$ to~$v_1$. The dissociated edge is not tight as we argued, so $Q$ is still satisfied in the result~$I_1'$. Second, we dissociate $e_1' = (u', v')$, remove $e_1'$, rename $u'$ and $v'$ to $u_1'$ and $v_1'$, create $u_2'$ and $v_2'$, and create copies of~$e_1'$ from $u_1'$ to~$v_2'$ and from $u_2'$ to~$v_1'$. The dissociated edge $e_1'$ has the same weight in~$I_1'$ as it did in~$I_1$, so again it is not tight, and $Q$ still holds in the result~$I_2$. (See Figure~\ref{fig:magic}.)

  Note that $u_2$, $v_2$, $u_2'$, $v_2'$ are leaf vertices in~$I_2$, which only occur on the copies of the dissociated edges (the edges with the same facts as $e$ except~$F_{\mm}$). We have copies of the edge $e$ (from the fine dissociation) from $u_1$ to~$v_1'$ and from~$u_1'$ to~$v_1$.

  Observe now that we can map the leaves $u_2$, $v_2$, $u_2'$ and $v_2'$ to define a homomorphism:
	\begin{itemize}
          \item we map $u_2$ to~$u_1'$ and map the edge $(u_2, v_1)$ to the
            edge $(u_1', v_1)$ whose facts are those of~$e$, so a superset of the facts;
          \item we map $v_2$ to~$v_1'$ and map the edge $(u_1, v_2)$ to $(u_1, v_1')$;
          \item we map $u_2'$ to~$u_1$ and map the edge $(u_2', v_1')$ to the edge
            $(u_1, v_1')$;
          \item we map $v_2'$ to~$v_1$ and map the edge $(u_1', v_2')$ to the edge
            $(u_1', v_1)$.
	\end{itemize}

\noindent
  The resulting instance $I_3$ (see Figure~\ref{fig:magic}) is a homomorphic image of~$I_2$, so it still satisfies~$Q$. Relative to~$I_1$, it is the result of replacing $u$ with copies $u_1, u_1'$, and $v$ with copies $v_1, v_1'$, and having one copy of~$e$ from~$u_1'$ to~$v_1$ and from~$u_1$ to~$v_1'$, with all facts incident to~$u$ and~$v$ replicated on $u_1, u_1'$ and $v_1, v_1'$, except $F_{\ll}$ and $F_{\rr}$ which only involve $u_1$ and~$v_1$.
  In other words, the instance $I_3$ is isomorphic to the result~$I_1$ of the fine dissociation (Figure~\ref{fig:finedissociation}), except that we have not created copies of~$e$ without~$F_{\mm}$ between $u_1$ and~$v_1$ and between $u_1'$ and~$v_1'$. We have justified, from our assumption that $I_1$ satisfies $Q$, that $I_3$ also does.

  Let us now modify~$I_3$ using the second minimality criterion on~$e$ to dissociate the edges $e_4 = (u_1, v_1')$ and $e_4' = (u_1', v_1)$, simplifying the instance further. The weight of these edges is the same as that of~$e$, but their side weight is smaller: indeed, $u_1$ has exactly as many incident facts as~$u$ did in~$I_1$, and $v_1'$ has the same number as~$v$ in~$I_1$ except that $F_{\rr}$ is missing, so the side weight of~$e_4$ is indeed smaller. The same holds for~$e_4'$ because $v_1$ has exactly the same incident facts as~$v$ and $u_1'$ has the same as~$u$ except~$F_{\ll}$. This means that these edges are not tight, as otherwise it would contradict the second criterion in Definition~\ref{def:minimal}. Thus, we can dissociate one and then the other, and $Q$ will still be satisfied.
  Say we first dissociate~$e_4$: we create $u_5$ and $v_5'$ and replace $e_4$ by copies from $u_1$ to $v_5'$ and from $u_5$ to~$v_1'$, with $v_5'$ and $u_5$ being leaves. Next, we dissociate~$e_4'$, whose weight and side weight is unchanged relative to~$I_3$: and we create $u_5'$ and $v_5$ and replace $e_4'$ by copies from $u_1'$ to~$v_5$ and from $u_5'$ to~$v_1$, with $v_5$ and $u_5'$ being leaves. As we argued, the minimality of~$e$ ensures that the edges that we dissociate are not tight, so the resulting instance $I_4$ (see Figure~\ref{fig:magic}) still satisfies~$Q$.

  Now, we can finally merge back vertices to reach an instance $I_5$ isomorphic to the dissociation of~$e$ in~$I$. This will yield our contradiction, because we assumed that~$e$ is tight. Specifically, let us map $u_1'$ to~$u_1$ and $v_5$ to~$v_5'$: this defines a homomorphism because the edge $(u_1', v_5)$ can be mapped to $(u_1, v_5')$, this was the only edge involving $v_5$, and all other facts involving~$u_1'$ have a copy involving~$u_1$ by definition of the fine dissociation. Let us also map $v_1'$ to~$v_1$ and $v_5$ to~$v_5'$ in the same fashion, which defines a homomorphism for the same reason. The resulting instance $I_5$ (see Figure~\ref{fig:magic}) still satisfies~$Q$.
  Now observe that $I_5$ is isomorphic to the result of the (non-fine) dissociation of~$e$ in~$I$ (Figure~\ref{fig:dissociation}): we have added two leaves $u_5'$ and $v_5'$, the vertices $u_1$ and $v_1$ indeed correspond to~$u$ and~$v$, we have removed the edge from~$u$ to~$v$ and replaced it by copies from~$u_1$ to~$v_5'$ and from $u_5'$ to~$v_1$.

  Thus we have deduced that dissociating $e$ in~$I$ yields an instance that satisfies~$Q$. But as $(I, e)$ was a tight pattern, this is impossible, so we have reached a contradiction and the proof is finished.
\end{proof}

We can now conclude the proof of the main result of the section, Theorem~\ref{thm:ustcon}:
\begin{proof}[Proof of Theorem~\ref{thm:ustcon}]
  Fix the query $Q$ and the minimal tight pattern $(I, e)$. By definition, $e$ is then a non-leaf edge: pick an arbitrary incident pair $\Pi$ and a non-unary fact $F_{\mm}$ covered by~$e$. We show the \mbox{\#P-hardness} of $\PQE(Q)$ by reducing from U-ST-CON (Definition~\ref{def:stcon}).
        Given an st-graph~$G$, we apply the coding of Definition~\ref{def:concoding} and obtain a TID $\pdb$, which can be computed in polynomial time. As in the proof of Theorem~\ref{thm:pp2dnfred}, given a possible world~$\omega$ of~$G$, what matters is to show that (1.) if $\omega$ is good then $\phi(\omega)$ satisfies~$Q$, and (2.) if $\omega$ is bad then $\phi(\omega)$ violates~$Q$.

        For this, we use Proposition~\ref{prp:concoding}. For (1.), the result follows from the fact that the query~$Q$ is closed under homomorphisms, and the edge~$e$ was assumed to be iterable (Definition~\ref{def:iteration}), so it is iterable relative to any incident pair, in particular~$\Pi$. Thus, the iterates satisfy~$Q$, so $\phi(\omega)$ also does when~$\omega$ is good. For (2.), we know by Lemma~\ref{lem:magic} that the result of the fine dissociation does not satisfy~$Q$, so $\phi(\omega)$ does not satisfy it either when~$\omega$ is bad. This establishes the correctness of the reduction and concludes the proof.
\end{proof}

We have thus established Theorem~\ref{thm:main2}, and the main result of this paper.

\section{Generalizations of the Dichotomy Result}%
\label{sec:generalizations}

This section presents two generalizations of our main result. We first show that the dichotomy also applies to a special case of probabilistic query evaluation, known as \emph{generalized model counting}. Second, we strengthen the dichotomy result to the case where the signature can include  unary predicates in addition to binary predicates.

\subsection*{A special case of probabilistic query evaluation.}
 Recent work has studied the \emph{generalized (first-order) model counting problem~(GFOMC)}~\cite{KenigSuciu20}: given a TID $\Imc$ where $\Pr_\Imc(t) \in \{0, 0.5, 1\}$ for every tuple $t \in \Imc$, $\GFOMC(Q)$ for a query $Q$ is the problem of computing $\Pr_\Imc(Q)$. In other words, $\GFOMC(Q)$ is a special case of $\PQE(Q)$, where each atom $t$ in the TID can only have a probability $p \in \{0,0.5,1\}$.

To extend our result to this setting, we simply observe that all the reductions presented in this paper only use tuple  probabilities from $\{0,0.5,1\}$. Thus, all our hardness results for $\PQE(Q)$ thus immediately apply to $\GFOMC(Q)$ and we obtain a corollary to our main hardness result (Theorem~\ref{thm:main2}):
\begin{cor}%
  \label{cor:gmc}
Let $Q$ be an unbounded $\ucqinf$ query over an arity-two signature. Then, the
  problem $\GFOMC(Q)$ is {\normalfont\sharpP}-hard.
\end{cor}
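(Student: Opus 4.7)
The plan is to derive Corollary~\ref{cor:gmc} essentially immediately from Theorem~\ref{thm:main2}, by inspecting the probability values used in the two reductions that prove it. Recall that Theorem~\ref{thm:main2} is obtained by combining Theorem~\ref{thm:pp2dnfred} (reducing from \pptwodnf when the query has a model with a non-iterable, non-leaf edge) and Theorem~\ref{thm:ustcon} (reducing from \stcon when no such model exists, using the minimal tight pattern from Theorem~\ref{thm:findhard2}). Since $\GFOMC(Q)$ is the restriction of $\PQE(Q)$ to inputs where every fact probability lies in $\{0,0.5,1\}$, it suffices to check that both reductions produce TIDs whose fact probabilities all lie in this set.

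First I would revisit the coding from Definition~\ref{def:ppcoding} used in the proof of Theorem~\ref{thm:pp2dnfred}: there, the function $\pi$ assigns probability $0.5$ to the copies of $F_{\ll}$ and $F_{\rr}$ created in the ``Copy incident facts'' step, and probability $1$ to every other fact of~$J$. Thus, all probabilities in the constructed TID lie in $\{0.5,1\} \subseteq \{0,0.5,1\}$. Second, I would revisit Definition~\ref{def:concoding} used in the proof of Theorem~\ref{thm:ustcon}: the function $\pi$ assigns probability $0.5$ to exactly one chosen copy of $F_{\mm}$ per edge of~$C$, and probability $1$ to every other fact of~$J$, again giving probabilities in $\{0.5,1\}$.

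Hence, both reductions, when given an input to \pptwodnf or to \stcon respectively, produce a TID with fact probabilities in $\{0,0.5,1\}$, and the correctness proofs (Propositions~\ref{prp:ppcoding} and~\ref{prp:concoding} together with Lemma~\ref{lem:magic}) carry over verbatim. Since the reductions are polynomial-time Turing reductions from known \sharpP-hard problems to $\PQE(Q)$ but use only inputs that are valid GFOMC instances, they are equally valid reductions to $\GFOMC(Q)$. The only ``work'' in the proof of Corollary~\ref{cor:gmc} is therefore the bookkeeping observation that no intermediate or arbitrary probability value was ever introduced in the construction; I do not anticipate any genuine obstacle, and the statement follows as an immediate corollary of Theorem~\ref{thm:main2}.
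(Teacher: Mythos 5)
Your proof is correct and follows exactly the same approach as the paper: it simply inspects Definitions~\ref{def:ppcoding} and~\ref{def:concoding} to note that all constructed fact probabilities lie in $\{0,0.5,1\}$, so the \sharpP-hardness reductions for $\PQE(Q)$ are already valid reductions to $\GFOMC(Q)$. The paper states this observation more briefly, but the content is identical.
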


One can then ask if our dichotomy (Theorem~\ref{thm:main}) also generalizes to the GFOMC problem. Clearly, if $\PQE(Q)$ can be computed in polynomial time, then so can $\GFOMC(Q)$, and hence, for any safe UCQ $Q$, the $\GFOMC(Q)$ problem is immediately in \FPTime by Dalvi and Suciu~\cite{dalvi2012dichotomy}. The other direction is more interesting, i.e., assuming a UCQ $Q$ is unsafe for $\PQE$, is it also unsafe for $\GFOMC$?
This was very recently shown to be true:
\begin{thm}[Theorem 2.2,~\cite{KenigSuciu20}]%
  \label{thm:suciu-gmc}
For any unsafe UCQ $Q$, $\GFOMC(Q)$ is {\normalfont\sharpP}-hard.
\end{thm}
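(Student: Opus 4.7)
The plan is to reuse the Dalvi--Suciu dichotomy hardness proof for unsafe UCQs and verify that all reductions involved can be carried out using probabilities only from $\{0, 0.5, 1\}$. Concretely, the goal is to show that every step of their hardness argument either already respects this restriction or can be modified to do so, so that $\PQE(Q)$ being $\sharpP$-hard translates directly into $\GFOMC(Q)$ being $\sharpP$-hard.

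First, I would revisit the base case of the Dalvi--Suciu argument: the canonical unsafe CQ such as $H_0 : R(x), S(x,y), T(y)$, whose hardness is established by a direct reduction from $\pptwodnf$. That reduction is essentially of the form described in Section~\ref{sec:pp2dnf} of the present paper, assigning probability $0.5$ to the ``variable'' tuples encoding the bipartite graph's vertices and probability $1$ to the tuples encoding its edges. So the base case already lives in the $\GFOMC$ setting and requires no modification. I would then lift hardness from $H_0$ to an arbitrary unsafe UCQ $Q$ along the structural reductions Dalvi--Suciu use between queries: these typically add padding tuples of probability $1$, introduce fresh constants, and relabel predicates, none of which require probabilities outside $\{0,0.5,1\}$.

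The interesting step is handling the points where Dalvi--Suciu's proof appears to exploit arbitrary probabilities, most notably when a polynomial interpolation argument is invoked to extract a coefficient of some generating polynomial in free tuple probabilities. Here I would replace continuous-probability interpolation with \emph{replication-based interpolation}: to simulate a probability $p_k = 1 - 2^{-k}$ on a single tuple, create $k$ independent copies each with probability $0.5$ and use the query's monotonicity/closure under homomorphisms to treat the disjunction of the copies as a single ``aggregate'' tuple. Varying $k$ over a polynomial range supplies enough distinct evaluation points for Lagrange interpolation to recover the target quantity in polynomial time, while keeping every individual tuple probability in $\{0.5, 1\}$ (and $0$ for deleted tuples).

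The main obstacle will be ensuring that such replications do not create \emph{spurious} query matches that would invalidate the reduction: a homomorphism-closed query can become true on an instance simply because multiple replicated copies co-occur. Overcoming this requires a careful structural analysis, analogous to the ``non-iterable edge'' versus ``iterable edge'' case distinction used in Sections~\ref{sec:pp2dnf}--\ref{sec:ustcon} above, showing that the specific canonical unsafe query and the specific replication pattern do not admit unintended matches. Alternatively, one may introduce auxiliary tagging predicates that force each replicated copy to be distinguishable from the others in any query match. Managing this interaction between replication and query semantics in full generality (over all unsafe UCQs, not just a canonical one) is where I expect the bulk of the technical work to lie, and it is precisely the content of the Kenig--Suciu construction.
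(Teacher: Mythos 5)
The paper does not prove this statement. Theorem~\ref{thm:suciu-gmc} is imported verbatim from Kenig and Suciu~\cite{KenigSuciu20} (their Theorem~2.2) and is used here purely as a black box to conclude the \GFOMC\ dichotomy in Theorem~\ref{thm:gmc}. There is therefore no in-paper proof against which to compare your proposal; you are attempting to reprove an external result that the authors deliberately cite rather than establish.

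On the substance of your sketch: you correctly identify that the hard part is the polynomial-interpolation step in Dalvi--Suciu's argument, which extracts a coefficient of a multi-affine polynomial by varying tuple probabilities, and you correctly observe that the base-case reduction from \pptwodnf\ already lives in the $\{0,0.5,1\}$ world. However, the replication-based interpolation you propose is not sound in the TID model. Adding $k$ independent replicas of a single fact $R(a,b)$ while keeping the endpoints $a,b$ fixed is not expressible, since facts form a set and identical facts collapse; and introducing fresh endpoints $R(a_i,b_i)$ changes the instance's structure in ways that affect homomorphism existence, so the probability that the query holds is no longer a function of the single aggregate probability $1-2^{-k}$. Your last paragraph acknowledges that spurious matches are a problem, but frames it as a manageable engineering issue to be resolved ``analogously to'' Sections~\ref{sec:pp2dnf}--\ref{sec:ustcon}; in fact it is a structural obstruction, and the Kenig--Suciu argument avoids it entirely by replacing continuous interpolation with a lattice-theoretic/M\"obius-inversion analysis that works at fixed probability $1/2$, rather than by any replication trick. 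So even as a sketch of the external proof, the proposal rests on an unsupported step.
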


In particular, this implies that all safe and unsafe queries coincide for UCQs, across the problems $\GFOMC$ and $\PQE$. Then, combining this theorem with Corollary~\ref{cor:gmc},
we can state our dichotomy result also for $\GFOMC$:

\begin{thm}[Dichotomy of \GFOMC]%
  \label{thm:gmc}
  Let $Q$ be a \ucqinf over an arity-two signature. Then, either~$Q$ is equivalent to a safe UCQ and $\GFOMC(Q)$ is in {\normalfont\FPTime}, or it is not and $\GFOMC(Q)$ is {\normalfont\sharpP}-hard.
\end{thm}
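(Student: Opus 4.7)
The plan is to assemble Theorem~\ref{thm:gmc} essentially as a corollary of results already stated in the section, by a straightforward case split on whether the homomorphism-closed query $Q$ is bounded or unbounded, and on whether it is equivalent to a safe UCQ.

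For the tractability direction, I would first observe that if $Q$ is equivalent to a safe UCQ $Q'$, then by the algorithm of Dalvi and Suciu~\cite{dalvi2012dichotomy}, $\PQE(Q')$ is in $\FPTime$. Since $\GFOMC(Q)$ is simply the restriction of $\PQE(Q)$ to TIDs whose fact probabilities lie in $\{0, 0.5, 1\}$, any polynomial-time algorithm for $\PQE(Q)$ directly solves $\GFOMC(Q)$. Hence $\GFOMC(Q)$ is in $\FPTime$ in this case.

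For the hardness direction, suppose $Q$ is not equivalent to a safe UCQ. I would split into two subcases. If $Q$ is bounded, then by definition $Q$ is equivalent to some UCQ $Q'$, and this UCQ must be unsafe (otherwise $Q$ would be equivalent to a safe UCQ, contradicting our assumption). Theorem~\ref{thm:suciu-gmc} of Kenig and Suciu~\cite{KenigSuciu20} then yields that $\GFOMC(Q') = \GFOMC(Q)$ is $\sharpP$-hard. If instead $Q$ is unbounded, then Corollary~\ref{cor:gmc}, which is an immediate consequence of Theorem~\ref{thm:main2} together with the observation that all reductions in Sections~\ref{sec:pp2dnf}--\ref{sec:ustcon} use only probabilities in $\{0, 0.5, 1\}$, directly gives $\sharpP$-hardness of $\GFOMC(Q)$.

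There is no real obstacle here; the entire content has been done elsewhere. The only thing to be mildly careful about is the combinatorial structure of the case split: a homomorphism-closed query that is not equivalent to any safe UCQ is either unbounded or bounded-but-equivalent-to-an-unsafe-UCQ, and these two cases are precisely what Corollary~\ref{cor:gmc} and Theorem~\ref{thm:suciu-gmc} cover respectively. Together with the trivial upper bound passed down from $\PQE$ to its restriction $\GFOMC$, this yields the claimed dichotomy.
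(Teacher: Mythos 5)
Your proof is correct and follows exactly the same case split as the paper's own argument: tractability is inherited from $\PQE$ when $Q$ is equivalent to a safe UCQ, and otherwise $\GFOMC(Q)$ is shown $\sharpP$-hard via Theorem~\ref{thm:suciu-gmc} in the bounded-but-unsafe case and via Corollary~\ref{cor:gmc} in the unbounded case. No meaningful difference from the paper's proof.
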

\begin{proof}
Let $Q$ be a safe UCQ\@. Then, since $\PQE(Q)$ is in \FPTime, so is $\GFOMC(Q)$. If $Q$ is not equivalent to a safe UCQ, then either it is equivalent to an unsafe UCQ and $\GFOMC(Q)$ is \sharpP-hard by Theorem 2.2~of~\cite{KenigSuciu20}, or it is an unbounded query in $\ucqinf$ and $\GFOMC(Q)$ is \sharpP-hard by Corollary~\ref{cor:gmc}.
\end{proof}

\subsection*{Allowing unary predicates.}
We now turn to the question of extending our results to support unary predicates. Recall that we claimed in Section~\ref{sec:prelim} that our results extend if the signature can feature unary and binary predicates. We now justify this claim formally by showing the analogue of Theorem~\ref{thm:main2} and Corollary~\ref{cor:gmc} for signatures with relations of arity 1 and~2.

\begin{thm}%
  \label{thm:main12}
  Let $Q$ be an unbounded \ucqinf over a signature with relations of arity 1 and~2. Then, $\GFOMC(Q)$, and hence $\PQE(Q)$, is \#P-hard.
\end{thm}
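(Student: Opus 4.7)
The plan is to extend the constructions and arguments of Sections~\ref{sec:pp2dnf}--\ref{sec:ustcon} to the signature $\sigma$ with arity~$1$ and~$2$. The key observation is that the paper already treats \emph{unary facts} of the form $R(a,a)$ (for binary $R$) as facts covered by every edge incident to~$a$; the extension to genuinely unary facts $U(a)$ (for unary $U \in \sigma$) is entirely analogous, and the definitions and proofs then carry over almost verbatim. An alternative would be to reduce directly to Theorem~\ref{thm:main2} by encoding each unary $U$ as a binary $\widehat{U}$ with $U(a)$ becoming $\widehat{U}(a,a)$, but this runs into subtleties about locating structural patterns \emph{in the image of the encoding}, so it is cleaner to re-derive the results natively.

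In more detail, I would treat isolated elements (those appearing only in unary facts or in no fact at all) as singleton connected components of the Gaifman graph, and re-declare an undirected edge $\{u, v\}$ of $I$ to cover, in addition to the previously listed facts, every unary fact $U(u)$ and $U(v)$ over a unary predicate $U \in \sigma$. With this convention, the iteration process (Definition~\ref{def:iteration}), the coarse dissociation (Definition~\ref{def:coarsedissociation}), the fine dissociation (Definition~\ref{def:finedissociation}), and the two codings (Definitions~\ref{def:ppcoding} and~\ref{def:concoding}) automatically copy unary facts whenever they copy the elements these facts involve, exactly as they already do for facts of the form $R(u,u)$. Incident pairs are unaffected because left- and right-incident facts are non-unary by definition.

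Under the extended definitions, the homomorphism arguments underpinning Observation~\ref{obs:iterhomom} and Propositions~\ref{prp:ppcoding} and~\ref{prp:concoding} go through unchanged: every copy of an element $u$ or $v$ carries the same unary facts, so merging or identifying copies of $u$ (respectively~$v$) preserves all facts. Proposition~\ref{prp:stars} requires only an adjusted constant $k_\sigma$ that also accounts for the finitely many unary-fact configurations at each element; the star-decomposition argument is otherwise untouched. Hence Theorems~\ref{thm:pp2dnfred} and~\ref{thm:findhard2} lift to signatures with unary predicates.

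The main obstacle is the bookkeeping in Lemma~\ref{lem:magic}, since weight and side weight now interact with unary facts. With the extended convention, the weight of an edge~$e = (u,v)$ additionally counts the unary facts $U(u)$ and $U(v)$, while the side weight is unchanged because it involves only non-unary left-incident and right-incident facts. In the chain of dissociations performed in the proof, every newly created copy of $u$, $v$, $u'$, or $v'$ is equipped with copies of its incident unary facts, so the dissociated edges have strictly smaller weight (or equal weight and strictly smaller side weight) in exactly the same situations as before, and no new tight patterns of lower weight are introduced by the unary-fact bookkeeping. Theorem~\ref{thm:ustcon} therefore also extends, yielding $\sharpP$-hardness of $\PQE(Q)$. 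Finally, since all reductions use only probabilities in $\{0, 0.5, 1\}$, the same conclusion holds for $\GFOMC(Q)$, establishing Theorem~\ref{thm:main12}.
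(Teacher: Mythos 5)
Your proposal takes a genuinely different route from the paper. You propose to re-derive the entire technical machinery of Sections~\ref{sec:pp2dnf}--\ref{sec:ustcon} natively over a signature with arity~1 and~2, by extending the notion of ``facts covered by an edge'' to include genuine unary facts $U(u)$ and $U(v)$, and then re-checking that iteration, dissociation, fine dissociation, the two codings, and the minimality bookkeeping of Lemma~\ref{lem:magic} still go through. The paper instead applies Theorem~\ref{thm:main2} as a black box: it encodes each unary $U$ as a binary $\widehat{U}$, replaces $U(x)$ by $\widehat{U}(x,x)$ in every CQ disjunct of~$Q$ to obtain~$Q'$, and argues (i)~$Q'$ is still unbounded (since a UCQ for~$Q'$ could be rewritten to a UCQ for~$Q$), and (ii)~$Q'$ is by construction insensitive to off-diagonal facts $\widehat{U}(a,b)$ with $a\neq b$, so $\GFOMC(Q')$ remains \#P-hard even when inputs contain no such facts, which then reduces trivially to $\GFOMC(Q)$. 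Your concern about ``locating structural patterns in the image of the encoding'' does not actually arise for the paper's approach: the paper never needs the hard instances built for~$Q'$ to lie in the image of the encoding; it only needs that deleting off-diagonal $\widehat{U}$-facts from an arbitrary TID over $\sigma'$ leaves $\Pr(Q')$ unchanged, which is immediate from the syntactic form of~$Q'$. So while your plan is plausible, it re-does a large amount of work (and leaves several non-trivial checks, notably the weight and side-weight accounting in Lemma~\ref{lem:magic} and the adjusted bound in Proposition~\ref{prp:stars}, at the level of assertions rather than proofs), whereas the paper's reduction buys the result in a few lines by exploiting the indifference of~$Q'$ to off-diagonal facts.
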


\begin{proof}
Fix the signature $\sigma$ and query $Q$. Let $\sigma'$ be the arity-two signature constructed from~$\sigma$ by replacing each relation $R$ of arity $1$ by a relation $R'$ of arity~$2$.
Considering $Q$ as an infinite union of CQs, we define $Q'$ as a \ucqinf on $\sigma'$ obtained by replacing every unary atom $R(x)$ in $Q$ with the atom $R'(x,x)$.
The resulting query $Q'$ is unbounded. Indeed, assume to the contrary that $Q'$ is equivalent to a UCQ $Q''$. As the truth of $Q'$ by construction only depends on the presence or absence of facts of the form $R'(a, a)$, not $R'(a, b)$ with $a \neq b$, we can assume that $Q''$ only contains atoms of the form $R'(x, x)$ and not $R'(x, y)$. Now,
replacing back each atom $R'(x,x)$ in $Q''$ with $R(x)$, we would obtain a UCQ that is equivalent to $Q$ over the signature $\sigma$, contradicting the unboundedness of $Q$.

Thus, as $Q'$ is an unbounded $\ucqinf$, we know by Theorem~\ref{thm:main2} and Corollary~\ref{cor:gmc} that $\GFOMC(Q')$ is \#P-hard.
Moreover, again by construction of~$Q'$, its satisfaction does not depend on the presence or absence of facts of the form $R'(a, b)$ with $a \neq b$.
This implies that $\GFOMC(Q')$ is \#P-hard even when assuming that the input TIDs contain no such facts.

Now, to show that $\GFOMC(Q)$ is \#P-hard, we reduce from $\GFOMC(Q')$ where input TIDs are restricted to satisfy this additional assumption. Consider such a TID $\calI' = (I', \pi')$. Consider the function $\phi$ that maps any instance $I$ over $\sigma'$ to the instance $\phi(I)$ obtained by replacing each fact $R'(a, a)$ by the fact $R(a)$.
We build in polynomial time the TID $\calI = (I, \pi)$ on $\sigma$, with $I \colonequals \phi(I')$, and with $\pi$ giving to each $\sigma$-fact of arity two in~$I$ the same probability as in~$I'$, and giving to each $\sigma$-fact $R(a)$ of arity 1 in~$I$ the probability of the fact $R'(a, a)$ in~$I'$.
Then, $\phi$ defines a probability-preserving bijection between the possible worlds of~$\Imc'$ and the possible worlds of~$\Imc'$, and by construction $\phi$ guarantees that a possible world of~$\Imc'$ satisfies $Q'$ iff its $\phi$ image satisfies~$Q$. This establishes that the reduction is correct, and concludes the proof.
\end{proof}

\section{Conclusions}%
\label{sec:conc}
We have shown that PQE is \#P-hard for any unbounded \ucqinf over an arity-two signature, and hence proved a dichotomy on PQE for all \ucqinf queries: either they are unbounded and PQE is \#P-hard, or they are bounded and the dichotomy by Dalvi and Suciu applies.
Our result captures many query languages; in particular disjunctive Datalog over binary signatures, regular path queries, and all ontology-mediated queries closed under homomorphisms.

There are three natural directions to extend our result.
First, we could study queries that are \emph{not} homomorphism-closed, e.g., with
disequalities or negation. We believe that this would require  different techniques as the problem is still open even when extending UCQs in this fashion (beyond the results of~\cite{FiOl16}).
Second, we could lift the arity restriction and work over signatures of arbitrary arity: we conjecture that
PQE is still \#P-hard for any unbounded \ucqinf in that case. Much of our proof techniques may adapt, but we do not know how to extend the definitions of dissociation, fine dissociation, and iteration. In particular, dissociation on a fact is difficult to adapt because incident facts over arbitrary arity signatures may intersect in complicated ways. We believe that the result could extend with a suitable dissociation notion and tight patterns with a more elaborate minimality criterion, but for now we leave the extension to arbitrary-arity signatures to future work.
Third, a natural question for future work is whether our hardness result on unbounded homomorphism-closed queries also applies to the \emph{(unweighted) model counting problem}, where all facts of the TID must have probability~$0.5$: the hardness of this problem has only been shown recently on the class of self-join free CQs~\cite{amarilli2020model} and on the so-called unsafe final type-I queries~\cite{KenigSuciu20}, but remains open as of this writing for unsafe UCQs in general.

\section*{Acknowledgments}
\noindent This work was supported by the UK EPSRC grant EP/R013667/1.

\bibliographystyle{alphaurl}
\bibliography{lib}

\newcommand{\etalchar}[1]{$^{#1}$}
\begin{thebibliography}{BUGD{\etalchar{+}}13}

\bibitem[ABS16]{Amarilli-PODS16}
Antoine Amarilli, Pierre Bourhis, and Pierre Senellart.
\newblock \href{https://arxiv.org/abs/1604.02761}{Tractable lineages on
  treelike instances: {L}imits and extensions}.
\newblock In {\em Proceedings of the 35th {ACM} {SIGMOD-SIGACT-SIGAI} Symposium
  on Principles of Database Systems (PODS-16)}. ACM, 2016.

\bibitem[AC20]{AC-ICDT20}
Antoine Amarilli and {\.I}smail~{\.I}lkan Ceylan.
\newblock \href{https://drops.dagstuhl.de/opus/volltexte/2020/11929/}{A
  dichotomy for homomorphism-closed queries on probabilistic graphs}.
\newblock In {\em Proceedings of the 23rd International Conference on Database
  Theory (ICDT-20)}. Schloss Dagstuhl--Leibniz-Zentrum fuer Informatik, 2020.

\bibitem[AHV95]{abiteboul1995foundations}
Serge Abiteboul, Richard Hull, and Victor Vianu.
\newblock {\em \href{http://webdam.inria.fr/Alice/pdfs/all.pdf}{Foundations of
  databases}}.
\newblock Addison-Wesley, 1995.

\bibitem[AK21]{amarilli2020model}
Antoine Amarilli and Benny Kimelfeld.
\newblock
  \href{https://drops.dagstuhl.de/opus/volltexte/2021/13725/pdf/LIPIcs-ICDT-2021-17.pdf}{Uniform
  reliability of self-join-free conjunctive queries}.
\newblock In {\em Proceedings of the 24th International Conference on Database
  Theory (ICDT-21)}. Schloss Dagstuhl--Leibniz-Zentrum fuer Informatik, 2021.

\bibitem[Bar13]{barcelo2013querying}
Pablo Barcel{\'o}.
\newblock
  \href{http://personales.dcc.uchile.cl/~pbarcelo/pods001i-barcelo.pdf}{Querying
  graph databases}.
\newblock In {\em Proceedings of the 32nd {ACM} {SIGMOD-SIGACT-SIGAI} Symposium
  on Principles of Database Systems (PODS-13)}, 2013.

\bibitem[BBLP18]{BBLP-18}
Pablo Barcel\'o, Gerald Berger, Carsten Lutz, and Andreas Pieris.
\newblock \href{https://www.ijcai.org/Proceedings/2018/0236.pdf}{First-order
  rewritability of frontier-guarded ontology-mediated queries}.
\newblock In {\em Proceedings of the 27th International Joint Conference on
  Artificial Intelligence ({IJCAI-18})}. IJCAI, 2018.

\bibitem[BCL17]{BCL-AAAI17}
Stefan Borgwardt, {\.I}smail~{\.I}lkan Ceylan, and Thomas Lukasiewicz.
\newblock
  \href{https://aaai.org/ocs/index.php/AAAI/AAAI17/paper/download/14365/13881}{Ontology-mediated
  queries for probabilistic databases}.
\newblock In {\em Proceedings of the 31th AAAI Conference on Artificial
  Intelligence ({AAAI-17})}. AAAI Press, 2017.

\bibitem[BCL19]{BCL19}
Stefan Borgwardt, {\.I}smail~{\.I}lkan Ceylan, and Thomas Lukasiewicz.
\newblock
  \href{https://lat.inf.tu-dresden.de/research/papers/2019/BoCL-AAAI19.pdf}{Ontology-mediated
  query answering over log-linear probabilistic data}.
\newblock In {\em Proceedings of the 33rd National Conference on Artificial
  Intelligence (AAAI-19)}. AAAI Press, 2019.

\bibitem[BCLW14]{BCLW14}
Meghyn Bienvenu, Balder~Ten Cate, Carsten Lutz, and Frank Wolter.
\newblock \href{https://arxiv.org/abs/1301.6479}{Ontology-based data access: A
  study through disjunctive {Datalog, CSP, and MMSNP}}.
\newblock {\em ACM Transactions on Database Systems (TODS)}, 39(4):33:1--33:44,
  2014.

\bibitem[BCM{\etalchar{+}}07]{BCM07}
Franz Baader, Diego Calvanese, Deborah~L McGuinness, Daniele Nardi, and
  Peter~F. Patel-Schneider, editors.
\newblock {\em
  {\href{https://www.researchgate.net/publication/230745455_The_Description_Logic_Handbook_Theory_Implementation_and_Applications}{The
  Description Logic handbook}}}.
\newblock Cambridge University Press, 2007.

\bibitem[BFR19]{barcelo2019boundedness}
Pablo Barcel{\'o}, Diego Figueira, and Miguel Romero.
\newblock
  \href{https://drops.dagstuhl.de/opus/volltexte/2019/10680/}{Boundedness of
  conjunctive regular path queries}.
\newblock In {\em Proceedings of the 46th International Colloquium on Automata,
  Languages, and Programming (ICALP-19)}, 2019.
\newblock \href {https://doi.org/10.4230/LIPIcs.ICALP.2019.104}
  {\path{doi:10.4230/LIPIcs.ICALP.2019.104}}.

\bibitem[BTCCB15]{benedikt2015complexity}
Michael Benedikt, Balder Ten~Cate, Thomas Colcombet, and Michael~Vanden Boom.
\newblock
  \href{https://web.comlab.ox.ac.uk/people/michael.vandenboom/papers/LICS15-gnfpb-long.pdf}{The
  complexity of boundedness for guarded logics}.
\newblock In {\em 2015 30th Annual ACM/IEEE Symposium on Logic in Computer
  Science}. IEEE, 2015.

\bibitem[BUGD{\etalchar{+}}13]{Bordes}
Antoine Bordes, Nicolas Usunier, Alberto Garcia-Dur\'{a}n, Jason Weston, and
  Oksana Yakhnenko.
\newblock
  \href{https://papers.nips.cc/paper/5071-translating-embeddings-for-modeling-multi-relational-data.pdf}{Translating
  embeddings for modeling multi-relational data}.
\newblock In {\em Proceedings of the 26th International Conference on Neural
  Information Processing Systems~(NIPS-13)}. Curran Associates Inc., 2013.

\bibitem[CDV21]{CDV-AIJ}
{\.I}smail~{\.I}lkan Ceylan, Adnan Darwiche, and Guy {Van den Broeck}.
\newblock \href{http://starai.cs.ucla.edu/papers/CeylanAIJ21.pdf}{Open-world
  probabilistic databases: {S}emantics, algorithms, complexity}.
\newblock {\em Artificial Intelligence}, 295, 2021.

\bibitem[{Cey}17]{Ceylan-17}
{\.I}smail~{\.I}lkan {Ceylan}.
\newblock {\em
  \href{https://lat.inf.tu-dresden.de/research/theses/2017/Ceylan-Diss-2017.pdf}{Query
  answering in probabilistic data and knowledge bases}}.
\newblock Doctoral thesis, TU Dresden, 2017.

\bibitem[CGK13]{CaGK-JAIR13}
Andrea Cal{\`i}, Georg Gottlob, and Michael Kifer.
\newblock \href{https://www.jair.org/index.php/jair/article/view/10837}{Taming
  the infinite chase: {Q}uery answering under expressive relational
  constraints}.
\newblock {\em JAIR}, 48:115--174, 2013.

\bibitem[CGL12]{CaGL-JWS12}
Andrea Cal{\`i}, Georg Gottlob, and Thomas Lukasiewicz.
\newblock \href{https://www.cs.ox.ac.uk/files/3608/rr1021.pdf}{A general
  {D}atalog-based framework for tractable query answering over ontologies}.
\newblock {\em J.\ Web Sem.}, 14:57--83, 2012.

\bibitem[Coo71]{Cook71}
Stephen~A. Cook.
\newblock \href{https://www.cs.toronto.edu/~sacook/homepage/1971.pdf}{The
  complexity of theorem-proving procedures}.
\newblock In {\em Proceedings of the 3rd Annual ACM Symposium on Theory of
  Computing (STOC-71)}, pages 151--158. ACM, 1971.

\bibitem[DGH{\etalchar{+}}14]{GoogleVault}
Xin Dong, Evgeniy Gabrilovich, Geremy Heitz, Wilko Horn, Ni~Lao, Kevin Murphy,
  Thomas Strohmann, Shaohua Sun, and Wei Zhang.
\newblock \href{https://www.cs.ubc.ca/~murphyk/Papers/kv-kdd14.pdf}{Knowledge
  Vault: A Web-scale approach to probabilistic knowledge fusion}.
\newblock In {\em Proceedings of the 20th ACM SIGKDD International Conference
  on Knowledge Discovery and Data Mining}. ACM, 2014.

\bibitem[DK08]{dawar2008datalog}
Anuj Dawar and Stephan Kreutzer.
\newblock
  \href{https://ora.ox.ac.uk/objects/uuid:73527af8-31b9-4108-a07d-058967ba97e4/download_file?safe_filename=08-icalp.pdf&file_format=application%2Fpdf&type_of_work=Conference+item}{On
  Datalog vs. LFP}.
\newblock In {\em International Colloquium on Automata, Languages, and
  Programming (ICALP-08)}. Springer, 2008.

\bibitem[DRDT{\etalchar{+}}15]{Verbeke15}
Luc De~Raedt, Anton Dries, Ingo Thon, Guy {Van den Broeck}, and Mathias
  Verbeke.
\newblock \href{https://www.ijcai.org/Proceedings/15/Papers/261.pdf}{Inducing
  probabilistic relational rules from probabilistic examples}.
\newblock In {\em Proceedings of the 24th International Joint Conference on
  Artificial Intelligence ({IJCAI-15})}. AAAI Press, 2015.

\bibitem[DS07]{DaSu07}
Nilesh Dalvi and Dan Suciu.
\newblock
  \href{https://homes.cs.washington.edu/~suciu/vldbj-probdb.pdf}{Efficient
  query evaluation on probabilistic databases}.
\newblock {\em The VLDB Journal}, 16(4):523--544, 2007.

\bibitem[DS12]{dalvi2012dichotomy}
Nilesh Dalvi and Dan Suciu.
\newblock \href{https://homes.cs.washington.edu/~suciu/jacm-dichotomy.pdf}{The
  dichotomy of probabilistic inference for unions of conjunctive queries}.
\newblock {\em J.\ {ACM}}, 59(6), 2012.

\bibitem[EO{\v{S}}{\etalchar{+}}12]{EOS+-AAAI12}
Thomas Eiter, Magdalena Ortiz, Mantas {\v{S}}imkus, Trung-Kien Tran, and Guohui
  Xiao.
\newblock
  \href{https://pdfs.semanticscholar.org/9716/6f9b284d8e7da46bd6cd43b43a7af14b773c.pdf}{Query
  rewriting for {H}orn-$\mathcal{SHIQ}$ plus rules}.
\newblock In {\em Proceedings of the 26th AAAI Conference on Artificial
  Intelligence ({AAAI-12})}, 2012.

\bibitem[FO16]{FiOl16}
Robert Fink and Dan Olteanu.
\newblock
  \href{http://www.cs.ox.ac.uk/people/Dan.Olteanu/papers/fo-tods16.pdf}{Dichotomies
  for queries with negation in probabilistic databases}.
\newblock {\em ACM Transactions on Database Systems (TODS)}, 41(1):4:1--4:47,
  2016.

\bibitem[GMSV93]{Gaifman93}
Haim Gaifman, Harry Mairson, Yehoshua Sagiv, and Moshe~Y. Vardi.
\newblock \href{https://dl.acm.org/doi/abs/10.1145/174130.174142}{Undecidable
  optimization problems for database logic programs}.
\newblock {\em J. ACM}, 40(3):683--713, July 1993.

\bibitem[GS12]{GoSc-KR12}
Georg Gottlob and Thomas Schwentick.
\newblock \href{http://ceur-ws.org/Vol-745/paper_21.pdf}{Rewriting ontological
  queries into small nonrecursive {D}atalog programs}.
\newblock In {\em Proceedings of the 13th International Conference on
  Principles of Knowledge Representation and Reasoning ({KR-12})}, 2012.

\bibitem[HKMV95]{Hillebrand95}
Gerd~G Hillebrand, Paris~C Kanellakis, Harry~G Mairson, and Moshe~Y Vardi.
\newblock
  \href{https://www.sciencedirect.com/science/article/pii/074310669500051K}{Undecidable
  boundedness problems for {D}atalog programs}.
\newblock {\em The Journal of Logic Programming}, 25(2):163 -- 190, 1995.

\bibitem[HSBW13]{Yago2}
Johannes Hoffart, Fabian~M. Suchanek, Klaus Berberich, and Gerhard Weikum.
\newblock
  \href{https://www.sciencedirect.com/science/article/pii/S0004370212000719}{{YAGO2}:
  {A} spatially and temporally enhanced knowledge base from {W}ikipedia}.
\newblock {\em Artificial Intelligence}, 194:28--61, 2013.

\bibitem[JL12]{JL12}
Jean~Christoph Jung and Carsten Lutz.
\newblock
  \href{https://iswc2012.semanticweb.org/sites/default/files/76490177.pdf}{Ontology-based
  access to probabilistic data with {OWL} {QL}}.
\newblock In {\em Proceedings of the 11th International Conference on The
  Semantic Web - Volume Part I (ISWC-12)}, pages 182--197. Springer-Verlag,
  2012.

\bibitem[JL20]{JL20}
Jean~Christoph Jung and Carsten Lutz.
\newblock
  \href{http://www.informatik.uni-bremen.de/tdki/research/papers/2012/JuLu-2012.erratum.pdf}{Erratum
  for `Ontology-based access to probabilistic data with {OWL-QL}'}, 2020.

\bibitem[Jun14]{jung2014reasoning}
Jean~Christoph Jung.
\newblock {\em
  \href{http://www.informatik.uni-bremen.de/~jeanjung/pub/phdjung.pdf}{Reasoning
  in many dimensions: uncertainty and products of modal logics}}.
\newblock PhD thesis, University of Bremen, 2014.

\bibitem[KS21]{KenigSuciu20}
Batya Kenig and Dan Suciu.
\newblock \href{https://arxiv.org/abs/2008.00896}{A dichotomy for the
  generalized model counting problem for unions of conjunctive queries}.
\newblock In {\em Proceedings of the 40th {ACM} {SIGMOD-SIGACT-SIGAI} Symposium
  on Principles of Database Systems (PODS-21)}, 2021.

\bibitem[MBSJ09]{Mintz09}
Mike Mintz, Steven Bills, Rion Snow, and Dan Jurafsky.
\newblock \href{https://www.aclweb.org/anthology/P09-1113.pdf}{Distant
  supervision for relation extraction without labeled data}.
\newblock In {\em Proceedings of the Joint Conference of the 47th Annual
  Meeting of the ACL and the 4th International Joint Conference on Natural
  Language Processing of the {AFNLP}}, pages 1003--1011. ACL, 2009.

\bibitem[MCH{\etalchar{+}}15]{NELL}
T.~Mitchell, W.~Cohen, E.~Hruschka, P.~Talukdar, J.~Betteridge, A.~Carlson,
  B.~Dalvi, M.~Gardner, B.~Kisiel, J.~Krishnamurthy, N.~Lao, K.~Mazaitis,
  T.~Mohamed, N.~Nakashole, E.~Platanios, A.~Ritter, M.~Samadi, B.~Settles,
  R.~Wang, D.~Wijaya, A.~Gupta, X.~Chen, A.~Saparov, M.~Greaves, and
  J.~Welling.
\newblock
  \href{https://www.aaai.org/ocs/index.php/AAAI/AAAI15/paper/view/10049/9557}{Never-ending
  learning}.
\newblock In {\em Proceedings of the 29th AAAI Conference on Artificial
  Intelligence ({AAAI-15})}, 2015.

\bibitem[OH08]{OlHu08}
Dan Olteanu and Jiewen Huang.
\newblock
  \href{https://www.cs.ox.ac.uk/people/dan.olteanu/papers/oh-sum08.pdf}{Using
  {OBDD}s for efficient query evaluation on probabilistic databases}.
\newblock In {\em Proceedings of the 2nd International Conference on Scalable
  Uncertainty Management ({SUM-08})}, volume 5291 of {\em LNCS}, 2008.

\bibitem[OH09]{OlHu09}
Dan Olteanu and Jiewen Huang.
\newblock
  \href{https://www.cs.ox.ac.uk/people/dan.olteanu/papers/oh-sigmod09.pdf}{Secondary-storage
  confidence computation for conjunctive queries with inequalities}.
\newblock In {\em Proceedings of the 2009 ACM SIGMOD International Conference
  on Management of Data}, pages 389--402. ACM, 2009.

\bibitem[PB83]{provan1983complexity}
J.~Scott Provan and Michael~O. Ball.
\newblock The complexity of counting cuts and of computing the probability that
  a graph is connected.
\newblock {\em SIAM Journal on Computing}, 12(4), 1983.

\bibitem[RS09]{ReSu09}
Christopher R{\'e} and Dan Suciu.
\newblock
  \href{https://www.cs.stanford.edu/people/chrismre/papers/journal_having_queries.pdf}{The
  trichotomy of {HAVING} queries on a probabilistic database}.
\newblock {\em The VLDB Journal}, 18(5):1091--1116, 2009.

\bibitem[SORK11]{Suciu-PDBs}
Dan Suciu, Dan Olteanu, Christopher R{\'{e}}, and Christoph Koch.
\newblock {\em Probabilistic Databases}, volume~3.
\newblock Morgan-Claypool, 2011.

\bibitem[Val79]{Valiant79}
Leslie~Gabriel Valiant.
\newblock
  \href{https://www.sciencedirect.com/science/article/pii/0304397579900446}{The
  complexity of computing the permanent}.
\newblock {\em TCS}, 8(2):189--201, 1979.

\end{thebibliography}

\end{document}